\newtheorem{proposition}{Proposition}
\newtheorem{theorem}{Theorem}
\newtheorem{remark}{Remark}
\newtheorem{lemma}{Lemma}
\newtheorem{definition}{Definition}
\newtheorem{example}{Example}
\newtheorem{corollary}{Corollary}
\newcommand{\Z}{\mathbb{Z}}
\newcommand{\N}{\mathbb{N}}
\newcommand{\Q}{\mathbb{Q}}
\newcommand{\QQ}{Q_{(n,e)}^{k}}
\newcommand{\QB}{Q_{B}^{k}}
\newcommand{\R}{\mathbb{R}}
\newcommand{\lp}[1]{\mbox{LPL}(n,#1)}
\newcommand{\simb}[2]{\left< _{#2}^{#1}\right>}
\newcommand{\doblesum}[2]{\sum_{\substack{#1 \\ #2 }}}
\begin{document}

\title{On the non-existence of linear perfect Lee codes: The Zhang-Ge condition and a new polynomial criterion}
\author{Claudio~Qureshi}
\author{
    \IEEEauthorblockN{Claudio~Qureshi\IEEEauthorrefmark{1}}\\
    {\small
    \IEEEauthorblockA{\IEEEauthorrefmark{1}Institute of Mathematics, Statistics and Scientific Computing, University of Campinas, Brazil. 
    \\cqureshi@ime.unicamp.br} \\
}}

\maketitle

\vspace{-1.4cm}
\begin{abstract}
The Golomb-Welch conjecture (1968) states that there are no $e$-perfect Lee codes in $\Z^n$ for $n\geq 3$ and $e\geq 2$. This conjecture remains open even for linear codes. A recent result of Zhang and Ge establishes the non-existence of linear $e$-perfect Lee codes in $\Z^n$ for infinitely many dimensions $n$, for $e=3$ and $4$. In this paper we extend this result in two ways. First, using the non-existence criterion of Zhang and Ge together with a generalized version of Lucas' theorem we extend the above result for almost all $e$ (i.e. a subset of positive integers with density $1$). Namely, if $e$ contains a digit $1$ in its base-$3$ representation which is not in the unit place (e.g. $e=3,4$) there are no linear $e$-perfect Lee codes in $\Z^n$ for infinitely many dimensions $n$. Next, based on a family of polynomials (the $Q$-polynomials), we present a new criterion for the non-existence of certain lattice tilings. This criterion depends on a prime $p$ and a tile $B$. For $p=3$ and $B$ being a Lee ball we recover the criterion of Zhang and Ge.

\end{abstract}

\section{Introduction and Preliminaries}\label{SectionIntro}

Let $\Z$ and $\Z_q$ denote the ring of integer numbers and integers modulo $q$, respectively. For any two words $x=(x_1,\ldots,x_n)$ and $y=(y_1,\ldots,y_n)$, the Lee metric (also known as Manhattan or $\ell_1$ metric) is given by $$d(x,y)=\left\{ \begin{array}{ll}
\sum_{i=1}^n \min\left\{|x_i-y_i|, q-|x_i-y_i| \right\} & \textrm{for }x,y \in \Z_q^n, \\
\sum_{i=1}^n |x_i-y_i| & \textrm{for }x,y \in \Z^n.
\end{array} \right.$$

An $e$-perfect Lee code is a subset $C\subseteq \Z_q^n$ (or $C\subseteq \Z^n$) such that for each $x\in \Z_q^n$ (or $x\in \Z^n$) there is a unique $c=c(x)\in C$ satisfying $d(x,c)\leq e$. If in addition $C$ is an additive subgroup of $\Z_q^n$ (or $\Z^n$) we say that $C$ is a linear $e$-perfect Lee code. When $q\geq 2e+1$, the natural projection $\pi: \Z^n \rightarrow \Z_q^n$ (taking modulo $q$ in each coordinate) establishes a correspondence between $e$-perfect Lee codes in $\Z^n$ and $e$-perfect Lee codes in $\Z_{q}^n$. This correspondence preserves linearity. In this paper we denote by $B^n(e)$ the $n$-dimensional Lee ball of radius $e$ centered at the origin, that is, $B^n(e)=\{x\in \Z^n : d(x,0)\leq e\}$; and set $k(n,e) = \# B^n(e)$. The set of all \emph{linear} $e$-perfect Lee codes in $\Z^n$ is denoted by $\lp{e}$.\\

The Lee metric was introduced for transmission of signals over noisy channels in \cite{Lee58} for codes with alphabet $\Z_p$ with $p$ a prime number, then it was extended to alphabets $\Z_q$ ($q\in \Z^+$) and $\Z$ in \cite{GW68}, \cite{GW70}. One of the most central question on codes in the Lee metric is regarding the existence of such codes. In \cite{GW70}, Golomb and Welch showed that there are $e$-perfect Lee codes $C\subseteq \Z^2$ for each value of $e\geq 1$, and there are $1$-perfect Lee codes $C\subseteq \Z^n$ for each value of $n\geq 2$. They also proved that for fixed dimension $n\geq 3$, there is a radius $e_n>0$ ($e_n$ unspecified) such that there are no $e$-perfect Lee codes $C\subseteq \Z^n$ for $e\geq e_n$ and conjectured that it is possible to take $e_n=2$ (i.e. no $e$-perfect Lee codes in $\Z^n$ exist for $n\geq 3$ and $e\geq 2$). This conjecture has been the main motive power behind the research in the area. A recent survey of papers on the Golomb-Welch conjecture is provided in \cite{HK18}. Next we mention some of them. Explicit bounds for $e_n$ for periodic perfect Lee codes were obtained by  K. A. Post \cite{Post75}, namely $e_n=n-1$ for $3\leq n \leq 5$ and $e_n= \frac{\sqrt{2}}{2} n - \frac{1}{4}(3\sqrt{2}-2)$ for $n\geq 6$; and by P. Lepist{\"o} \cite{Lepisto81} who proved that an $e$-perfect Lee code must satisfy $n\geq (e+2)^2 / 2.1$ if $e\geq 285$. Recently, P. Horak and D. Kim \cite{HK18} proved that the above results hold in general, that is, without the restriction of periodicity. The Golomb-Welch conjecture was also proved for dimensions $3\leq n \leq 5$ and radii $e\geq 2$ \cite{GMP98,Spacapan07,Horak09a} and for $(n,e)=(6,2)$ \cite{Horak09b}. Recently, several papers have focus on the study of the Golomb-Welch conjecture for fixed radius $e$ and large dimensions $n$. The case $e=2$ has been treated in \cite{HG14, Kim17} and \cite{QCC18}. In the linear case, it is proved that $\lp{2}=\emptyset$ for infinitely many dimensions $n$. A new criterion for the non-existence of perfect Lee codes was presented by T. Zhang and G. Ge in \cite{ZG17} and it was used for the authors to obtain non-existence results for radii $e=3$ and $e=4$. This criterion states that if a pair $(n,e)$ of positive integers verifies the congruence system (\ref{EqZhangGesystem}) and $k(n,e)$ is squarefree, then $\lp{e}=\emptyset$ (Theorem \ref{ZhanGeTheorem}). In this paper, we say that a pair $(n,e)$ satisfies the \emph{Zhang-Ge condition} when it is a solution of the system (\ref{EqZhangGesystem}). Then, we define the Zhang-Ge set associated with $e$, denoted by $\mbox{ZG}(e)$, as the set of positive integers $n$ such that $(n,e)$ satisfies the Zhang-Ge condition. Determining the cardinality of the Zhang-Ge set $\mbox{ZG}(e)$ brings us information about $\lp{e}$.\\

This paper is organized as follows. In Section \ref{SectionZhangGeTheoremWithoutRestriction} we present an argument showing that the condition $k(n,e)$ be squarefree can be omitted in the Zhang-Ge criterion, that is, $n\in \mbox{ZG}(e)$ is a sufficient condition to guarantee $\lp{e}=\emptyset$ (Theorem \ref{ThMain1}). This result implies that $\lp{3}=\emptyset$ and $\lp{4}=\emptyset$ for infinitely many dimensions $n$ (Corollaries \ref{CorE3E4} and \ref{CorZG3ZG4}). Then, we proved that if the base-$3$ representation of $e$ contains either no digit $1$ or a unique digit $1$ which is in the unit place, then $\mbox{ZG}(e)=\emptyset$ (Propositions \ref{PropZGempty} and \ref{PropZGempty2}); and this is, in fact, the only cases where this happens (Theorem \ref{ThClassificationOfZG}). To finish Section \ref{SectionZhangGeTheoremWithoutRestriction} we derive some congruences for $k(n,e)$ and other related quantity $p(n,e)$ (Proposition \ref{PropPeriodicityForkANDp}) which are used in the next sections and prove that (under certain conditions) if the Zhang-Ge set $\mbox{ZG}(e)$ is non-empty, it contains infinitely many elements (Corollary \ref{CorEmptyOrInfinite}). In Section \ref{SectionMain} we obtain a classification of the Zhang-Ge sets (Theorem \ref{ThClassificationOfZG}) and prove one of the main result of this paper: there is a density-$1$ subset $E\subseteq \Z^{+}$ (containing $e=3$ and $e=4$) such that for every $e\in E$ we have that $\lp{e}=\emptyset$ for infinitely many values of $n$ (Theorem \ref{ThMain2} and Proposition \ref{PropEhasDensity1}). In Section \ref{SectionBeyond}, for each $k\geq 1$, we associate with each finite subset $B\subseteq \Z^n$, an homogeneous polynomial of degree $2k$: the $Q$-polynomial of $B$. Some explicit formulas for the $Q$-polynomials as a linear combination of monomial symmetric functions (Proposition \ref{PropFormulaForQB-monomial}) and linear combination of power sum symmetric functions (Proposition \ref{PropFormulaFinalForQB}) are derived. The later is used to obtain a general criterion for the non-existence of certain lattice tilings (Theorem \ref{TheoremMainGeneralized}). This criterion can be applied to prove not only non-existence results in the Lee metric but also to other types of metrics such as the $\ell_p$-metrics which are also of interest \cite{CJSC16} (see Example \ref{Example-Euclidean} for an application to the Euclidean metric). A specialization of this criterion to the perfect Lee codes is given in Proposition \ref{Prop-Specialization-for-Lee}: the $p$-condition of non-existence, where $p$ is an arbitrary odd prime. The $3$-condition of non-existence is equivalent to the Zhang-Ge condition which was studied in the first sections of this paper. Other choices of $p$ provide new non-existence criteria for perfect Lee codes. For instance, using the $5$-condition of non-existence we can extend Theorem \ref{ThMain2} to other radii such as $e=2, 6$ and $7$ (i.e. for these values of $e$, no linear perfect $e$-error-correcting Lee codes exist for infinitely many dimensions $n$).


%
%
%




\section{The Zhang-Ge sets and some congruences for $k(n,e)$ and $p(n,e)$}\label{SectionZhangGeTheoremWithoutRestriction}

\subsection{An extension of the Zhang-Ge theorem without the squarefree restriction}

Let $B^n(e)=\{x\in \Z^n: d(x,0)\leq e\}$ and $k(n,e)=\# B^n(e)$. It is well known \cite{GW70} that $k(n,e)=\sum_{i=0}^{\min\{n,e\}} 2^i \binom{n}{i}\binom{e}{i}$. Since $\binom{n}{i}=0$ for $i>n$ and $\binom{e}{i}=0$ for $i>e$ we can also write 
\begin{equation}\label{EqForkne1}
k(n,e)=\sum_{i=0}^{N} 2^i \binom{n}{i}\binom{e}{i}
\end{equation}
for any $N\geq \min\{n,e\}$ (in this paper we usually choose $N=3^m-1$ for large enough $m$). Let $p(n,e) = \sum_{i=1}^{e}2^i \sum_{j=1}^{e-i+1}j^2 \binom{e-j}{i-1}\binom{n-1}{i-1}$. Rearranging the sum and using that $\binom{a}{b}=0$ for $b>a$ we obtain the expression 
\begin{equation}\label{EqFormulaForp}
p(n,e) = \sum_{i=0}^{e}2i^2 k(n-1,e-i)
\end{equation} 
for every $n,e \geq 1$. In other words $p(n,e)$ is the coefficient of $x^e$ of the convolution of the generating functions $f(x)=\sum_{i=0}^{\infty} 2i^2 x^i$ and $g(x)=\sum_{i=0}^{\infty} k(n-1,i)x^i$. This observation is useful in order to obtain a generalization of the Zhang-Ge theorem (Theorem \ref{ZhanGeTheorem}) in Section \ref{SectionBeyond}.

We say that a pair of positive integers $(n,e)$ satisfies the \emph{Zhang-Ge condition} if it verifies the following system 
\begin{equation}\label{EqZhangGesystem}
\left\{ \begin{array}{l}
k(n,e)\equiv 3 \textrm{ or } 6 \pmod{9}, \\ p(n,e)\equiv 0 \pmod{3}.
\end{array} \right.
\end{equation}

\begin{definition}
The Zhang-Ge set of $e\geq 1$ is the set $$\mbox{ZG}(e)=\{n\geq 1: (n,e)\textrm{ satisfies the Zhang-Ge condition}\}.$$ We also denote the set of all linear $e$-perfect Lee codes $C\subseteq \Z^n$ by $\mbox{LPL}(n,e)$.
\end{definition}

The following result of T. Zhang and G. Ge establishes a necessary condition for the non-existence of perfect Lee codes.

\begin{theorem}[{\cite[Theorem~7]{ZG17}}]\label{ZhanGeTheorem}
If $n\in \mbox{ZG}(e)$ and $k(n,e)$ is squarefree, then $\mbox{LPL}(n,e)=\emptyset$
\end{theorem}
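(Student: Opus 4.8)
The plan is to realize a linear code as a lattice tiling and then extract an arithmetic obstruction by collapsing everything onto a single $\Z_3$-valued character. First I would recall that $C\in\lp{e}$ is equivalent to $C$ being a sublattice of $\Z^n$ for which the Lee ball $B^n(e)$ is a complete system of coset representatives; consequently $G:=\Z^n/C$ is a finite abelian group of order $k(n,e)$ and the quotient map $\phi$ restricts to a bijection $B^n(e)\to G$. Since $k(n,e)$ is squarefree, $G$ is cyclic, and because $k(n,e)\equiv 3$ or $6\pmod 9$ forces $3\mid k(n,e)$, there is a surjective homomorphism $\psi\colon G\to\Z_3$. Setting $\chi=\psi\circ\phi\colon\Z^n\to\Z_3$ — a surjective homomorphism determined by the residues $a_i=\chi(\mathbf{e}_i)\in\Z_3$, necessarily not all zero, so $t:=\#\{i:a_i\neq 0\}\geq 1$ — the bijection $B^n(e)\to G$ together with the fact that every fibre of $\psi$ has exactly $k(n,e)/3$ elements yields the key uniformity statement: for each $r\in\Z_3$ the number of $x\in B^n(e)$ with $\sum_i a_i x_i\equiv r\pmod 3$ is exactly $k(n,e)/3$.

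Next I would encode this uniformity as the vanishing of a character sum. With $\omega=e^{2\pi i/3}$, uniformity is equivalent to $S:=\sum_{x\in B^n(e)}\omega^{\chi(x)}=0$. Because the Lee ball factors coordinatewise, $S$ admits a univariate generating-function description: a coordinate with $a_i=0$ contributes the factor $\frac{1+q}{1-q}$, while a coordinate with $a_i\neq 0$ contributes $\sum_{m}\omega^{a_i m}q^{|m|}=\frac{(1-q)(1+q)}{1+q+q^2}$ (which depends only on $a_i\neq 0$, not on its precise value). Thus $S=[q^{\le e}]\,(\frac{1+q}{1-q})^{n}(\frac{(1-q)(1+q)}{1+q+q^2})^{t}$, where $[q^{\le e}]$ denotes the sum of the first $e+1$ Taylor coefficients. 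The decisive algebraic identity is $\frac{(1-q)(1+q)}{1+q+q^2}=\frac{1+q}{1-q}(1-\frac{3q}{1+q+q^2})$, which isolates a factor of $3$ in the ``defect'' caused by each nonzero coordinate.

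Finally I would carry out a $3$-adic expansion. Since $(1-\frac{3q}{1+q+q^2})^{t}\equiv 1-\frac{3tq}{1+q+q^2}\pmod 9$, the sum splits as $S\equiv [q^{\le e}](\frac{1+q}{1-q})^{n}-3t\,[q^{\le e}]\frac{q}{1+q+q^2}(\frac{1+q}{1-q})^{n}\pmod 9$. The first term is exactly $k(n,e)$. For the second, using $(1-q)(1+q+q^2)=1-q^3$ and the Frobenius congruence $1-q^3\equiv(1-q)^3\pmod 3$, the surviving coefficient is governed by the generating function $\frac{2x(1+x)^n}{(1-x)^{n+3}}$, whose $x^e$-coefficient is $p(n,e)$ by equation~\eqref{EqFormulaForp}; this gives $S\equiv k(n,e)+3t\,p(n,e)\pmod 9$ up to a unit. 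Invoking the second Zhang-Ge congruence $p(n,e)\equiv 0\pmod 3$ annihilates the correction term, leaving $S\equiv k(n,e)\equiv 3$ or $6\pmod 9$, hence $S\neq 0$, contradicting $S=0$. I expect the main obstacle to be precisely this last step: performing the modulo-$9$ expansion of the coefficient extraction rigorously and matching the second-order term with the generating function of $p(n,e)$, where the partial-sum/convolution bookkeeping and the Frobenius reductions $1+q+q^2\equiv(1-q)^2\pmod 3$ and $1-q^3\equiv(1-q)^3\pmod 3$ are delicate.
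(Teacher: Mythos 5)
Your proof is correct, modulo one transcription slip, but it takes a genuinely different route from the paper's. The paper in fact proves the stronger Theorem \ref{ThMain1} (the squarefree hypothesis is removed): after the same opening moves --- a tiling gives a group $G$ of order $k(n,e)$ with $\phi|_{B^n(e)}$ bijective, a surjection $G\to\Z_3$ (obtained from the structure theorem, Lemma \ref{LemmaEasy}, instead of your squarefree-implies-cyclic step), and uniformity of $f=\psi\circ\phi$ on the ball --- it computes the \emph{quadratic} moment $\sum_{b\in B^n(e)}f(b)^2$ modulo $3$ in two ways: directly it is $m(0^2+1^2+2^2)\equiv -m\not\equiv 0\pmod 3$, while the identity $\sum_{b\in B^n(e)}\langle b,x\rangle^{2}=p(n,e)\sum_i x_i^2$ (quoted from \cite{ZG17} and generalized in Section \ref{SectionBeyond}) forces it to vanish modulo $3$. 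You instead evaluate the full cubic character sum $S=\sum_{b}\omega^{f(b)}$ by exploiting the coordinatewise factorization of the Lee ball's generating function and expanding $3$-adically, arriving at $S\equiv k(n,e)+3t\cdot(\mathrm{unit})\cdot p(n,e)\pmod 9$. I checked your key identities --- the local factor $(1-q)(1+q)/(1+q+q^2)$ for a coordinate with $a_i\neq 0$, the defect identity, the reduction $1/(1+q+q^2)\equiv 1/(1-q)^2\pmod 3$, and the identification $\sum_e p(n,e)x^e=2x(1+x)^n/(1-x)^{n+3}$ --- and they all hold, so the contradiction $0=S\equiv k(n,e)\equiv 3\text{ or }6\pmod 9$ is sound. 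The trade-off: your argument is self-contained (it never invokes the $Q$-polynomial identity of \cite{ZG17}) and makes transparent how the two Zhang--Ge congruences act in tandem inside a single modulo-$9$ relation; the paper's argument is shorter, needs only arithmetic modulo $3$, and, because the moment identity uses no product structure of the $\ell_1$ ball, it extends to arbitrary regular sets and arbitrary odd primes (Theorem \ref{TheoremMainGeneralized}), which your factorization-based computation does not. Two minor points: in your displayed formula for $S$ the first factor should be $\bigl(\tfrac{1+q}{1-q}\bigr)^{n-t}$ rather than $\bigl(\tfrac{1+q}{1-q}\bigr)^{n}$ (after applying the defect identity to the $t$ nonzero coordinates one gets exactly the exponent $n$ you use in the next line, so the rest of your computation is unaffected); and your only use of squarefreeness is to obtain the surjection $G\to\Z_3$, which the structure theorem supplies without it --- precisely how the paper drops that hypothesis.
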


\begin{remark} In \cite{ZG17} the authors impose the extra condition $n\geq e$ in the statement of the theorem above. However this condition is not used in their proof and the result is valid also for $n<e$. A possible reason is because, in some corollaries of this theorem, they used Equation (\ref{EqForkne1}) for $k(n,e)$ with $N=e$. However, as mentioned above, the formula $k(n,e)=\sum_{i=0}^{e} 2^i \binom{n}{i}\binom{e}{i}$ holds also for $e>n$.
\end{remark}

For the case $e=3$ and $e=4$, we have the following corollary:

\begin{corollary}[{\cite[Corollaries~8 and 9]{ZG17}}]\label{ZhanGenCorollaries}
If $k(n,3)$ is squarefree and $n\equiv 12 \textrm{ or } 21 \pmod{27}$ then $\lp{3}=\emptyset$. If $k(n,4)$ is squarefree and $n\equiv 3,5,21 \textrm{ or } 23 \pmod{27}$ then $\lp{4}=\emptyset$.
\end{corollary}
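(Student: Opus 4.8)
The plan is to reduce both assertions to a single statement: for $e=3$ and $e=4$, the residue classes displayed modulo $27$ are exactly the $n$ for which $(n,e)$ satisfies the Zhang-Ge condition, i.e. $n\in\mbox{ZG}(e)$. Once this is in hand, the hypothesis that $k(n,e)$ is squarefree together with Theorem~\ref{ZhanGeTheorem} immediately gives $\lp{3}=\emptyset$, respectively $\lp{4}=\emptyset$. So the entire content of the corollary is the determination of $\mbox{ZG}(3)$ and $\mbox{ZG}(4)$ modulo $27$.

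First I would show that membership in $\mbox{ZG}(e)$ depends only on $n\bmod 27$. Using (\ref{EqForkne1}) write $k(n,e)=\sum_{i=0}^{e}2^i\binom{e}{i}\binom{n}{i}$, so that $k(n,e)\bmod 9$ is an integer combination of the functions $n\mapsto\binom{n}{i}\bmod 9$ with $0\le i\le 4$. The Vandermonde identity $\binom{n+9}{i}=\sum_{j=0}^{i}\binom{9}{j}\binom{n}{i-j}$, after reducing the coefficients modulo $9$ (only $j=0$ and $j=3$ survive, the latter as $\binom{9}{3}\equiv 3$), yields $\binom{n+9}{i}\equiv\binom{n}{i}+3\binom{n}{i-3}\pmod 9$ for $0\le i\le 4$; since $\binom{n}{i-3}$ has degree at most $1$, iterating this three times makes the correction vanish, so each $\binom{n}{i}\bmod 9$, and hence $k(n,e)\bmod 9$, is periodic in $n$ with period dividing $27$. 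The same computation, or directly Lucas' theorem modulo $3$, shows $p(n,e)\bmod 3$ is periodic with period dividing $3$. Consequently the system (\ref{EqZhangGesystem}) depends only on $n\bmod 27$, and it suffices to test the $27$ residue classes.

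The remaining step is this finite verification, carried out separately for $e=3$ and $e=4$ from the explicit polynomials $k(n,e)$ and, via (\ref{EqFormulaForp}), $p(n,e)$. For $e=3$ one finds $p(n,3)\equiv n^2\pmod 3$, so the second congruence of (\ref{EqZhangGesystem}) forces $3\mid n$; evaluating $k(n,3)\bmod 9$ on the nine classes with $3\mid n$ then leaves exactly $n\equiv 12$ (where $k\equiv 6$) and $n\equiv 21$ (where $k\equiv 3$) modulo $27$. The case $e=4$ is entirely analogous, only with more classes surviving, and produces $n\equiv 3,5,21,23\pmod{27}$.

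I expect no genuine obstacle here: the argument is a bounded computation once periodicity modulo $27$ is established, and the only point requiring care is the book-keeping that pins down the period as exactly $27$ rather than a smaller or larger power of $3$, which is precisely what makes the modulus $27$ appear in the statement. With $\mbox{ZG}(3)$ and $\mbox{ZG}(4)$ so determined, both parts of the corollary follow from Theorem~\ref{ZhanGeTheorem}.
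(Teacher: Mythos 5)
Your overall route is the intended one: the paper itself offers no proof of this corollary (it is quoted from \cite{ZG17}), and its substance is exactly what you describe---verify that the listed residue classes modulo $27$ satisfy the Zhang-Ge condition (\ref{EqZhangGesystem}) and then invoke Theorem \ref{ZhanGeTheorem}. (Note that only the inclusion ``listed classes lie in $\mbox{ZG}(e)$'' is needed; your stronger exactness claim is extra.) Your Vandermonde argument that $k(n,e)\bmod 9$ is periodic in $n$ with period dividing $27$ for $e\leq 4$ is correct, and your $e=3$ computation also checks out: indeed $p(n,3)=4n^2+12n+12\equiv n^2\pmod{3}$, and among the nine classes with $3\mid n$ only $n\equiv 12$ (where $k\equiv 6$) and $n\equiv 21$ (where $k\equiv 3$) survive modulo $27$.

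There is, however, a genuine error in one step: the claim that ``$p(n,e)\bmod 3$ is periodic with period dividing $3$'' is false for $e=4$, and this breaks the ``entirely analogous'' treatment of the second case. By (\ref{EqFormulaForp}), $p(n,4)=2k(n-1,3)+8k(n-1,2)+18k(n-1,1)+32k(n-1,0)$, and $k(n-1,3)\bmod 3$ depends (by Lucas' theorem, or Lemma \ref{LemmakIsMultiplicativeMod3}) on the \emph{second} base-$3$ digit of $n-1$; concretely $p(1,4)\equiv 0$, $p(4,4)\equiv 1$, $p(7,4)\equiv 2 \pmod{3}$, so the true period is $9$, not $3$. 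Used literally, your claim would give $p(13,4)\equiv p(1,4)\equiv 0\pmod{3}$; since also $k(13,4)\equiv 6\pmod{9}$, the class $n\equiv 13\pmod{27}$ would then wrongly be counted as satisfying the Zhang-Ge condition, contradicting your own assertion that only $3,5,21,23$ survive (in fact $p(13,4)\equiv 1\pmod{3}$). The repair is immediate: $p(n,4)\bmod 3$ is determined by $n\bmod 9$ (for instance because $3p(n,4)$ is an integer polynomial in $n$), and since $9\mid 27$ your reduction to a finite check over the $27$ residue classes remains valid; carrying out that check without the preliminary cut modulo $3$ yields exactly $n\equiv 3,5,21,23\pmod{27}$, after which Theorem \ref{ZhanGeTheorem} finishes the proof as you say.
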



Next we present an argument to show that the squarefree condition in Theorem \ref{ZhanGeTheorem} and Corollary \ref{ZhanGenCorollaries} can be skipped. We start by  stating one of the main tool to prove the non-existence of linear perfect Lee codes. 

\begin{theorem}[{\cite[Theorem~6]{HA12a}}]\label{ThHorakCriterion}
Let $B$ be a subset of $\Z^n$. Then, there is a lattice
tiling of $\Z^n$ by $B$ if and only if there is an abelian group
$G$ of order $|B|$ and a homomorphism $\phi:\Z^n \rightarrow G$ such that the restriction of $\phi$ to $B$ is a bijection.
\end{theorem}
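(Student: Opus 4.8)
The plan is to route both directions through the quotient lattice $\Z^n/L$, using the elementary but crucial reformulation that a lattice tiling of $\Z^n$ by $B$ is the same thing as saying that $B$ is a complete set of coset representatives (a transversal) for some finite-index sublattice $L$ of $\Z^n$. Indeed, the translates $\{B+v : v\in L\}$ partition $\Z^n$ precisely when every coset $x+L$ contains exactly one element of $B$. I would take this equivalence as the bridge between tilings and the group-theoretic data $(G,\phi)$ in the statement. Note that since $|B|=|G|$ is finite, any lattice realizing a tiling automatically has finite index in $\Z^n$, hence full rank.

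For the forward implication, I would assume a lattice tiling by $B$ with translation lattice $L$. Since $B$ is a transversal of $L$, the number of cosets is $|B|$, so $[\Z^n:L]=|B|$; I then set $G=\Z^n/L$, an abelian group of order $|B|$, and let $\phi:\Z^n\to G$ be the canonical projection. The statement that each coset meets $B$ in exactly one point is literally the statement that $\phi|_B$ is a bijection onto $G$, which is what is required.

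For the converse, given an abelian group $G$ of order $|B|$ and a homomorphism $\phi:\Z^n\to G$ with $\phi|_B$ bijective, I would put $L=\ker\phi$. The only point requiring a small check is surjectivity of $\phi$; but this is automatic, since $\phi(B)=G$ already forces $\mathrm{Im}(\phi)=G$. Hence $\Z^n/L\cong G$ has order $|B|$, so $L$ has index $|B|$, and the injectivity of $\phi|_B$ gives $b_1\equiv b_2\pmod{L}\Leftrightarrow \phi(b_1)=\phi(b_2)\Leftrightarrow b_1=b_2$, while its surjectivity shows every coset of $L$ is represented in $B$. Thus $B$ is a transversal of $L$ and $\{B+v:v\in L\}$ tiles $\Z^n$.

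The proof is essentially bookkeeping with quotient groups and carries no analytic or combinatorial difficulty; in particular there is no real obstacle to overcome. The only genuinely substantive ingredients are the reformulation of tiling as ``$B$ is a transversal of a sublattice'' and the observation that the abstract group $G$ in the statement may, without loss, be replaced by the concrete quotient $\Z^n/\ker\phi$. Once these are in hand the two directions follow immediately.
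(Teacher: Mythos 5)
Your proof is correct and complete. Note that the paper itself does not prove this statement: it is quoted verbatim as Theorem~6 of the cited reference [HA12a], so there is no internal proof to compare against. Your argument --- reformulating a lattice tiling as ``$B$ is a complete set of coset representatives of a finite-index sublattice $L$'', then taking $G=\Z^n/L$ with the canonical projection in one direction and $L=\ker\phi$ in the other --- is exactly the standard argument used in that reference, and you handle the two points that actually need checking (surjectivity of $\phi$ is forced by $\phi(B)=G$, and finite index, hence full rank, of $\ker\phi$ follows from $|G|=|B|<\infty$) correctly.
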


\begin{corollary}
$ \lp{e}\neq \emptyset$ if and only if there is an abelian group $G$ and a homomorphism $\phi : \Z^n \rightarrow G$ such that $\phi|_{B^{n}(e)}: B^n(e)\rightarrow G$ is a bijection.
\end{corollary}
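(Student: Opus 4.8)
The plan is to reduce the statement to Theorem \ref{ThHorakCriterion} applied to the specific tile $B = B^n(e)$, so that the only real content is to identify linear $e$-perfect Lee codes with lattice tilings of $\Z^n$ by the Lee ball.

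First I would exploit the translation-invariance of the Lee metric: for $x,c \in \Z^n$ one has $d(x,c)=d(x-c,0)$, so $d(x,c)\leq e$ holds if and only if $x-c \in B^n(e)$, equivalently $x \in c + B^n(e)$. With this observation, the defining property of an $e$-perfect code---that every $x \in \Z^n$ lies within distance $e$ of exactly one codeword $c \in C$---says precisely that the translates $\{c + B^n(e) : c \in C\}$ cover $\Z^n$ (the existence of such a $c$) and are pairwise disjoint (its uniqueness). In other words, $C$ is $e$-perfect if and only if these translates partition $\Z^n$, i.e. $C$ is a tiling set for $B^n(e)$. If in addition $C$ is linear, that is, a subgroup (hence a sublattice) of $\Z^n$, this partition is by definition a lattice tiling of $\Z^n$ by $B^n(e)$; conversely, the translation set of a lattice tiling is a subgroup of $\Z^n$ whose translates partition $\Z^n$, which is exactly a linear $e$-perfect code. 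This establishes that $\lp{e} \neq \emptyset$ if and only if there is a lattice tiling of $\Z^n$ by $B^n(e)$.

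It then remains to invoke Theorem \ref{ThHorakCriterion} with $B = B^n(e)$, which has cardinality $|B^n(e)| = k(n,e)$: a lattice tiling of $\Z^n$ by $B^n(e)$ exists if and only if there is an abelian group $G$ of order $k(n,e)$ together with a homomorphism $\phi:\Z^n \rightarrow G$ whose restriction to $B^n(e)$ is a bijection. Finally I would observe that the order hypothesis on $G$ stated in Theorem \ref{ThHorakCriterion} can be dropped in the present formulation: whenever $\phi|_{B^n(e)}$ is a bijection onto $G$ we automatically have $|G| = |B^n(e)| = k(n,e)$, so the two formulations coincide.

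I do not anticipate a genuine obstacle, since the argument is essentially a translation of definitions; the one point requiring care is the precise dictionary ``unique covering codeword $\leftrightarrow$ disjoint covering translates'' and ``linearity of the code $\leftrightarrow$ lattice structure of the tiling set,'' after which the result is an immediate specialization of Theorem \ref{ThHorakCriterion}.
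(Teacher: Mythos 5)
Your proposal is correct and follows exactly the route the paper intends: the paper states this corollary without proof as an immediate specialization of Theorem \ref{ThHorakCriterion} to $B = B^n(e)$, taking for granted the standard dictionary you spell out (perfect code $\leftrightarrow$ partition by translates of the Lee ball, linearity $\leftrightarrow$ lattice tiling). Your additional remark that the order hypothesis $|G| = k(n,e)$ can be dropped because any bijection $\phi|_{B^n(e)} \rightarrow G$ forces $|G| = |B^n(e)|$ is the right justification for the corollary's weaker-looking formulation.
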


As in the proof of \cite[Theorem~3]{QCC18}, the main idea to obtain a version of Theorem \ref{ZhanGeTheorem} without the squarefree condition is to compose the homomorphism given in Theorem \ref{ThHorakCriterion} with a suitable homomorphism $\psi: G \rightarrow \Z_p$ for some prime $p$. The following lemma is a direct consequence of the structure theorem for finite abelian groups.

\begin{lemma}\label{LemmaEasy}
Let $G$ be an abelian group. If $|G|=pm$ with $\gcd(p,m)=1$, then there is an onto homomorphism $\psi:G\rightarrow \Z_p$. In particular, $\psi$ is an $m$-to-$1$ map.
\end{lemma}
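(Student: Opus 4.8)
The plan is to exhibit $\Z_p$ as a direct summand of $G$ and then project onto it. First I would invoke the structure theorem for finite abelian groups in its primary-decomposition form, writing $G \cong G_p \oplus H$, where $G_p$ is the Sylow $p$-subgroup (the $p$-primary part of $G$) and $H$ is the direct sum of the $\ell$-primary parts over the remaining primes $\ell \mid m$. Since $|G| = pm$ with $\gcd(p,m) = 1$, the prime $p$ divides $|G|$ to the first power only, so $|G_p| = p$ and hence $G_p \cong \Z_p$, while $|H| = m$.

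Next I would define $\psi : G \to \Z_p$ as the composition of the isomorphism $G \cong \Z_p \oplus H$ with the canonical projection onto the first factor. This is a homomorphism by construction, and it is onto because projection onto a direct summand is surjective: a preimage of the generator $1 \in \Z_p$ is given by $(1,0)$ under the above identification.

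Finally, for the ``$m$-to-$1$'' assertion I would observe that $\ker \psi \cong \{0\} \oplus H$ has order $m$. Every nonempty fiber $\psi^{-1}(y)$ with $y \in \Z_p$ is a coset of $\ker \psi$ and therefore has exactly $|\ker \psi| = m$ elements; since $\psi$ is surjective, every fiber is nonempty, so $\psi$ is uniformly $m$-to-$1$. Equivalently, this part does not even require the decomposition: the first isomorphism theorem gives $|\ker \psi| = |G| / |\mathrm{im}\,\psi| = pm/p = m$, and the fibers are its cosets.

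There is essentially no serious obstacle here. The only point needing a moment's care is that the coprimality $\gcd(p,m) = 1$ is exactly what forces the $p$-primary part to be cyclic of order precisely $p$ rather than a larger $p$-group; this is what guarantees the target is $\Z_p$ and not some $\Z_{p^a}$. Everything else is the standard machinery of projections onto direct summands and the coset structure of kernels.
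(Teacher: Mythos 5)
Your proof is correct and follows exactly the route the paper intends: the paper states this lemma as ``a direct consequence of the structure theorem for finite abelian groups'' without further detail, and your argument (primary decomposition $G \cong G_p \oplus H$, coprimality forcing $G_p \cong \Z_p$, projection onto that summand, and counting fibers as cosets of the kernel) is precisely the standard filling-in of that claim.
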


\begin{theorem}\label{ThMain1}
If $n\in \mbox{ZG}(e)$ then $\lp{e}=\emptyset$.
\end{theorem}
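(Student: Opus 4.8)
The plan is to assume a code exists and derive a contradiction from the two congruences defining the Zhang-Ge condition in (\ref{EqZhangGesystem}). Suppose $\lp{e}\neq\emptyset$; equivalently there is a lattice tiling of $\Z^n$ by $B^n(e)$, so by Theorem \ref{ThHorakCriterion} there is an abelian group $G$ with $|G|=k(n,e)$ and a homomorphism $\phi:\Z^n\rightarrow G$ whose restriction to $B^n(e)$ is a bijection. Since $n\in\mbox{ZG}(e)$ forces $k(n,e)\equiv 3$ or $6\pmod 9$, I may write $|G|=3m$ with $\gcd(3,m)=1$, so that $m\not\equiv 0\pmod 3$. Lemma \ref{LemmaEasy} then supplies an onto homomorphism $\psi:G\rightarrow\Z_3$ that is $m$-to-$1$. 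Composing, $\theta:=\psi\circ\phi:\Z^n\rightarrow\Z_3$ is a homomorphism whose restriction to $B^n(e)$ is $m$-to-$1$ onto $\Z_3$; hence, writing $c_j=\#\{x\in B^n(e):\theta(x)=j\}$, the fibers are balanced, $c_0=c_1=c_2=m$.

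Next I would extract arithmetic information from this balance through a second-moment computation. Let $a_1,\dots,a_n\in\{0,1,2\}$ be integer lifts of the images under $\theta$ of the standard basis vectors of $\Z^n$, put $y(x)=\sum_{i=1}^n a_i x_i\in\Z$ so that $\theta(x)\equiv y(x)\pmod 3$, and consider $T=\sum_{x\in B^n(e)}y(x)^2$. The idea is to evaluate $T\bmod 3$ in two ways. On one hand, expanding $y(x)^2=\sum_{i,i'}a_i a_{i'}x_i x_{i'}$ and summing over the ball, every cross term ($i\neq i'$) vanishes because $B^n(e)$ is invariant under the reflection $x_i\mapsto -x_i$, leaving $T=\big(\sum_{i=1}^n a_i^2\big)\sum_{x\in B^n(e)}x_1^2$ by the permutation symmetry of the coordinates. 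The key identity is that $\sum_{x\in B^n(e)}x_1^2=p(n,e)$: fixing $x_1=j$ leaves $(x_2,\dots,x_n)\in B^{n-1}(e-|j|)$, so the sum equals $\sum_{j=1}^{e}2j^2 k(n-1,e-j)$, which is exactly the expression (\ref{EqFormulaForp}). Since $a_i^2\equiv 0$ or $1\pmod 3$ according as $a_i=0$ or $a_i\neq 0$, and since $p(n,e)\equiv 0\pmod 3$ by the Zhang-Ge condition, this yields $T\equiv 0\pmod 3$.

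On the other hand, for any integer $y$ one has $y^2\equiv 0\pmod 3$ if $3\mid y$ and $y^2\equiv 1\pmod 3$ otherwise, so $T\equiv\#\{x\in B^n(e):\theta(x)\neq 0\}=c_1+c_2\pmod 3$. By the balance $c_1+c_2=2m$, and since $m\not\equiv 0\pmod 3$ this gives $T\equiv 2m\not\equiv 0\pmod 3$, contradicting $T\equiv 0\pmod 3$. Therefore no such $\phi$ can exist and $\lp{e}=\emptyset$.

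The routine reductions — passing from $G$ to $\Z_3$, the vanishing of the cross terms, and the lift of $\theta$ to integer coefficients — are immediate from the structure theorem and the reflection symmetry of the Lee ball. The one step that needs genuine care, and that carries the combinatorial content of the statement, is the identification $\sum_{x\in B^n(e)}x_1^2=p(n,e)$ together with the bookkeeping $\sum_i a_i^2\equiv\#\{i:a_i\neq 0\}\pmod 3$: this is precisely where the definition of $p(n,e)$ and the second congruence of (\ref{EqZhangGesystem}) enter and make the two evaluations of $T\bmod 3$ incompatible.
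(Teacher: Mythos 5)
Your proof is correct and follows essentially the same route as the paper: compose the homomorphism from Theorem \ref{ThHorakCriterion} with a surjection onto $\Z_3$ supplied by Lemma \ref{LemmaEasy}, then evaluate the second moment $\sum_{b\in B^n(e)}\left(\sum_i a_i b_i\right)^2$ modulo $3$ in two ways to obtain the contradiction $0\equiv -m\pmod{3}$. The only difference is that you derive the identity $\sum_{b\in B^n(e)}\left(\sum_i a_i b_i\right)^2 = p(n,e)\cdot\sum_i a_i^2$ directly from the reflection and permutation symmetry of the Lee ball (and the slicing $x_1=j$ giving $\sum_{x\in B^n(e)}x_1^2=p(n,e)$), whereas the paper cites this identity from \cite[Theorem 7]{ZG17}; this makes your write-up self-contained but is not a different method.
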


\begin{proof}
Since $k(n,e)\equiv 3\textrm{ or }6\pmod{9}$, we can write $k(n,e)=3m$ with $m\in \Z^{+}$ and $3\nmid m$. Now we assume, by contradiction, that $\lp{e}\neq \emptyset$. By Theorem \ref{ThHorakCriterion}, there exist an abelian group $G$ of order $k(n,e)$ and a homomorphism $\phi:\Z^n \rightarrow G$ such that its restriction $\phi|_{B^n(e)}: B^n(e)\rightarrow G$ is bijective. By Lemma \ref{LemmaEasy}, there is an $m$-to-$1$ homomorphism $\psi:G\rightarrow \Z_3$. Then, the composition $f=\psi\circ\phi : \Z^n \rightarrow \Z_3$ is a homomorphism verifying that its resctriction $f|_{B^n(e)}: B^n(e)\rightarrow \Z_{3}$ is an $m$-to-$1$ map. We denote by $a_i=f(e_i)$ where $\{e_1,\ldots,e_n\}$ is the standard basis for $\Z^n$. For each value of $s \in \{0,1,2\}$ we have exactly $m$ values of $b=(b_1,\ldots, b_n)\in B^n(e)$ such that $f(b)=\sum_{i=1}^{n}b_ia_i \equiv s \pmod{3}$. Therefore 
$$\sum_{b\in B^n(e)} \left(\sum_{i=1}^{n}b_ia_i\right)^{2} \equiv m\cdot(0^2+1^2+2^2)\equiv -m \pmod{3}. $$ The first sum equals $p(n,e)\cdot \left(\sum_{i=1}^{n}a_i^2 \right)$ \cite[Theorem 7]{ZG17}. Then, it is a multiple of $3$ because $p(n,e)\equiv 0 \pmod{3}$, but $m$ is not, which is a contradiction.
\end{proof}

In the same way as the authors of \cite{ZG17} obtained Corollary \ref{ZhanGenCorollaries} from Theorem \ref{ZhanGeTheorem}, the following corollary can be obtained from Theorem \ref{ThMain1}.

\begin{corollary}\label{CorE3E4}
If $n\equiv 12\textrm{ or } 21 \pmod{27}$ then $\lp{3}=\emptyset$. If $n\equiv 3,5,21\textrm{ or }23 \pmod{27}$ then $\lp{4}=\emptyset$.
\end{corollary}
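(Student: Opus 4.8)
The plan is to mirror the derivation that the authors used in \cite{ZG17} to pass from Theorem~\ref{ZhanGeTheorem} to Corollary~\ref{ZhanGenCorollaries}, but now invoking the stronger Theorem~\ref{ThMain1} so that no squarefree hypothesis is needed. By Theorem~\ref{ThMain1}, it suffices to show that the stated congruence classes for $n$ force $n \in \mbox{ZG}(e)$ for $e=3$ and $e=4$; that is, we must verify that each such $n$ satisfies the Zhang-Ge condition~(\ref{EqZhangGesystem}), namely $k(n,e)\equiv 3$ or $6 \pmod 9$ together with $p(n,e)\equiv 0\pmod 3$. So the entire task reduces to a pair of explicit congruence computations on $k(n,e)$ and $p(n,e)$ as functions of $n$, for the two small fixed radii.

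First I would fix $e=3$ and write $k(n,3)=\sum_{i=0}^{3}2^i\binom{n}{3}\binom{3}{i}$ explicitly as a cubic polynomial in $n$, and likewise compute $p(n,3)$ from the formula~(\ref{EqFormulaForp}), $p(n,e)=\sum_{i=0}^{e}2i^2 k(n-1,e-i)$, obtaining a polynomial in $n$. Since both quantities are integer-valued polynomials in $n$, their residues modulo $9$ (respectively modulo $3$) are periodic in $n$ with period dividing $27$ (this periodicity is exactly the content of the forthcoming Proposition~\ref{PropPeriodicityForkANDp}, which I may assume). I would then simply tabulate $k(n,3)\bmod 9$ and $p(n,3)\bmod 3$ across the $27$ residue classes $n\bmod 27$ and read off which classes simultaneously satisfy both halves of the Zhang-Ge condition; the claim is that these are precisely $n\equiv 12$ or $21\pmod{27}$. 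The identical procedure applied to $e=4$ should single out $n\equiv 3,5,21,23\pmod{27}$.

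The genuinely substantive point — rather than an obstacle — is that this is a finite verification: because $k(n,e)$ and $p(n,e)$ are polynomials in $n$ with integer (or integer-valued) coefficients, evaluating them modulo $9$ and $3$ respectively reduces to checking the $27$ residues, and the answer is forced. There is nothing to prove beyond the arithmetic, since the logical work has already been done in Theorem~\ref{ThMain1}, which removes the squarefree requirement that was the only extra hypothesis in Corollary~\ref{ZhanGenCorollaries}.

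The main place to be careful will be the bookkeeping in the congruence $k(n,e)\equiv 3$ or $6\pmod 9$: one must track residues modulo $9$ (not merely modulo $3$), so reducing the binomial-coefficient products requires attention to how factors of $3$ in the numerator $n(n-1)(n-2)$ and in the $2^i$ weights combine. Concretely, for $e=3$ the term $2^3\binom{n}{3}\binom{3}{3}=\tfrac{8}{6}n(n-1)(n-2)$ contributes the delicate part modulo $9$, and one should verify the computation by direct substitution of representative values of $n$ in each target class rather than by hand-simplifying the polynomial. Once the two tables confirm the stated residue classes, the corollary follows immediately from Theorem~\ref{ThMain1}.
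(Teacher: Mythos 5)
Your proposal is correct and follows exactly the paper's route: the paper obtains this corollary by repeating the congruence verification of \cite{ZG17} (Corollaries 8 and 9), which shows that the stated residue classes modulo $27$ force the Zhang-Ge condition, and then invoking Theorem \ref{ThMain1} in place of Theorem \ref{ZhanGeTheorem} so that the squarefree hypothesis disappears. Your finite tabulation of $k(n,e) \bmod 9$ and $p(n,e) \bmod 3$ over residues modulo $27$ is precisely that verification (and your period-$27$ claim does hold here, since for $e=3,4$ the binomial denominators $i!$ with $i\leq 4$ carry only a single factor of $3$), so the argument goes through as written.
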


\begin{corollary}\label{CorZG3ZG4}
The Zhang-Ge sets $\mbox{ZG}(3)$ and $\mbox{ZG}(4)$ have infinitely many elements.
\end{corollary}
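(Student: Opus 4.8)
The plan is to deduce Corollary~\ref{CorZG3ZG4} directly from Corollary~\ref{CorE3E4} by exhibiting infinitely many dimensions $n$ lying in the stated residue classes modulo $27$. Concretely, for $e=3$ the set $\mbox{ZG}(3)$ contains every $n\equiv 12$ or $21\pmod{27}$, and for $e=4$ it contains every $n\equiv 3,5,21$ or $23\pmod{27}$. Since each of these is a nonempty arithmetic progression with common difference $27$, it visibly contains infinitely many positive integers, and hence both $\mbox{ZG}(3)$ and $\mbox{ZG}(4)$ are infinite. The only thing one must be careful about is that Corollary~\ref{CorE3E4} is phrased as an implication ``if $n$ lies in such-and-such residue class then $\lp{e}=\emptyset$,'' whereas the definition of $\mbox{ZG}(e)$ requires membership to be witnessed by the Zhang-Ge \emph{condition} (the congruence system~(\ref{EqZhangGesystem})), not merely by emptiness of $\lp{e}$.

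To close that gap cleanly, I would trace back how Corollary~\ref{CorE3E4} is obtained from Theorem~\ref{ThMain1}: the residue classes listed there are exactly the classes of $n$ modulo $27$ for which the pair $(n,e)$ satisfies the Zhang-Ge condition, i.e. for which $k(n,e)\equiv 3$ or $6\pmod 9$ and $p(n,e)\equiv 0\pmod 3$. In other words, the computation underlying Corollary~\ref{CorE3E4} already establishes that, for instance, $n\equiv 12\pmod{27}$ forces $(n,3)$ to satisfy~(\ref{EqZhangGesystem}), which is precisely the statement $n\in\mbox{ZG}(3)$. So the honest formulation of the argument is: the residue conditions in Corollary~\ref{CorE3E4} are sufficient conditions for $n\in\mbox{ZG}(e)$, and therefore $\{n : n\equiv 12 \text{ or } 21 \pmod{27}\}\subseteq \mbox{ZG}(3)$ and $\{n : n\equiv 3,5,21 \text{ or } 23\pmod{27}\}\subseteq\mbox{ZG}(4)$.

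The verification that these congruence classes really do satisfy the system~(\ref{EqZhangGesystem}) rests on the periodicity of $k(n,e)$ and $p(n,e)$ modulo powers of $3$ as functions of $n$: once one knows that $k(n,3)\bmod 9$ and $p(n,3)\bmod 3$ depend only on $n\bmod 27$ (and similarly for $e=4$), it suffices to check the Zhang-Ge condition on a single representative of each of the finitely many residue classes, which is a routine finite computation using the closed forms~(\ref{EqForkne1}) and~(\ref{EqFormulaForp}). Granting that periodicity (which is implicit in the derivation of Corollaries~\ref{ZhanGenCorollaries} and~\ref{CorE3E4}, and is made explicit later via Proposition~\ref{PropPeriodicityForkANDp}), the conclusion is immediate: each of the relevant unions of arithmetic progressions is an infinite set of positive integers contained in the corresponding Zhang-Ge set.

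The main obstacle, if any, is purely expository rather than mathematical: one must be sure that the residue classes appearing in Corollary~\ref{CorE3E4} arise as witnesses of the Zhang-Ge \emph{condition} and not merely as a restatement of $\lp{e}=\emptyset$, since the latter on its own would not give membership in $\mbox{ZG}(e)$. Once this is acknowledged, the proof is a one-line appeal to the fact that a nonempty arithmetic progression contains infinitely many integers. I therefore expect the write-up to consist of little more than citing Corollary~\ref{CorE3E4} (or equivalently the finite check that the listed classes satisfy~(\ref{EqZhangGesystem})) and observing that the corresponding residue classes modulo $27$ are infinite.
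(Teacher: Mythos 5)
Your proposal is correct and follows essentially the same route as the paper: the computation underlying Corollary~\ref{CorE3E4} (inherited from Zhang and Ge's Corollaries~8 and~9) shows that the listed residue classes modulo $27$ satisfy the congruence system~(\ref{EqZhangGesystem}), so they lie inside $\mbox{ZG}(3)$ and $\mbox{ZG}(4)$, and an arithmetic progression is infinite. Your explicit flagging of the distinction between ``$n\in\mbox{ZG}(e)$'' and ``$\lp{e}=\emptyset$'' is a point the paper leaves implicit, but it is the same argument, just spelled out more carefully.
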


The first goal is to determine when the Zhang-Ge set $\mbox{ZG}(e)$ is either empty or non-empty. For those values of $e$ for which this set is non-empty, we also want to determine if this set is either finite or infinite.

\subsection{Cases where the Zhang-Ge set $ZG(e)$ is empty}

Here we apply classical Lucas' theorem on binomial coefficients to show some cases where $ZG(e)=\emptyset$. Lucas' theorem states that if $p$ is a prime number, $a=\sum_{j=0}^{h-1}a_j p^j$ with $a_j\in\{0,1,2\}$ and $b=\sum_{j=0}^{h-1}b_j p^j$ with $b_j\in\{0,1,2\}$, then $\binom{a}{b}\equiv \prod_{j=0}^{h-1}\binom{a_j}{b_j}\pmod{p}$. First we prove that if the base-$3$ representation of $e$ does not contain a digit $1$ then $ZG(e)=\emptyset$. The following lemma shows an important multiplicative property of $k(n,e)$.

\begin{lemma}\label{LemmakIsMultiplicativeMod3}
Let $n=\sum_{j=0}^{h-1}n_j\cdot 3^{j}$ and $e=\sum_{j=0}^{h-1}e_{j}\cdot 3^{j}$ with $n_j\in\{0,1,2\}$ and $e_{j}\in \{0,1,2\}$. Then $k(n,e)  \equiv \prod_{j=0}^{h-1}k(n_j,e_j) \pmod{3}$.
\end{lemma}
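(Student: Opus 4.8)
The plan is to start from the closed formula (\ref{EqForkne1}) for $k(n,e)$, choosing the cutoff to be $N=3^{h}-1$ so that the summation index $i$ admits a base-$3$ expansion using exactly $h$ digits. First I would write $i=\sum_{j=0}^{h-1}i_{j}3^{j}$ with $i_{j}\in\{0,1,2\}$; as $i$ ranges over $\{0,1,\ldots,3^{h}-1\}$, the digit tuples $(i_{0},\ldots,i_{h-1})$ range over all of $\{0,1,2\}^{h}$. Classical Lucas' theorem then applies to each binomial coefficient separately, giving $\binom{n}{i}\equiv\prod_{j=0}^{h-1}\binom{n_{j}}{i_{j}}$ and $\binom{e}{i}\equiv\prod_{j=0}^{h-1}\binom{e_{j}}{i_{j}}\pmod{3}$.

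The one term not immediately covered by Lucas' theorem is the weight $2^{i}$, and this is the step that requires the actual idea rather than bookkeeping. Here I would exploit the prime $3$ directly: since $2\equiv-1\pmod{3}$ we have $2^{i}\equiv(-1)^{i}\pmod{3}$, and because every power $3^{j}$ is odd we get $i\equiv\sum_{j=0}^{h-1}i_{j}\pmod{2}$, hence $2^{i}\equiv\prod_{j=0}^{h-1}2^{i_{j}}\pmod{3}$. In other words $2^{i}$ is itself multiplicative across the base-$3$ digits modulo $3$. Combining this with the two Lucas factorizations, the general summand splits over the digit positions:
$$2^{i}\binom{n}{i}\binom{e}{i}\equiv\prod_{j=0}^{h-1}2^{i_{j}}\binom{n_{j}}{i_{j}}\binom{e_{j}}{i_{j}}\pmod{3}.$$

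Next I would interchange the summation over tuples $(i_{0},\ldots,i_{h-1})\in\{0,1,2\}^{h}$ with the product over $j$, using the standard distributive identity that a sum of products over independent indices equals the product of the individual sums. This rewrites $k(n,e)$ modulo $3$ as $\prod_{j=0}^{h-1}\bigl(\sum_{i_{j}=0}^{2}2^{i_{j}}\binom{n_{j}}{i_{j}}\binom{e_{j}}{i_{j}}\bigr)$. Finally, since each $n_{j},e_{j}\in\{0,1,2\}$ we have $\min\{n_{j},e_{j}\}\le2$, so by (\ref{EqForkne1}) with $N=2$ each inner factor is exactly $k(n_{j},e_{j})$, yielding $k(n,e)\equiv\prod_{j=0}^{h-1}k(n_{j},e_{j})\pmod{3}$ as claimed. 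The main obstacle is precisely the treatment of the $2^{i}$ factor: one must observe that it is genuinely multiplicative modulo $3$ with respect to the base-$3$ digits, a fact special to the prime $3$ that rests on both $2\equiv-1\pmod 3$ and the oddness of each $3^{j}$; the remaining manipulations are routine applications of Lucas' theorem and the distributive law.
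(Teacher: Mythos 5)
Your proof is correct and follows essentially the same route as the paper: both apply Lucas' theorem with the cutoff $N=3^h-1$, handle the weight via $2^i\equiv(-1)^i\pmod 3$ together with the oddness of each $3^j$ to make it multiplicative across digits, and then factor the sum over digit tuples into a product of sums, each of which is $k(n_j,e_j)$. No gaps; your treatment of the $2^i$ factor is exactly the paper's key observation.
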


\begin{proof}
 Using Lucas' Theorem and $(-1)^{3^j}=-1$ for all $j\geq 0$ we have:
\begin{align*}
k(n,e) &=\sum_{i=0}^{3^h-1}2^i\binom{n}{i}\binom{e}{i} \equiv \sum_{i=0}^{3^h-1}(-1)^{i}\binom{n}{i}\binom{e}{i} 
  \equiv \sum_{i_{h-1}=0}^{2}\cdots\sum_{i_0=1}^{2}\left\{ \prod_{j=0}^{h-1}(-1)^{i_j}\binom{n_j}{i_j}\binom{e_j}{i_j}\right\}\\
   & = \prod_{j=0}^{h-1}\left(\sum_{i=0}^{2} (-1)^{i}\binom{n_j}{i}\binom{e_j}{i} \right) \equiv \prod_{j=0}^{h-1}\left(\sum_{i=0}^{2} 2^{i}\binom{n_j}{i}\binom{e_j}{i} \right) \equiv \prod_{j=0}^{h-1} k(n_j,e_j) \pmod{3}.
\end{align*}
\end{proof}

\begin{proposition}\label{PropZGempty}
If the base-$3$ representation of $e$ contains no digit $1$, then $k(n,e)\not\equiv 0 \pmod{3}$ for all $n\geq 1$. In particular, $ZG(e)=\emptyset$.
\end{proposition}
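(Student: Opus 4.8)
The plan is to reduce the claim to a finite digit-by-digit computation via the multiplicativity established in Lemma \ref{LemmakIsMultiplicativeMod3}. Writing $n=\sum_{j=0}^{h-1} n_j 3^j$ and $e=\sum_{j=0}^{h-1} e_j 3^j$ in base $3$ (padding with leading zeros so that both use the same number of digits), that lemma gives $k(n,e)\equiv \prod_{j=0}^{h-1} k(n_j,e_j)\pmod 3$. Since $3$ is prime, a product is nonzero modulo $3$ precisely when every factor is nonzero, so it suffices to understand when a single digit-factor $k(a,b)$ with $a,b\in\{0,1,2\}$ vanishes modulo $3$.

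First I would tabulate $k(a,b)$ for all nine pairs $(a,b)\in\{0,1,2\}^2$ using the closed form $k(a,b)=\sum_{i=0}^{2}2^i\binom{a}{i}\binom{b}{i}$ and reduce modulo $3$. The outcome is that $k(a,b)\equiv 0\pmod 3$ occurs in exactly one case, namely $a=b=1$ (where $k(1,1)=3$), while all the remaining eight pairs give $k(a,b)\equiv 1$ or $2\pmod 3$; for instance $k(a,0)=k(0,b)=1$, $k(1,2)=k(2,1)=5\equiv 2$, and $k(2,2)=13\equiv 1$. Thus a digit-factor is divisible by $3$ if and only if both corresponding digits equal $1$.

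Now suppose the base-$3$ representation of $e$ contains no digit $1$, so that $e_j\neq 1$ for every $j$. Then no pair $(n_j,e_j)$ can equal $(1,1)$, hence every factor $k(n_j,e_j)$ is nonzero modulo $3$, and therefore $k(n,e)\equiv\prod_{j} k(n_j,e_j)\not\equiv 0\pmod 3$ for every $n\geq 1$. To conclude $\mbox{ZG}(e)=\emptyset$, I would note that the first line of the Zhang-Ge system (\ref{EqZhangGesystem}) requires $k(n,e)\equiv 3$ or $6\pmod 9$, both of which force $k(n,e)\equiv 0\pmod 3$; since we have just shown this can never happen, no $n$ satisfies the Zhang-Ge condition.

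The computation is entirely routine, so there is no genuine obstacle; the only point that requires care is the observation that $(1,1)$ is the \emph{sole} digit-pair producing a factor divisible by $3$, which is exactly what pins the result to the absence of the digit $1$ in $e$. It is also worth recording, for later use, that the same table shows more precisely that $3\mid k(n,e)$ if and only if there is an index $j$ with $n_j=e_j=1$, a refinement that will presumably be relevant when analyzing the remaining cases of $\mbox{ZG}(e)$.
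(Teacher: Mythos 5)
Your proposal is correct and follows essentially the same route as the paper: both reduce via Lemma \ref{LemmakIsMultiplicativeMod3} to the digit factors $k(n_j,e_j)$ and check that none vanishes modulo $3$ when $e_j\in\{0,2\}$ (the paper verifies exactly these two columns, $k(n_j,0)=1$ and $k(n_j,2)=2n_j^2+2n_j+1$, while you tabulate all nine pairs). Your extra observation that $(1,1)$ is the \emph{only} vanishing digit pair, i.e.\ $3\mid k(n,e)$ iff $n_j=e_j=1$ for some $j$, is a harmless refinement the paper does not state but does not need here.
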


\begin{proof}
Let $n=\sum_{j=0}^{h-1}n_j\cdot 3^{j}$ and $e=\sum_{j=0}^{h-1}e_{j}\cdot 3^{j}$ with $n_j\in\{0,1,2\}$ and $e_{j}\in \{0,2\}$. By Lemma \ref{LemmakIsMultiplicativeMod3}, to prove that $3\nmid k(n,e)$ it suffices to prove that $3\nmid k(n_j,e_j)$ for every $j$, $0\leq j <h$. There are two cases to consider. If $e_j=0$ then $k(n_j,e_j)=1\not\equiv 0 \pmod{3}$ and if $e_j=2$ then $k(n_j,e_j)=2n_j^2+2n_j+1\not\equiv 0 \pmod{3}$ for $n_j=0,1,2$.
\end{proof}

Next we prove that if the base-$3$ representation of $e$ contains exactly one digit $1$ and it is in the unit place then $ZG(e)=\emptyset$. In Section \ref{SectionMain} we prove that these two cases are the only ones for which it happens. We start with some preliminary lemmas.

\begin{lemma}\label{LemmaFormulaForp}
Let $n$, $i$, $h$ and $s$ be positive integers such that $s\geq h$, $n-1=\sum_{j=0}^{s-1}n_j3^j$ where every $n_j\in\{0,1,2\}$, and $i<3^{h-1}$. Then
\begin{itemize}
\item[i)] $k(n-1,2\cdot 3^{h-1}+i)\equiv (-1)^{n_{h-1}}\cdot k(n-1,i)\pmod{3}$.
\item[ii)] $k(n-1, 3^{h-1}+i)\equiv (1-n_{h-1})\cdot k(n-1,i)\pmod{3}$.
\end{itemize}
\end{lemma}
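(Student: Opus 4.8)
The plan is to prove both congruences uniformly by exploiting the multiplicative structure of $k(n-1,\cdot)$ modulo $3$ that was established in Lemma \ref{LemmakIsMultiplicativeMod3}, combined with the base-$3$ digit expansions of the two radii $2\cdot 3^{h-1}+i$ and $3^{h-1}+i$. Writing $n-1=\sum_{j=0}^{s-1}n_j3^j$ with $s\geq h$, I would first observe that since $i<3^{h-1}$, the digit of $i$ in position $h-1$ is $0$, so adding $2\cdot 3^{h-1}$ (respectively $3^{h-1}$) to $i$ changes only the digit in position $h-1$, leaving all lower digits equal to those of $i$ and keeping the higher digits zero. Thus the base-$3$ expansion of $2\cdot 3^{h-1}+i$ agrees with that of $i$ in every position except position $h-1$, where the digit jumps from $0$ to $2$; and similarly for $3^{h-1}+i$, where it jumps from $0$ to $1$.

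Next I would apply Lemma \ref{LemmakIsMultiplicativeMod3} to both $k(n-1,i)$ and $k(n-1,2\cdot3^{h-1}+i)$ using this common digit expansion for $n-1$. Taking the ratio of the two product formulas modulo $3$, every factor coming from a position $j\neq h-1$ cancels (the corresponding digits of the two radii coincide), so only the factor at position $h-1$ survives. For part (i) this reduces the problem to computing the ratio $k(n_{h-1},2)/k(n_{h-1},0)$ modulo $3$; since $k(n_{h-1},0)=1$ this is just $k(n_{h-1},2)\pmod3$. Using the explicit small-case values $k(0,2)=1$, $k(1,2)=5\equiv 2$, $k(2,2)=13\equiv 1$ — equivalently the closed form $2n_{h-1}^2+2n_{h-1}+1$ already used in Proposition \ref{PropZGempty} — I would check directly that this equals $(-1)^{n_{h-1}}\pmod3$ for each $n_{h-1}\in\{0,1,2\}$, yielding part (i). For part (ii) the same cancellation leaves the factor $k(n_{h-1},1)\pmod3$; since $k(m,1)=2m+1$, I would verify that $k(n_{h-1},1)\equiv 2n_{h-1}+1\equiv 1-n_{h-1}\pmod3$ for each $n_{h-1}\in\{0,1,2\}$, which gives part (ii).

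The cleanest way to organize the cancellation is to note that the one-digit factor appearing in Lemma \ref{LemmakIsMultiplicativeMod3} is $\sum_{t=0}^{2}(-1)^t\binom{n_j}{t}\binom{e_j}{t}\pmod3$, so I can compute $k(n-1,2\cdot3^{h-1}+i)\equiv k(n-1,i)\cdot\bigl[\sum_{t}(-1)^t\binom{n_{h-1}}{t}\binom{2}{t}\bigr]/\bigl[\sum_{t}(-1)^t\binom{n_{h-1}}{t}\binom{0}{t}\bigr]$, where the denominator is just $1$. I should be slightly careful that I am manipulating a product of residues modulo $3$ rather than dividing arbitrary integers; since the position-$h-1$ factor for $i$ equals $1$ (as $\binom{0}{t}=0$ for $t\ge1$), no genuine division is needed and the argument is a pure multiplicative comparison of two products that differ in a single factor. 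The main obstacle — and it is a minor one — is bookkeeping: ensuring the hypotheses $i<3^{h-1}$ and $s\geq h$ are used precisely where they guarantee that position $h-1$ is a valid, isolated digit slot for both $n-1$ and the two radii, so that Lemma \ref{LemmakIsMultiplicativeMod3} applies with a common choice of $h'=s$ digits. Once that alignment is fixed, both parts follow from the three-element verification of the single surviving factor, with no further computation required.
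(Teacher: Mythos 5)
Your proposal is correct and follows essentially the same route as the paper's proof: both apply Lemma \ref{LemmakIsMultiplicativeMod3} together with the observation that adding $2\cdot 3^{h-1}$ (resp.\ $3^{h-1}$) to $i<3^{h-1}$ only alters the position-$(h-1)$ digit, whose factor for $i$ is $k(n_{h-1},0)=1$, so that $k(n-1,2\cdot 3^{h-1}+i)\equiv k(n_{h-1},2)\cdot k(n-1,i)$ and $k(n-1,3^{h-1}+i)\equiv k(n_{h-1},1)\cdot k(n-1,i)\pmod{3}$. The finishing verifications $k(n_{h-1},2)=2n_{h-1}^2+2n_{h-1}+1\equiv(-1)^{n_{h-1}}$ and $k(n_{h-1},1)=2n_{h-1}+1\equiv 1-n_{h-1}\pmod{3}$ are exactly the paper's.
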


\begin{proof}
By Lemma \ref{LemmakIsMultiplicativeMod3} and the fact that $k(n_j,0)=1$ for every $j$, we have
$k(n-1, 2 \cdot 3^{h-1} + i) \equiv k(n_{h-1},2)\cdot k(n-1,i)\pmod{3}$ and $k(n-1, 3^{h-1} + i) \equiv k(n_{h-1},1)\cdot k(n-1,i)\pmod{3}$. The conclusion follows from the fact that $k(n_{h-1},1)=2n_{h-1}+1\equiv 1-n_{h-1} \pmod{3}$ and $k(n_{h-1},2)=2n_{h-1}^2+2n_{h-1}+1\equiv (-1)^{n_{h-1}}\pmod{3}$ for $0\leq n_{h-1}\leq 2$.
\end{proof}

\begin{lemma}\label{LemmaOtherFormulaForp}
Let $h>1$, $n\equiv \sum_{i=0}^{h-1}n_i 3^i+1 \pmod{3^h}$ with $n_i\in \{0,1,2\}$, and $e=2\cdot 3^{h-1} + e'$ with $0\leq e'<3^{h-1}$. If $n_{h-1}=2$ or $n_i\neq 2$ for some $i: 1\leq i \leq h-2$ then $$p(n,e)\equiv (-1)^{n_{h-1}}\cdot p(n,e')\pmod{3}.$$ If $n_{h-1}\neq 2$ and $n_i=2$ for each $i: 1\leq i \leq h-2$ then 
$$p(n,e)\equiv (-1)^{n_{h-1}}\cdot p(n,e')+ 2(2-n_{h-1})\!\!\!\!\!\!\doblesum{0\leq i < 3}{i\not\equiv e\!\!\!\!\! \pmod{3}}\!\!\!\!\! k(n-1, i)\pmod{3}.$$
\end{lemma}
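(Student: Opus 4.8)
The plan is to work directly from the convolution description of $p$. Rewriting Equation (\ref{EqFormulaForp}) with the summation taken over the argument of $k$, I would use
$$p(n,e)=\sum_{b=0}^{e}2(e-b)^2\,k(n-1,b),$$
so that in particular $p(n,e')=\sum_{b=0}^{e'}2(e'-b)^2k(n-1,b)$. Setting $d=3^{h-1}$ and $e=2d+e'$ with $0\le e'<d$, I would split the range $0\le b\le 2d+e'$ into the three triadic blocks $[0,d)$, $[d,2d)$ and $[2d,2d+e']$. On the last two blocks the substitutions $b=d+b'$ and $b=2d+b'$ (with $0\le b'<d$, respectively $0\le b'\le e'$) let me invoke Lemma \ref{LemmaFormulaForp} to replace $k(n-1,d+b')$ by $(1-n_{h-1})k(n-1,b')$ and $k(n-1,2d+b')$ by $(-1)^{n_{h-1}}k(n-1,b')$ modulo $3$. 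The key simplification is that $h>1$ forces $d\equiv 0\pmod 3$, so in every block the coefficient $2(2d+e'-b)^2$ collapses to $2(e'-b')^2\pmod 3$.

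After these reductions the third block reproduces exactly $(-1)^{n_{h-1}}p(n,e')$, while the first two blocks produce the identical reduced sum $S=\sum_{b'=0}^{d-1}2(e'-b')^2k(n-1,b')$ and therefore contribute $\bigl(1+(1-n_{h-1})\bigr)S=(2-n_{h-1})S$. This already yields
$$p(n,e)\equiv(-1)^{n_{h-1}}p(n,e')+(2-n_{h-1})\,S\pmod 3,$$
reducing the whole lemma to an evaluation of $S$. Since $2(e'-b')^2\equiv 0$ when $b'\equiv e'\pmod 3$ and $\equiv 2$ otherwise, and $e'\equiv e\pmod 3$ (again because $d\equiv 0$), I would rewrite $S\equiv 2T\pmod 3$, where $T=\sum k(n-1,b')$ runs over those $b'$ with $0\le b'<d$ and $b'\not\equiv e\pmod 3$.

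The core of the argument is the evaluation of $T$ via the multiplicativity of Lemma \ref{LemmakIsMultiplicativeMod3}. Writing $b'=\sum_{j=0}^{h-2}b_j3^j$ and noting that $b'\bmod 3$ depends only on $b_0$, the sum $T$ factors as $\bigl(\sum_{b_0\not\equiv e}k(n_0,b_0)\bigr)\prod_{j=1}^{h-2}\bigl(\sum_{b_j=0}^{2}k(n_j,b_j)\bigr)$. A one-line computation gives $\sum_{b=0}^{2}k(m,b)\equiv 2m^2+m\pmod 3$, which is $0$ for $m\in\{0,1\}$ and $1$ for $m=2$. Hence each inner factor vanishes unless $n_j=2$, so the whole product is $0$ whenever some $n_i\ne 2$ with $1\le i\le h-2$; combined with the trivial vanishing of $(2-n_{h-1})$ when $n_{h-1}=2$, this gives the first case. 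When $n_j=2$ for all $1\le j\le h-2$ the product equals $1$, so $T\equiv\sum_{0\le i<3,\ i\not\equiv e}k(n_0,i)\equiv\sum_{0\le i<3,\ i\not\equiv e}k(n-1,i)\pmod 3$ (the last step because $k(n-1,i)\equiv k(n_0,i)$ for $i\in\{0,1,2\}$), which is the second case.

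I expect the only genuine difficulty to be organizational rather than conceptual: keeping the triadic block decomposition, the index shifts, and the two coupled factors $(-1)^{n_{h-1}}$ and $(2-n_{h-1})$ perfectly aligned, and checking that only the digits $n_0,\dots,n_{h-1}$ of $n-1$ ever enter, which holds because every argument of $k$ stays below $3^h$. The one place where a hypothesis is genuinely used is $h>1$, needed to guarantee $d\equiv 0\pmod 3$ so that the squared coefficients reduce cleanly; the boundary case $h=2$ (empty product $\prod_{j=1}^{0}$, vacuous digit condition) is automatically covered by this analysis.
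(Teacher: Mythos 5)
Your proposal is correct. The outer layer of your argument coincides with the paper's: both split the summation range $[0,\,2\cdot 3^{h-1}+e']$ into the same three triadic blocks and apply Lemma \ref{LemmaFormulaForp} to extract $(-1)^{n_{h-1}}p(n,e')$ from the top block and the combined factor $1+(1-n_{h-1})=2-n_{h-1}$ from the two lower blocks; the only cosmetic difference there is that the paper reduces the quadratic weights first (via $i^2\equiv 0,1\pmod 3$) while you carry $2(e-b)^2$ through the split and reduce afterwards using $3^{h-1}\equiv 0\pmod 3$. Where you genuinely diverge is in evaluating the residual sum $T=\sum k(n-1,b')$ over $0\le b'<3^{h-1}$, $b'\not\equiv e\pmod 3$: the paper does a recursive descent (its Equation (\ref{EqBeta})), re-splitting $[0,3^t)$ into three blocks and reapplying Lemma \ref{LemmaFormulaForp} at each level $t$, accumulating the factors $2-n_j+(-1)^{n_j}$; you instead factor $T$ in one shot as $\bigl(\sum_{b_0\not\equiv e}k(n_0,b_0)\bigr)\cdot\prod_{j=1}^{h-2}\bigl(\sum_{b_j=0}^{2}k(n_j,b_j)\bigr)$ by a single application of the multiplicativity Lemma \ref{LemmakIsMultiplicativeMod3}. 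The two mechanisms yield the same digit factors, since your $2n_j^2+n_j$ and the paper's $2-n_j+(-1)^{n_j}$ both vanish modulo $3$ unless $n_j=2$, where both equal $1$, so the case analysis closes identically. Your one-shot factorization is slightly more direct and makes it transparent why only the digits $n_1,\ldots,n_{h-2}$ control the vanishing; the paper's recursion avoids manipulating a sum over tuples of digits, at the cost of an induction over the levels.
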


\begin{proof}
Using Equation (\ref{EqFormulaForp}) we write $p(n,e)=\sum_{i=0}^{e}2i^2 k(n-1.e-i)$. Since $i^2 \equiv 1 \pmod{3}$ when $i\not\equiv 0 \pmod{3}$, we have
\begin{equation}\label{EqFormulaForpMod3}
p(n,e)\equiv 2\cdot\!\!\!\!\!\!  \doblesum{0\leq i \leq e}{i\not\equiv 0\!\!\!\!\! \pmod{3}}\!\!\!\!\! k(n-1,e-i)\equiv 2\cdot\!\!\!\!\!\!  \doblesum{0\leq i \leq e}{i\not\equiv e\!\!\!\!\! \pmod{3}}\!\!\!\!\!  k(n-1,i) \pmod{3}.
\end{equation}
We split this sum in three parts
$$ p(n,e)\equiv  2\cdot\!\!\!\!\!\!\doblesum{2\cdot 3^{h-1}\leq i \leq e}{i\not\equiv e\!\!\!\!\! \pmod{3}}\!\!\!\!\!  k(n-1,i) + 2\cdot\!\!\!\!\!\! \doblesum{3^{h-1}\leq i < 2\cdot 3^{h-1}}{i\not\equiv e\!\!\!\!\! \pmod{3}}\!\!\!\!\!k(n-1,i) + 2\cdot\!\!\!\!\!\!\doblesum{0\leq i < 3^{h-1}}{i\not\equiv e\!\!\!\!\! \pmod{3}}\!\!\!\!\!   k(n-1,i) \pmod{3}.$$

By Equation (\ref{EqFormulaForpMod3}), Lemma \ref{LemmaFormulaForp} and the fact that $e\equiv e'\pmod{3}$, for the first sum we have
$$2\cdot\!\!\!\!\!\!\doblesum{2\cdot 3^{h-1}\leq i \leq e}{i\not\equiv e\!\!\!\!\! \pmod{3}}\!\!\!\!\! k(n-1,i) \equiv (-1)^{n_{h-1}}\cdot 2\cdot\!\!\!\!\!\!  \doblesum{0\leq i \leq e'}{i\not\equiv e\!\!\!\!\! \pmod{3}}\!\!\!\!\! k(n-1, i) =(-1)^{n_{h-1}}\cdot p(n,e') \pmod{3},$$ and for the second sum we have
$$\doblesum{3^{h-1}\leq i < 2\cdot 3^{h-1}}{i\not\equiv e\!\!\!\!\! \pmod{3}}\!\!\!\!\! k(n-1,i) \equiv (1-n_{h-1})\cdot\!\!\!\!\!\!  \doblesum{0\leq i < 3^{h-1}}{i\not\equiv e\!\!\!\!\! \pmod{3}}\!\!\!\!\! k(n-1, i) \pmod{3}. $$
Thus,
\begin{equation}\label{EqAlfa}
p(n,e) \equiv   (-1)^{n_{h-1}}\cdot p(n,e') + 2(2-n_{h-1})\cdot\!\!\!\!\!\!  \doblesum{0\leq i < 3^{h-1}}{i\not\equiv e\!\!\!\!\! \pmod{3}}\!\!\!\!\! k(n-1, i) \pmod{3}.
\end{equation}
Note that for $t>1$ we have:
\begin{align}\label{EqBeta}
\doblesum{0\leq i < 3^{t}}{i\not\equiv e\!\!\!\!\! \pmod{3}}\!\!\!\!\! k(n-1, i) &= \!\!\!\!\!\!  \doblesum{0\leq i < 3^{t-1}}{i\not\equiv e\!\!\!\!\! \pmod{3}}\!\!\!\!\! k(n-1, 2\cdot 3^{t-1}+i) +   \!\!\!\!\!\!  \doblesum{0\leq i < 3^{t-1}}{i\not\equiv e\!\!\!\!\! \pmod{3}}\!\!\!\!\! k(n-1,3^{t-1}+i) + \!\!\!\!\!\!  \doblesum{0\leq i < 3^{t-1}}{i\not\equiv e\!\!\!\!\! \pmod{3}}\!\!\!\!\! k(n-1,  i) \nonumber  \\
&\equiv  (-1)^{n_{t-1}} \!\!\!\!\!\!  \doblesum{0\leq i < 3^{t-1}}{i\not\equiv e\!\!\!\!\! \pmod{3}}\!\!\!\!\! k(n-1, i) + 
(1-n_{t-1})\!\!\!\!\!\!  \doblesum{0\leq i < 3^{t-1}}{i\not\equiv e\!\!\!\!\! \pmod{3}}\!\!\!\!\! k(n-1, i) + \!\!\!\!\!\!  \doblesum{0\leq i < 3^{t-1}}{i\not\equiv e\!\!\!\!\! \pmod{3}}\!\!\!\!\! k(n-1, i) \nonumber \\
& \equiv (2-n_{t-1}+(-1)^{n_{t-1}})\!\!\!\!\!\!  \doblesum{0\leq i < 3^{t-1}}{i\not\equiv e\!\!\!\!\! \pmod{3}}\!\!\!\!\! k(n-1, i)\pmod{3} 
\end{align}
Using Equations (\ref{EqAlfa}) and (\ref{EqBeta}) with $t=h-1,h-2,\ldots,1$ we obtain:

\begin{equation}\label{EqGamma}
\doblesum{0\leq i < 3^{h-1}}{i\not\equiv e\!\!\!\!\! \pmod{3}}\!\!\!\!\! k(n-1, i) \equiv  \prod_{i=1}^{h-2} (2-n_{i}+(-1)^{n_{i}})\cdot \!\!\!\!\!\!  \doblesum{0\leq i < 3}{i\not\equiv e\!\!\!\!\! \pmod{3}}\!\!\!\!\! k(n-1, i), 
\end{equation}
where the product above is $1$ when $h=2$. Combining Equations (\ref{EqAlfa}) and (\ref{EqGamma}) we obtain:
$$p(n,e) \equiv   (-1)^{n_{h-1}}\cdot p(n,e') + 2(2-n_{h-1})\cdot\prod_{i=1}^{h-2} (2-n_{i}+(-1)^{n_{i}})\cdot \!\!\!\!\!\!  \doblesum{0\leq i < 3}{i\not\equiv e\!\!\!\!\! \pmod{3}}\!\!\!\!\! k(n-1, i) \pmod{3}.    $$
The conclusion follows from the fact that $\prod_{i=1}^{h-2} (2-n_{i}+(-1)^{n_{i}}) \equiv 1 \pmod{3}$ if $n_1=n_2=\cdots=n_{h-2}=2$ and $\prod_{i=1}^{h-2} (2-n_{i}+(-1)^{n_{i}}) \equiv 0 \pmod{3}$ otherwise.
\end{proof}

\begin{lemma}\label{Lemmap2mod3}
Let $h>1$, $n\equiv \sum_{i=0}^{h-1}n_i 3^i+1 \pmod{3^h}$ with $n_0\in\{0,1,2\}$ and $n_1=\cdots=n_{h-1}=2$. Let $e= \sum_{i=0}^{h-1}e_i 3^i$ with $e_0=1$ and $e_i\in\{0,2\}$ for $1\leq i \leq h-1$. Then $p(n,e)\equiv 2 \pmod{3}$.
\end{lemma}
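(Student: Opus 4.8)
The plan is to argue by induction on $h$, peeling off the top base-$3$ digit of both $e$ and $n-1$ at each step and invoking Lemma~\ref{LemmaOtherFormulaForp}. The base case is $h=1$, where necessarily $e=e_0=1$; here I would simply evaluate the convolution formula (\ref{EqFormulaForp}) directly, obtaining $p(n,1)=\sum_{i=0}^{1}2i^2 k(n-1,1-i)=2\cdot k(n-1,0)=2$, so $p(n,1)\equiv 2\pmod 3$ independently of $n$. This anchors the reductions below. For the inductive step ($h>1$) I assume the claim at level $h-1$ and split on the top digit $e_{h-1}$, which by hypothesis lies in $\{0,2\}$.

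\emph{Case $e_{h-1}=2$.} Write $e=2\cdot 3^{h-1}+e'$ with $e'=\sum_{i=0}^{h-2}e_i3^i<3^{h-1}$. The decisive point is that the hypothesis forces $n_{h-1}=2$, so I land in the \emph{first} branch of Lemma~\ref{LemmaOtherFormulaForp} and the correction sum never appears; the lemma gives $p(n,e)\equiv(-1)^{n_{h-1}}p(n,e')=p(n,e')\pmod 3$. Now $e'$ again has unit digit $e_0=1$ and remaining digits in $\{0,2\}$, and reducing the congruence for $n$ modulo $3^{h-1}$ preserves the pattern $n_1=\cdots=n_{h-2}=2$; hence the inductive hypothesis applies to $(n,e')$ at level $h-1$ and yields $p(n,e')\equiv 2\pmod 3$. (When $h=2$ this step uses the base case, since then $e'=e_0=1$.)

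\emph{Case $e_{h-1}=0$.} Here $e=\sum_{i=0}^{h-2}e_i3^i<3^{h-1}$, so in $p(n,e)=\sum_{i}2i^2k(n-1,e-i)$ every second argument $e-i$ is less than $3^{h-1}$. By Lemma~\ref{LemmakIsMultiplicativeMod3} the digit $n_{h-1}$ (and any higher digit) of $n-1$ contributes only the factor $k(n_{h-1},0)=1$, so $p(n,e)\bmod 3$ depends on $n$ only through $n\bmod 3^{h-1}$. Since $n\bmod 3^{h-1}$ carries exactly the digit pattern required at level $h-1$ and $e$ is already of the prescribed form there, the inductive hypothesis applies directly and gives $p(n,e)\equiv 2\pmod 3$.

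The place to be careful is the first case: the whole argument hinges on the hypothesis $n_1=\cdots=n_{h-1}=2$ guaranteeing $n_{h-1}=2$, which keeps us in the clean first branch of Lemma~\ref{LemmaOtherFormulaForp} with multiplier $(-1)^{n_{h-1}}=1$ and no additive correction term; were $n_{h-1}\neq 2$, the second branch would contribute the extra term $2(2-n_{h-1})\sum_{0\le i<3,\;i\not\equiv e\,(3)}k(n-1,i)$ and the clean recursion $p(n,e)\equiv p(n,e')$ would break down. The remaining verification — that reducing the congruence for $n$ from modulus $3^h$ to $3^{h-1}$ preserves both the digit pattern of $n-1$ and the prescribed form of $e'$ (respectively $e$) — is routine bookkeeping.
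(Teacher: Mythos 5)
Your proof is correct and follows essentially the same route as the paper's: both anchor at $p(n,1)=2k(n-1,0)=2$ and then repeatedly strip the leading digit $2$ of $e$ via the first branch of Lemma~\ref{LemmaOtherFormulaForp}, using the hypothesis that the corresponding digit of $n-1$ equals $2$ to get sign $(-1)^2=1$ and no correction term. The only cosmetic difference is that you induct on the digit-length $h$ (adding a trivial extra case for a leading zero digit of $e$), whereas the paper inducts on $e$ itself, writing $e=2\cdot 3^{t-1}+e'$ with $t-1$ the position of the top nonzero digit, which skips the zero-digit case automatically.
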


\begin{proof}

Let $\mathcal{E}=\{1+\sum_{i=1}^{h-1}e_i 3^i : h\geq 0, e_1, \cdots, e_{h-1} \in \{0,2\} \}$. We prove this lemma by induction on $e \in \mathcal{E}$. If $e=1$ then $p(n,1)=2k(n-1,0)=2$. Now we suppose that $e\in \mathcal{E}$ with $e>1$ and write $e=2\cdot 3^{t-1}+e'$ with $1<t\leq h$ and $0\leq e' < 3^{t-1}$. It is clear that $e'\in \mathcal{E}$ and $e'<e$. Thus, by inductive hypothesis we have $p(n,e')\equiv 2 \pmod{3}$. Since $n-1= \sum_{i=0}^{t-1}n_i 3^i \pmod{3^t}$ with $n_{t-1}=2$ (because $t\leq h$) and $n_i \in \{0,1,2\}$ for $0\leq i <t-1$, we can use Lemma \ref{LemmaOtherFormulaForp} together with the inductive hypothesis to obtain:
$$ p(n,e)\equiv (-1)^{n_{t-1}}p(n,e')= p(n,e') \equiv 2 \pmod{3}. $$
\end{proof}


\begin{proposition}\label{PropZGempty2}
If the base-$3$ representation of $e$ contains exactly one digit $1$ and it is in the unit place then $p(n,e)\not\equiv 0 \pmod{3}$ for all $n\geq 1$. In particular $ZG(e)=\emptyset$.
\end{proposition}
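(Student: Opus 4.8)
The plan is to induct on the special radii $e$ whose base-$3$ expansion has its unique digit $1$ in the units place, i.e.\ $e=1+\sum_{i\geq 1}e_i3^i$ with each $e_i\in\{0,2\}$. The base case is $e=1$, where $p(n,1)=2\,k(n-1,0)=2\not\equiv 0\pmod 3$ for every $n$. For the inductive step, since $e>1$ its highest nonzero base-$3$ digit sits at some position $t-1\geq 1$ and must equal $2$ (the only digit $1$ is in the units place), so I write $e=2\cdot 3^{t-1}+e'$ with $e'=\sum_{i=0}^{t-2}e_i3^i<3^{t-1}$. Then $e'$ is again of the special form and $e'<e$, so the inductive hypothesis yields $p(n,e')\not\equiv 0\pmod 3$ for all $n$. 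By Lemma~\ref{LemmakIsMultiplicativeMod3}, $p(n,e)\bmod 3$ depends only on $n$ modulo $3^{t}$, so it suffices to fix the low base-$3$ digits $n_0,\dots,n_{t-1}$ of $n-1$ and argue for each digit pattern.

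Next I would invoke Lemma~\ref{LemmaOtherFormulaForp} with $h=t$, which splits the argument according to these digits. In its first case ($n_{t-1}=2$, or $n_i\neq 2$ for some $i$ with $1\leq i\leq t-2$) the lemma gives $p(n,e)\equiv(-1)^{n_{t-1}}p(n,e')\pmod 3$; since $(-1)^{n_{t-1}}$ is a unit modulo $3$ and $p(n,e')\not\equiv 0$ by induction, we get $p(n,e)\not\equiv 0\pmod 3$ immediately. This case is essentially free.

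The main obstacle is the second case of Lemma~\ref{LemmaOtherFormulaForp}, namely $n_{t-1}\neq 2$ and $n_1=\cdots=n_{t-2}=2$, where a correction term survives:
\[
p(n,e)\equiv(-1)^{n_{t-1}}p(n,e')+2(2-n_{t-1})\sum_{\substack{0\leq i<3\\ i\not\equiv e\ (\mathrm{mod}\ 3)}}k(n-1,i)\pmod 3.
\]
The key point is that in exactly this case the digits $n_1,\dots,n_{t-2}$ are all $2$, so Lemma~\ref{Lemmap2mod3} (applied with $h=t-1$, and directly when $e'=1$ since then $p(n,1)=2$) pins down $p(n,e')\equiv 2\pmod 3$. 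For the correction term, since $e\equiv 1\pmod 3$ the inner sum runs over $i\in\{0,2\}$, and using $k(n-1,0)=1$ together with $k(n-1,2)\equiv(-1)^{n_0}\pmod 3$ (as in Lemma~\ref{LemmaFormulaForp}) it reduces to $1+(-1)^{n_0}$, which is $2$ when $n_0\in\{0,2\}$ and $0$ when $n_0=1$. Substituting these values and letting $n_{t-1}$ range over $\{0,1\}$ leaves only a short finite check: in each resulting subcase $p(n,e)$ evaluates to $1$ or $2$ modulo $3$, hence never vanishes. This completes the induction, so $p(n,e)\not\equiv 0\pmod 3$ for all $n\geq 1$; in particular the second congruence of the Zhang-Ge system~(\ref{EqZhangGesystem}) fails for every $n$, whence $\mathrm{ZG}(e)=\emptyset$.
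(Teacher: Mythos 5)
Your proof is correct and follows essentially the same route as the paper: induction on the special radii with the decomposition $e=2\cdot 3^{h-1}+e'$, the two-case split from Lemma~\ref{LemmaOtherFormulaForp}, Lemma~\ref{Lemmap2mod3} to pin $p(n,e')\equiv 2\pmod 3$ in the hard case, and the same final evaluation of the correction term (your reduction to $1+(-1)^{n_0}$ agrees with the paper's polynomial computation, and you are in fact slightly more careful in noting the $e'=1$ edge case where Lemma~\ref{Lemmap2mod3} is replaced by the direct value $p(n,1)=2$).
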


\begin{proof} 

Let $n$ be a positive integer and $\mathcal{E}$ be the subset of positive integers whose base-$3$ representation contains exactly one digit $1$ and it is in the unit place. Let $n-1=\sum_{i=0}^{\infty}n_i 3^i$ be the base-$3$ representation of $n-1$ with each $n_i \in \{0,1,2\}$. To prove this lemma we proceed again by induction on $e\in \mathcal{E}$. If $e=1$ we have $p(n,e)=2\not\equiv 0 \pmod{3}$. Now we consider $e\in \mathcal{E}$ with $e>1$ and write $e=2\cdot 3^{h-1}+e'$ with $0\leq e' < 3^{h-1}$ and $h>1$. It is clear that $e'\in \mathcal{E}$ and $e'<e$. Thus, by inductive hypothesis we can assume $p(n,e')\not\equiv 0 \pmod{3}$. We note that $n-1\equiv \sum_{i=0}^{h-1}n_i 3^i \pmod{3^h}$ with each $n_i\in\{0,1,2\}$ and consider two cases. If $n_{h-1}=2$ or $n_i\neq 2$ for some $i\in\{1,\ldots, h-2\}$ we apply Lemma \ref{LemmaOtherFormulaForp} to obtain $p(n,e)\equiv (-1)^{n_{h-1}} p(n,e')\not\equiv 0 \pmod{3}$. If $n_{h-1}\in\{0,1\}$ and $n_{1}=\cdots= n_{h-2}=2$, by Lemma \ref{Lemmap2mod3} we have $p(n,e')\equiv 2 \pmod{3}$ and by Lemma \ref{LemmaOtherFormulaForp} we have:
\begin{align*}
 p(n,e)\equiv & (-1)^{n_{h-1}}\cdot p(n,e')+ 2(2-n_{h-1})(k(n-1,0)+k(n-1,2))\\ \equiv &  (-1)^{n_{h-1}}\cdot p(n,e')+(-1)^{n_{h-1}}(1+2n^2-2n+1)\\ =& (-1)^{n_{h-1}} (p(n,e')+2n^2-2n+2)\\ \equiv & (-1)^{n_{h-1}} (2n_0^2 -2 n_0+1) \not\equiv 0 \pmod{3}.
\end{align*} 

\end{proof}

By convenience, we define the following function.

\begin{definition}\label{DefDelta3}
For $n\geq 1$ we consider its base-$3$ representation $n=\sum_{i=0}^{h-1} n_i 3^i$ with each $n_i\in\{0,1,2\}$. The function $\delta_3 : \Z^{+}  \rightarrow \N \cup\{\infty\}$ is given by $$\delta_3(n)=\left\{ \begin{array}{ll}
\max\{i: n_i=1\} & \textrm{if $n_i=1$ for some $i\geq 0$},\\
\infty & \textrm{if $n_i\neq 1$ for all $i\geq 0$}.
\end{array} \right.$$
\end{definition}

The following corollary is a consequence of Propositions \ref{PropZGempty} and \ref{PropZGempty2}.

\begin{corollary}\label{CorZempty}
Let $e\geq 1$. If $\delta_3(e)=0$ or $\delta_3(e)=\infty$ then $\mbox{ZG}(e)=\emptyset$.
\end{corollary}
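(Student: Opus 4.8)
The plan is to observe that this corollary follows immediately by combining Propositions \ref{PropZGempty} and \ref{PropZGempty2} with the definition of the Zhang-Ge condition, once I translate the two hypotheses on $\delta_3(e)$ into the language of base-$3$ digits. Recall that membership $n \in \mbox{ZG}(e)$ requires the \emph{conjunction} of the two congruences in system (\ref{EqZhangGesystem}): both $k(n,e) \equiv 3$ or $6 \pmod 9$ and $p(n,e) \equiv 0 \pmod 3$ must hold. Thus, to show $\mbox{ZG}(e) = \emptyset$ it suffices to show that, for every $n \geq 1$, at least one of these two congruences fails.

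First I would handle the case $\delta_3(e) = \infty$. By Definition \ref{DefDelta3}, this means precisely that no base-$3$ digit of $e$ equals $1$; that is, the base-$3$ representation of $e$ contains no digit $1$. This is exactly the hypothesis of Proposition \ref{PropZGempty}, which gives $k(n,e) \not\equiv 0 \pmod 3$ for all $n \geq 1$. In particular $k(n,e)$ can never be congruent to $3$ or $6$ modulo $9$ (both of which are divisible by $3$), so the first congruence of (\ref{EqZhangGesystem}) fails for every $n$, and no $n$ lies in $\mbox{ZG}(e)$.

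Next I would treat the case $\delta_3(e) = 0$. By Definition \ref{DefDelta3}, $\delta_3(e) = 0$ means that $\max\{i : e_i = 1\} = 0$: there is at least one index whose digit is $1$, and the largest such index is $0$. Hence $e$ has exactly one digit equal to $1$ and it occupies the unit place. This is precisely the hypothesis of Proposition \ref{PropZGempty2}, which yields $p(n,e) \not\equiv 0 \pmod 3$ for all $n \geq 1$. Consequently the second congruence of (\ref{EqZhangGesystem}) fails for every $n$, so again $\mbox{ZG}(e) = \emptyset$.

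Since both stated cases have been disposed of, the corollary follows. There is no genuine obstacle here: the two propositions do all the analytic work, and the only point requiring care is the correct reading of the $\delta_3$ values as statements about the base-$3$ digits of $e$ (in particular, noting that $\delta_3(e) = 0$ forces a \emph{unique} digit $1$, located in the unit place, so that Proposition \ref{PropZGempty2} applies verbatim).
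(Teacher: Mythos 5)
Your proposal is correct and matches the paper's approach exactly: the paper derives this corollary directly from Propositions \ref{PropZGempty} and \ref{PropZGempty2}, with the translation between the $\delta_3$ values and the base-$3$ digit conditions (in particular that $\delta_3(e)=0$ forces a unique digit $1$ in the unit place) left implicit, which you have simply spelled out.
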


\subsection{The Davis-Webb theorem and congruence formulas for $k(n,e)$ and $p(n,e)$}

We proved that $\mbox{ZG}(e)=\emptyset$ if $\delta_3(e)\in\{0,\infty\}$. In the next section we prove that $\mbox{ZG}(e)\neq\emptyset$ if $0<\delta_3(e)<\infty$. One ingredient of the proof is a generalization of Lucas' Theorem on binomial coefficients given by Davis and Webb in \cite{DW90}. For $p$ prime and $0\leq a,b <p$ the Davis-Webb symbol is defined by $\simb{a}{b}=\binom{a}{b}$ if $a\geq b$ and $\simb{a}{b}=p$ if $a<b$ (we note that in this case $\binom{a}{b}=0$). For $0\leq a,b,c,d <p$ the Davis-Webb symbol is defined by $\simb{a,b}{c,d}=\binom{ap+b}{cp+d}$ if $ap+b\geq cp+d$ and $\simb{a,b}{c,d}=p\simb{b}{d}$ otherwise.

\begin{theorem}[{\cite[Theorem 1]{DW90}}]\label{DavisWebbTheorem} Let $p$ be a prime number and $a\geq b$ be natural numbers. If $a=\sum_{i=0}^{m-1}a_ip^i$ and $b=\sum_{i=0}^{m-1}b_i p^i$ with $m\geq 2$, $0\leq a_i,b_i<p$ and $a_{m-1}> 0$, then 
\begin{equation}\label{EqDavisWebb}
\binom{a}{b}\equiv    \simb{a_{m-1},a_{m-2}}{b_{m-1},b_{m-2}}\cdot   \left( \prod_{i=1}^{m-2}\simb{a_{i}}{b_{i}}^{-1} \cdot  \simb{a_{i},a_{i-1}}{b_{i},b_{i-1}}\right) \pmod{p^2}.
\end{equation}
\end{theorem}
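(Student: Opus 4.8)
The plan is to work with generating functions in the ring $(\Z/p^2\Z)[x]$ and to read off $\binom{a}{b}$ as the coefficient of $x^b$ in $(1+x)^a=\prod_{i=0}^{m-1}\big((1+x)^{p^i}\big)^{a_i}$. The engine of the argument is a refined Frobenius congruence: writing $h(x)=\tfrac1p\big((1+x)^p-1-x^p\big)\in\Z[x]$, I claim that
$$(1+x)^{p^i}\equiv 1+x^{p^i}+p\,h\big(x^{p^{i-1}}\big)\pmod{p^2}\qquad(i\geq1).$$
The case $i=1$ is the definition of $h$, and the inductive step follows by raising $1+x^{p^{i-1}}+p\,h(x^{p^{i-2}})$ to the $p$-th power: every term of the binomial expansion carrying a factor $p\,h(\cdot)$ acquires at least one further factor of $p$, so only the $p$-free part $(1+x^{p^{i-1}})^p$ survives modulo $p^2$, and to it I apply the case $i=1$ with $x$ replaced by $x^{p^{i-1}}$. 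This congruence is the structural heart of the theorem: the correction $p\,h(x^{p^{i-1}})$ couples only the two adjacent digit-levels $i$ and $i-1$, which is precisely why the Davis-Webb symbols pair consecutive digits.

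Next I would substitute this into the product and expand modulo $p^2$. Each factor with $i\geq1$ becomes $(1+x^{p^i})^{a_i}+p\,a_i(1+x^{p^i})^{a_i-1}h(x^{p^{i-1}})$, and since any two correction terms multiply to a multiple of $p^2$, at most one correction survives in the full product. This yields the main Lucas term $\prod_{i}(1+x^{p^i})^{a_i}$ (with $x^{p^0}=x$ and $a_0<p$) plus a sum over $\ell=1,\dots,m-1$ of localized corrections. Extracting the coefficient of $x^{b}$, where $b=\sum_i b_i p^i$, the uniqueness of the base-$p$ representation forces every level untouched by the correction to contribute the exact integer factor $\binom{a_i}{b_i}$, so the Lucas term gives $\prod_{i=0}^{m-1}\binom{a_i}{b_i}$. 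For the correction indexed by $\ell$, the factors at levels $\ell$ and $\ell-1$ combine, after the substitution $y=x^{p^{\ell-1}}$, into $a_\ell(1+y)^{a_{\ell-1}}(1+y^p)^{a_\ell-1}h(y)$; comparing with the expansion of $(1+y)^{a_{\ell-1}+a_\ell p}$ modulo $p^2$ identifies its relevant coefficient as $\tfrac1p\big(\binom{a_\ell p+a_{\ell-1}}{b_\ell p+b_{\ell-1}}-\binom{a_\ell}{b_\ell}\binom{a_{\ell-1}}{b_{\ell-1}}\big)$.

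Assembling these pieces gives, in the generic case where every $\binom{a_i}{b_i}$ is a unit modulo $p$,
$$\binom{a}{b}\equiv P\Big(1+\sum_{\ell=1}^{m-1}\Big(\tfrac{\gamma_\ell}{\beta_\ell\beta_{\ell-1}}-1\Big)\Big)\pmod{p^2},$$
where $\beta_i=\binom{a_i}{b_i}$, $\gamma_\ell=\binom{a_\ell p+a_{\ell-1}}{b_\ell p+b_{\ell-1}}$ and $P=\prod_i\beta_i$. It then remains to recognize the right-hand side of (\ref{EqDavisWebb}) as the same quantity: writing $\gamma_\ell\equiv\beta_\ell\beta_{\ell-1}(1+p\epsilon_\ell)$ and using $\prod_\ell(1+p\epsilon_\ell)\equiv1+p\sum_\ell\epsilon_\ell$, the telescoping product $\gamma_{m-1}\prod_{i=1}^{m-2}\beta_i^{-1}\gamma_i$ collapses to exactly this expression, so the additive expansion and the multiplicative product form agree.

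The main obstacle I anticipate is the \emph{deficient} cases, which are the true content of the symbol definitions $\simb{a}{b}=p$ and $\simb{a,b}{c,d}=p\simb{b}{d}$. When some digit inequality $b_i\leq a_i$ fails, or when $b_\ell p+b_{\ell-1}>a_\ell p+a_{\ell-1}$, the corresponding $\beta_i$ or $\gamma_\ell$ vanishes and the divisions above become illegitimate. Handling these requires returning to the generating-function identity and checking directly that its coefficient extraction still returns the prescribed symbol value (a factor $p$ from a deficient single digit, or $p\simb{b}{d}$ from a deficient pair), and verifying that these substitutions preserve congruence modulo $p^2$. This case analysis, rather than the generating-function computation itself, is where I would expect to spend the most care.
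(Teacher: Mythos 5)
A preliminary point: the paper never proves this theorem --- it is imported verbatim from Davis and Webb \cite{DW90}, with only a remark and the small extension in Proposition \ref{PropDavisWebbExtension} added --- so your attempt can only be judged on its own merits, not against an in-paper argument. Judged that way, the core of your computation is correct. The refined Frobenius congruence $(1+x)^{p^i}\equiv 1+x^{p^i}+p\,h\bigl(x^{p^{i-1}}\bigr)\pmod{p^2}$ is valid (raising to the $p$-th power annihilates the correction modulo $p^2$), the expansion of $\prod_i\bigl((1+x)^{p^i}\bigr)^{a_i}$ with at most one surviving correction is right, and the coefficient extraction is legitimate: the levels away from $\ell,\ell-1$ are forced digit by digit, and the coupled levels contribute $\frac{1}{p}\bigl(\binom{a_\ell p+a_{\ell-1}}{b_\ell p+b_{\ell-1}}-\binom{a_\ell}{b_\ell}\binom{a_{\ell-1}}{b_{\ell-1}}\bigr)$, read modulo $p$. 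The resulting division-free identity $\binom{a}{b}\equiv\prod_i\beta_i+\sum_\ell(\gamma_\ell-\beta_\ell\beta_{\ell-1})\prod_{i\neq\ell,\ell-1}\beta_i\pmod{p^2}$ holds in all cases, and your telescoping comparison with the right-hand side of (\ref{EqDavisWebb}) correctly settles the generic case where every $\beta_i=\binom{a_i}{b_i}$ is a unit mod $p$.

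The genuine gap is the part you name and then defer: the deficient cases. They are not a peripheral check; they are the entire content of the conventions $\simb{a}{b}=p$ and $\simb{a,b}{c,d}=p\simb{b}{d}$, and of formal factors such as $\simb{a_i}{b_i}^{-1}=p^{-1}$, which is not invertible in $\Z/p^2\Z$ and only acquires meaning through cancellation against the adjacent pair symbol. As written, your argument proves the theorem only under the extra hypothesis $a_i\geq b_i$ for all $i$, which is strictly weaker than the statement --- and weaker than what this paper actually needs, since Lemma \ref{LemmaBinomialPeriodicity} invokes the theorem precisely in deficient situations (symbols equal to $9=p^2$) to conclude $\binom{a}{b}\equiv 0\pmod{p^2}$. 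To close the gap you must run the deficiency patterns through your additive identity and match them against the product form: a single deficient digit $j$ with $a_{j+1}>b_{j+1}$, where $\beta_j=\gamma_j=0$ and both sides reduce to $\gamma_{j+1}\prod_{i\neq j,j+1}\beta_i$; a single deficient digit with $a_{j+1}=b_{j+1}$, where the pair symbol at level $j+1$ equals $p\simb{a_j}{b_j}=p^2$ and both sides must be shown to vanish mod $p^2$; and two or more separated deficiencies, where again both sides vanish. These verifications do succeed, and your additive identity is exactly the right tool for them, but until they are written out the theorem --- whose whole point, beyond Lucas, is the bookkeeping of these degenerate symbols --- is not proved.
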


\begin{remark}
If $0\leq b_i \leq a_i <p$, then $\simb{a_{i}}{b_{i}}=\binom{a_{i}}{b_{i}}$ is coprime with $p$ and $\simb{a_{i}}{b_{i}}^{-1} \cdot  \simb{a_{i},a_{i-1}}{b_{i},b_{i-1}}$ has no $p$ in the denominator. If $0\leq a_i < b_i <p$, then $\simb{a_{i}}{b_{i}}^{-1} \cdot  \simb{a_{i},a_{i-1}}{b_{i},b_{i-1}} = \simb{a_{i-1}}{b_{i-1}}$ is an integer. Therefore, by Equation (\ref{EqDavisWebb}), if $\simb{a_{m-1},a_{m-2}}{b_{m-1},b_{m-2}}=p^2$ or $\simb{a_{i}}{b_{i}}^{-1} \cdot  \simb{a_{i},a_{i-1}}{b_{i},b_{i-1}}\equiv 0 \pmod{p^2}$ for some $i\in\{1,\ldots,m-2\}$ then $\binom{a}{b}\equiv 0 \pmod{p^2}$.
\end{remark}

In general Equation (\ref{EqDavisWebb}) does not hold when $a<b$. For example if $a=p^2+p+1$ and $b=2p^2+p+1$ then the right hand side of Equation  (\ref{EqDavisWebb}) is $\simb{1,1}{2,1} \cdot \simb{1}{1}^{-1}\simb{1,1}{1,1} = p$ but $\binom{a}{b}=0$. The next simple result will be used in the next section and it is true even if $a<b$.

\begin{proposition}\label{PropDavisWebbExtension}
Let $p$ be a prime number, $a=\sum_{i=0}^{m-1}a_i p^i$ and $b=\sum_{i=0}^{m-1}b_i p^i$ with $m\geq 2$, $0\leq a_i,b_i<p$ and $a_{m-1}=b_{m-1}$, then Equation (\ref{EqDavisWebb}) holds. 
\end{proposition}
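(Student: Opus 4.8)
The plan is to prove the statement by induction on the number of digits $m$, reducing in every step to the original Davis--Webb theorem (Theorem~\ref{DavisWebbTheorem}). Throughout I will lean on the observations collected in the remark following Theorem~\ref{DavisWebbTheorem}: each factor $\simb{a_i}{b_i}^{-1}\simb{a_i,a_{i-1}}{b_i,b_{i-1}}$ occurring in the product in (\ref{EqDavisWebb}), as well as the leading symbol $\simb{a_{m-1},a_{m-2}}{b_{m-1},b_{m-2}}$, is a $p$-adic integer. Hence the right-hand side of (\ref{EqDavisWebb}), which I denote $F_m$, is a $p$-adic integer whose $p$-valuation $v_p(F_m)$ is at least the valuation of any single one of its factors. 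The base case $m=2$ is a direct check: if $a_1=b_1$ and $a\ge b$ then $\simb{a_1,a_0}{b_1,b_0}=\binom{a}{b}$ exactly, whereas if $a<b$ then $a_0<b_0$ forces $\simb{a_1,a_0}{b_1,b_0}=p\,\simb{a_0}{b_0}=p^2\equiv 0\equiv\binom ab$.

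For the inductive step I split on the comparison of $a$ and $b$. When $a\ge b$ there are two subcases. If the common leading digit is nonzero, $a_{m-1}=b_{m-1}>0$, then Theorem~\ref{DavisWebbTheorem} applies verbatim. If $a_{m-1}=b_{m-1}=0$, then $a=a'$ and $b=b'$, where $a',b'$ are obtained by dropping the top digit, and $a\ge b$ forces $a_{m-2}\ge b_{m-2}$. Peeling the top factor gives $F_m=\bigl(\simb{a_{m-1},a_{m-2}}{b_{m-1},b_{m-2}}\,\simb{a_{m-2}}{b_{m-2}}^{-1}\bigr)F_{m-1}$, and since $\simb{0,a_{m-2}}{0,b_{m-2}}=\binom{a_{m-2}}{b_{m-2}}=\simb{a_{m-2}}{b_{m-2}}$ the bracket equals $1$, so $F_m=F_{m-1}$. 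Applying the inductive hypothesis (if $a_{m-2}=b_{m-2}$) or Theorem~\ref{DavisWebbTheorem} directly (if $a_{m-2}>b_{m-2}$, so the new leading digit is positive) yields $F_{m-1}\equiv\binom{a'}{b'}=\binom ab$.

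The more delicate case is $a<b$, where $\binom ab=0$ and I must show $F_m\equiv 0\pmod{p^2}$. Let $k\le m-2$ be the largest index at which the digits of $a$ and $b$ differ; since the top digits agree and $a<b$, one has $a_k<b_k$ and $a_j=b_j$ for $k<j\le m-1$. If $k=m-2$, the leading symbol already carries valuation $2$, because $\simb{a_{m-1},a_{m-2}}{b_{m-1},b_{m-2}}=p\,\simb{a_{m-2}}{b_{m-2}}=p^2$. If $k\le m-3$, I examine instead the factor at index $i=k+1$: there $a_{k+1}=b_{k+1}$ while $a_k<b_k$, so $\simb{a_{k+1}}{b_{k+1}}^{-1}\simb{a_{k+1},a_k}{b_{k+1},b_k}=p\,\simb{a_k}{b_k}=p^2$. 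In either situation a single factor of $F_m$ has $p$-valuation $2$, and as the remaining factors are $p$-adic integers I conclude $v_p(F_m)\ge 2$, i.e. $F_m\equiv 0\pmod{p^2}$.

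The main obstacle I anticipate is bookkeeping rather than conceptual. One must track the symbols carefully at the boundary $a_i=b_i$, where the defining inequality $a_ip+a_{i-1}\ge b_ip+b_{i-1}$ fails precisely when $a_{i-1}<b_{i-1}$; this is exactly what produces the extra factor of $p$ driving the $a<b$ analysis, and a naive count treating $a_i\ge b_i$ uniformly would wrongly give valuation $1$ instead of $2$. I would therefore record the per-factor valuations as a short case table before assembling them, and make sure the peeling identity $F_m=F_{m-1}$ invokes only $a_{m-1}=b_{m-1}$ together with $a_{m-2}\ge b_{m-2}$, which is exactly what the $a\ge b$ branch supplies.
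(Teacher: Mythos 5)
Your proof is correct, and its core --- the treatment of the case $a<b$ --- is exactly the paper's argument: locate the highest index at which the digits of $a$ and $b$ differ (necessarily with the digit of $a$ smaller), note that the corresponding factor of the right-hand side of (\ref{EqDavisWebb}) equals $p^2$ (the leading symbol $\simb{a_{m-1},a_{m-2}}{b_{m-1},b_{m-2}}$ if the differing index is $m-2$, and $\simb{a_{k+1}}{b_{k+1}}^{-1}\simb{a_{k+1},a_{k}}{b_{k+1},b_{k}}=p^2$ otherwise), and conclude from the $p$-integrality of the remaining factors recorded in the remark after Theorem \ref{DavisWebbTheorem}. Where you genuinely add something is the case $a\geq b$: the paper disposes of it with the single phrase that Theorem \ref{DavisWebbTheorem} makes it enough to treat $a<b$, which is not quite licit when $a_{m-1}=b_{m-1}=0$, because the theorem's hypothesis $a_{m-1}>0$ then fails --- and the proposition is in fact invoked with zero leading digits (e.g.\ in the example that follows it, and one cannot simply pad or strip zeros for free, as the paper's own counterexample for $a<b$ shows). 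Your peeling identity $F_m=\bigl(\simb{a_{m-1},a_{m-2}}{b_{m-1},b_{m-2}}\simb{a_{m-2}}{b_{m-2}}^{-1}\bigr)F_{m-1}$, together with the observation that the bracket equals $1$ when $a_{m-1}=b_{m-1}=0$ and $a\geq b$ (so that $a_{m-2}\geq b_{m-2}$), is precisely the justification that stripping a leading zero does not change the right-hand side modulo $p^2$ in that case; the induction on $m$ then closes the argument cleanly. So your proposal is the paper's proof plus a repair of this small gap, at the cost only of the inductive bookkeeping.
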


\begin{proof}
By Theorem \ref{DavisWebbTheorem}, it is enough to prove that 
\begin{equation}\label{Eqa}
\simb{a_{m-1},a_{m-2}}{a_{m-1},b_{m-2}}\cdot   \left( \prod_{i=1}^{m-2}\simb{a_{i}}{b_{i}}^{-1} \cdot  \simb{a_{i},a_{i-1}}{b_{i},b_{i-1}}\right) \equiv 0 \pmod{p^2}
\end{equation} whenever $a<b$. If $b_{m-2}>a_{m-2}$ we have $\simb{a_{m-1},a_{m-2}}{a_{m-1},b_{m-2}}=p^2$ and Equation (\ref{Eqa}) holds. Otherwise, there is an integer $i$, $1\leq i \leq m-2$ such that $a_{m-1}=b_{m-1}, a_{m-2}=b_{m-2},\ldots, a_{i}=b_{i}$ and $a_{i-1}<b_{i-1}$. In this case we have $\simb{a_{i}}{b_{i}}^{-1} \cdot  \simb{a_{i},a_{i-1}}{b_{i},b_{i-1}} = \simb{a_{i},a_{i-1}}{a_{i},b_{i-1}} = p^2$ and Equation (\ref{Eqa}) holds.
\end{proof}

\begin{example}
Let $p$ be a prime number, $a=p^2+p+1$ and $b=2p^2+p+1$. Since $a=\sum_{i=0}^{3}a_ip^i$ and $b=\sum_{i=0}^{3}b_i p^i$ with $a_3=b_3=0$ (and $a_2=a_1=a_0=b_1=b_0=1, b_2=2$) we can apply Proposition \ref{PropDavisWebbExtension} to obtain $\binom{a}{b}\equiv \simb{0,1}{0,2} \cdot \simb{1}{2}^{-1}\simb{1,1}{2,1} \cdot \simb{1}{1}^{-1}\simb{1,1}{1,1} \pmod{p^2}$.
\end{example}

Next we apply Theorem \ref{DavisWebbTheorem} and Proposition \ref{PropDavisWebbExtension} to deduce some congruences for $k(n,e)$ and $p(n,e)$. Then, we use these congruences to prove that if a Zhang-Ge set $\mbox{ZG}(e)$ contains an element of special type, it contains infinitely many elements (Corollary \ref{CorEmptyOrInfinite}).

\begin{lemma}\label{LemmaBinomialPeriodicity}
Let $n,m,h$ and $a$ be positive integers satisfying $h\geq m+2$ and  $3^{m+1}\leq n<2\cdot 3^{m+1}$. If $i<3^h$, then $\binom{3^{h+1}a+n}{i}\equiv \binom{n}{i} \pmod{9}$.
\end{lemma}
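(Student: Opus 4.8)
The plan is to sidestep the digit-by-digit bookkeeping with Davis--Webb symbols and instead argue with generating functions in the ring $(\Z/9\Z)[x]$, where two integer polynomials are congruent precisely when all their coefficients agree modulo $9$, and where congruences are preserved under products and powers. Writing $N=3^{h+1}a+n$, the factorization $(1+x)^{N}=(1+x)^{3^{h+1}a}\,(1+x)^{n}$ reduces the claim to understanding $(1+x)^{3^{h+1}a}$ modulo $9$ and then reading off the coefficient of $x^{i}$.

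The key step is a lifting of the Frobenius relation to modulus $9$: I would first prove that
\begin{equation*}
(1+x)^{3^{k}}\equiv \bigl(1+x^{3^{k-1}}\bigr)^{3}\pmod{9}\qquad(k\geq 2).
\end{equation*}
The base case $k=2$ is the direct verification $(1+x)^{9}\equiv 1+3x^{3}+3x^{6}+x^{9}=(1+x^{3})^{3}\pmod 9$, the point being that the ``middle'' coefficients $\binom{9}{j}$ with $j\notin\{0,3,6,9\}$ are all divisible by $9$. For the inductive step I would cube the relation for $k$ — legitimate since congruences modulo $9$ survive exponentiation in $(\Z/9\Z)[x]$ — to obtain $(1+x)^{3^{k+1}}\equiv(1+x^{3^{k-1}})^{9}\pmod 9$, and then apply the base case with $x^{3^{k-1}}$ in place of $x$ to rewrite the right-hand side as $(1+x^{3^{k}})^{3}$.

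Applying this identity with $k=h+1\geq 2$ and raising to the $a$-th power gives
\begin{equation*}
(1+x)^{3^{h+1}a}=\bigl((1+x)^{3^{h+1}}\bigr)^{a}\equiv\bigl(1+x^{3^{h}}\bigr)^{3a}\pmod 9 .
\end{equation*}
Crucially, every monomial of $(1+x^{3^{h}})^{3a}=\sum_{j}\binom{3a}{j}x^{3^{h}j}$ has degree a multiple of $3^{h}$, so in the range $i<3^{h}$ only its constant term $1$ can contribute to the coefficient of $x^{i}$ in the product $(1+x^{3^{h}})^{3a}(1+x)^{n}$. Extracting that coefficient leaves exactly $[x^{i}](1+x)^{n}=\binom{n}{i}$, which yields $\binom{3^{h+1}a+n}{i}\equiv\binom{n}{i}\pmod 9$.

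I expect the only real obstacle to be the careful justification of the lifted Frobenius identity together with the (routine) fact that polynomial congruences modulo $9$ are stable under exponentiation; once these are in place, the coefficient extraction is immediate. It is worth noting that this argument uses neither $h\geq m+2$ nor $3^{m+1}\leq n<2\cdot3^{m+1}$ (only $h\geq 1$ and $i<3^{h}$), so those extra constraints on $n$ appear to be imposed for the later application rather than being needed for the congruence itself. Alternatively, one could prove the statement by a factor-by-factor comparison of the Davis--Webb products for $\binom{3^{h+1}a+n}{i}$ and $\binom{n}{i}$ via Theorem \ref{DavisWebbTheorem} and Proposition \ref{PropDavisWebbExtension}, but that route is considerably more computation-heavy.
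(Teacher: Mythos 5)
Your proof is correct, but it takes a genuinely different route from the paper's. The paper proves this lemma with the Davis--Webb machinery: it expands $3^{h+1}a+n$ and $i$ in base $3$, splits into the cases $2\cdot 3^{m+1}\le i<3^h$ and $i<2\cdot 3^{m+1}$, and compares the Davis--Webb products for $\binom{3^{h+1}a+n}{i}$ and $\binom{n}{i}$ factor by factor via Theorem \ref{DavisWebbTheorem} and Proposition \ref{PropDavisWebbExtension}; this is precisely where the hypotheses $h\ge m+2$ and $3^{m+1}\le n<2\cdot 3^{m+1}$ enter, since they force the digit pattern $n_{m+1}=1$, $n_j=0$ for $m+2\le j\le h$ on which the comparison rests. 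Your argument replaces all of that by the lifted Frobenius identity $(1+x)^{3^k}\equiv\bigl(1+x^{3^{k-1}}\bigr)^{3}\pmod{9}$ for $k\ge 2$, and your justification of it is sound: the base case is the explicit check that $\binom{9}{j}\equiv 0\pmod 9$ for $j\notin\{0,3,6,9\}$, and the induction uses only that coefficient-wise congruences in $\Z[x]$ are preserved under products and under the substitution $x\mapsto x^{3^{k-1}}$, both of which are immediate. The coefficient extraction is likewise correct, since every nonconstant monomial of $\bigl(1+x^{3^h}\bigr)^{3a}$ has degree at least $3^h>i$. Your closing observation is accurate and worth stressing: the digit hypotheses on $n$ are never used, so you actually prove the stronger statement that $\binom{3^{h+1}a+n}{i}\equiv\binom{n}{i}\pmod 9$ for all positive integers $n,a$, all $h\ge 1$ and all $i<3^h$; in the paper these hypotheses are an artifact of the Davis--Webb proof and of the intended application in Proposition \ref{PropPeriodicityForkANDp}. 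What the paper's heavier route buys is uniformity of toolkit: the Davis--Webb symbols are needed anyway in Section \ref{SectionMain} for finer digit-level evaluations (the computation of $\binom{3^{m+1}+3^m}{i}\bmod 9$ and Proposition \ref{PropFirstCondition}), where one needs the actual residues $3,4,6$ and not merely a periodicity statement. What your route buys is economy and generality: it is self-contained, avoids case analysis, and since $(1+x)^{p^k}\equiv\bigl(1+x^{p^{k-1}}\bigr)^{p}\pmod{p^2}$ holds for every prime $p$ by the same argument, it would transfer verbatim to the mod-$p^2$ setting relevant to Section \ref{SectionBeyond}.
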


\begin{proof}
Let $a\in \Z^{+}$ and write $3^{h+1}a+n = \sum_{j=0}^{h+k}n_j 3^{j}$ and $i=\sum_{j=0}^{h+k}i_j 3^j$ with $i_j,n_j\in\{0,1,2\}$, for some $k\geq 1$ and $n_{h+k}\neq 0$. By our hypothesis we have $n_{m+1}=1$, $n_j=0$ for $m+2\leq j \leq h$ and $i_j=0$ for $h\leq j \leq h+k$. First we suppose $2\cdot 3^{m+1}\leq i <3^h$ (in particular $n<i$) and consider the set of indices $J=\{j: m+2\leq j\leq h-1, i_j\neq 0\}$. If the set $J$ is non-empty we consider $j=\max J$. Since $n_{j+1}=n_j=i_{j+1}=0$, we have $\simb{n_{j+1}}{i_{j+1}}^{-1}\cdot \simb{n_{j+1},n_{j}}{i_{j+1},\ i_{j}} = \simb{0,0}{0,i_{j}} =9$. Then, by Theorem \ref{DavisWebbTheorem} we conclude that $\binom{3^{h+1}a+n}{i}\equiv 0 = \binom{n}{i} \pmod{9}$. If the set $J$ is empty we have $i<3^{m+2}$ (because $i<3^{h}$). Since $n_{m+2}=i_{m+2}=0$, $n_{m+1}=1$ and $i_{m+1}=2$ (because $2\cdot 3^{m+1}\leq i< 3^{m+2}$), we have $\simb{n_{m+2}}{i_{m+2}}^{-1}\cdot \simb{n_{m+2},n_{m+1}}{i_{m+2},\ i_{m+1}} = \simb{0,1}{0,2}=9$ and by Theorem \ref{DavisWebbTheorem} we conclude that $\binom{3^{h+1}a+n}{i}\equiv 0 = \binom{n}{i} \pmod{9}$. Now we suppose $i< 2\cdot 3^{m+1}$, this is equivalent to $i_j=0$ for $j>m+1$ and $i_{m+1}\leq 1 = n_{m+1}$. Applying Theorem \ref{DavisWebbTheorem} we have 
\begin{align*}
\binom{3^{h+1}a+n}{i} \equiv & \simb{n_{h+k},n_{h+k-1}}{0,0} \left( \prod_{j=m+3}^{h+k-1} \simb{n_j}{0}^{-1} \simb{n_j, n_{j-1}}{0,0} \right)\cdot \simb{0}{0}^{-1} \simb{0,n_{m+1}}{0,i_{m+1}}  \\  & \cdot \simb{n_{m+1}}{i_{m+1}}^{-1} \simb{n_{m+1},n_m}{i_{m+1},i_{m}}\cdot \left(  \prod_{j=1}^{m}\simb{n_j}{i_j}^{-1} \simb{n_j,n_{j-1}}{i_j, i_{j-1}}\right) \pmod{9}.
\end{align*}
We have $\simb{n_{h+k},n_{h+k-1}}{0,0} \left( \prod_{j=m+3}^{h+k-1} \simb{n_j}{0}^{-1} \simb{n_j, n_{j-1}}{0,0} \right)\cdot \simb{0}{0}^{-1}=1$, because all its terms are equal to $1$. Since $0 \leq i_{m+1}\leq n_{m+1}\leq 1$, we have $\simb{0,n_{m+1}}{0,i_{m+1}}=\simb{n_{m+1}}{i_{m+1}} =1$. Thus 
$$ \binom{3^{h+1}a+n}{i} \equiv   \simb{n_{m+1},n_m}{i_{m+1},i_{m}}\cdot \left(  \prod_{j=1}^{m}\simb{n_j}{i_j}^{-1} \simb{n_j,n_{j-1}}{i_j, i_{j-1}}\right) \equiv \binom{n}{i} \pmod{9},$$ where in the last congruence we use Theorem \ref{DavisWebbTheorem} if $i\leq n$ or Proposition \ref{PropDavisWebbExtension} if $i>n$ (since in this case we have $i_{m+1}=n_{m+1}=1$).
\end{proof}

\begin{proposition}\label{PropPeriodicityForkANDp}
Let $m$ be a natural number and $n,e$ and $h$ be positive integers such that $3^{m+1}\leq n<2\cdot 3^{m+1}$, $e<3^h$ and $h\geq m+2$. Then, the following congruences hold for every $a\geq 1$:
\begin{itemize}
\item[(i)] $k(3^{h+1}a+n,e)\equiv k(n,e) \pmod{9}$;
\item[(ii)] $p(3^{h}a+n,e)\equiv p(n,e) \pmod{3}$. 
\end{itemize}  
\end{proposition}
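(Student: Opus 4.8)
The plan is to treat the two congruences separately, reducing each to the relevant periodicity statement for the arithmetic building blocks of $k$: Lemma~\ref{LemmaBinomialPeriodicity} (the Davis--Webb mod-$9$ periodicity of binomial coefficients) for part~(i), and Lemma~\ref{LemmakIsMultiplicativeMod3} (Lucas mod $3$) for part~(ii). In both cases the substantive work has already been done in those lemmas, so the proposition is essentially an assembly step, with the care concentrated in matching summation ranges to the lemmas' hypotheses.

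For part~(i), I would start from Equation~(\ref{EqForkne1}) with the choice $N=3^h-1$. This choice is legitimate for both arguments $n$ and $3^{h+1}a+n$: since $e<3^h$ we have $\min\{n,e\}\le e\le 3^h-1$ and likewise $\min\{3^{h+1}a+n,e\}=e\le 3^h-1$, so both $k(n,e)$ and $k(3^{h+1}a+n,e)$ equal the finite sum $\sum_{i=0}^{3^h-1}2^i\binom{\cdot}{i}\binom{e}{i}$. Every summation index then satisfies $i<3^h$, and the hypotheses $3^{m+1}\le n<2\cdot 3^{m+1}$ and $h\ge m+2$ are exactly those of Lemma~\ref{LemmaBinomialPeriodicity}, which therefore gives $\binom{3^{h+1}a+n}{i}\equiv\binom{n}{i}\pmod 9$ for each such $i$. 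Multiplying by the common factor $2^i\binom{e}{i}$ and summing over $i$ from $0$ to $3^h-1$ yields $k(3^{h+1}a+n,e)\equiv k(n,e)\pmod 9$.

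For part~(ii), I would apply the convolution formula~(\ref{EqFormulaForp}), namely $p(M,e)=\sum_{i=0}^{e}2i^2\,k(M-1,e-i)$, with $M=n$ and with $M=3^h a+n$. It then suffices to establish the termwise congruence $k(3^h a+n-1,\,e-i)\equiv k(n-1,\,e-i)\pmod 3$ for each $0\le i\le e$, and then sum against the weights $2i^2$. For the termwise congruence I would invoke Lemma~\ref{LemmakIsMultiplicativeMod3}: because $h\ge m+2$ forces $n<2\cdot 3^{m+1}<3^{m+2}\le 3^h$, we have $n-1<3^h$, so the base-$3$ digits of $3^h a+n-1$ in positions $0,\dots,h-1$ coincide with those of $n-1$, while the positions $\ge h$ carry the digits of $a$. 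Since $e-i\le e<3^h$, all base-$3$ digits of $e-i$ in positions $\ge h$ vanish, and the associated factors $k(\text{digit},0)=1$ drop out of the product in Lemma~\ref{LemmakIsMultiplicativeMod3}. Hence both $k(3^h a+n-1,e-i)$ and $k(n-1,e-i)$ reduce mod $3$ to the same product $\prod_{j=0}^{h-1}k\big((n-1)_j,(e-i)_j\big)$, which gives the termwise congruence and, upon substitution into the convolution, the claim $p(3^h a+n,e)\equiv p(n,e)\pmod 3$.

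The main obstacle here is not any new estimate but the bookkeeping that legitimizes applying the two lemmas: in part~(i) one must confirm that truncating at $N=3^h-1$ is valid for the shifted argument and that every index $i$ genuinely lies below $3^h$, and in part~(ii) that $n-1$ and every $e-i$ are strictly below $3^h$ so the high-order digits contribute trivially. The contrast between the two periods is worth flagging and is structurally forced: the coarser shift $3^h a$ already preserves the low-order digits seen by Lucas mod $3$, whereas the finer mod-$9$ statement in~(i) couples adjacent digit pairs through the Davis--Webb formula and so requires the larger period $3^{h+1}a$ built into Lemma~\ref{LemmaBinomialPeriodicity}.
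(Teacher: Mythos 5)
Your proof is correct and follows essentially the same route as the paper: part (i) is the paper's argument verbatim (truncate the sum (\ref{EqForkne1}) at $N=3^h-1$, which is legitimate since $e\leq 3^h-1$, and apply Lemma \ref{LemmaBinomialPeriodicity} termwise), and part (ii) rests on the same digit-splitting application of Lemma \ref{LemmakIsMultiplicativeMod3}, using that the high-order digits of $3^h a+n-1$ pair with zero digits of $e-i$ and contribute factors $k(\cdot,0)=1$. The only cosmetic difference is that in (ii) the paper first passes to the mod-$3$ reduction (\ref{EqFormulaForpMod3}) before invoking the lemma, whereas you keep the $2i^2$ weights from (\ref{EqFormulaForp}) and establish the termwise congruence $k(3^h a+n-1,e-i)\equiv k(n-1,e-i)\pmod{3}$ directly; this is an equivalent bookkeeping choice.
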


\begin{proof}
By Lemma \ref{LemmaBinomialPeriodicity} (and using $e\leq 3^h-1$) we have $k(3^{h+1}a+n,e) = \sum_{i=0}^{3^h -1}2^i \binom{3^{h+1}a+n}{i}\binom{e}{i}  \equiv  \sum_{i=0}^{3^h -1} 2^i \binom{n}{i}\binom{e}{i} = k(n,e) \pmod{9}$, which proves {\sf (i)}. To prove {\sf (ii)} we use Equation (\ref{EqFormulaForpMod3}) and Lemma \ref{LemmakIsMultiplicativeMod3} to obtain:
\begin{align*}
p(3^{h}a+n,e) &\equiv  2\cdot\!\!\!\!\!\!  \doblesum{0\leq i \leq e}{i\not\equiv e\!\!\!\!\! \pmod{3}}\!\!\!\!\!  k(3^{h}a+n-1,3^{h}\cdot 0+i) \equiv   2\cdot\!\!\!\!\!\!  \doblesum{0\leq i \leq e}{i\not\equiv e\!\!\!\!\! \pmod{3}}\!\!\!\!\!  k(a,0)\cdot k(n-1,i)\\ &\equiv  2\cdot\!\!\!\!\!\!  \doblesum{0\leq i \leq e}{i\not\equiv e\!\!\!\!\! \pmod{3}}\!\!\!\!\!  k(n-1,i) \equiv  p(n,e) \pmod{3}.
\end{align*}

\end{proof}

\begin{corollary}\label{CorEmptyOrInfinite}
Let $e\geq 1$. If $\mbox{ZG}(e)\cap \{3^{m+1},3^{m+1}+1,\ldots, 2\cdot  3^{m+1}-1\}\neq \emptyset$ is non-empty for some $m\geq 0$, it has infinitely many elements. Moreover, if $n\in \mbox{ZG}(e)$ with $3^{m+1}\leq n < 2\cdot 3^{m+1}$ and $h=\max\{m+3,\lfloor \log_3(e) \rfloor +1\}$ then $3^{h}\cdot \N +n \subseteq \mbox{ZG}(e)$.
\end{corollary}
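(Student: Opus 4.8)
The plan is to deduce everything from the periodicity statements of Proposition \ref{PropPeriodicityForkANDp}. The starting observation is that whether or not $n$ lies in $\mbox{ZG}(e)$ is decided entirely by the two residues $k(n,e)\bmod 9$ and $p(n,e)\bmod 3$: indeed $n\in\mbox{ZG}(e)$ means precisely that $k(n,e)\equiv 3$ or $6\pmod 9$ together with $p(n,e)\equiv 0\pmod 3$. Hence, to prove the ``moreover'' assertion it suffices to show that, as $a$ ranges over $\N$, neither of these residues changes along the progression $3^{h}a+n$. The first assertion is then immediate: $3^{h}\N+n$ is an infinite subset of $\mbox{ZG}(e)$, so if $\mbox{ZG}(e)$ meets the window $\{3^{m+1},\ldots,2\cdot 3^{m+1}-1\}$ it is infinite. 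The case $a=0$ is exactly the hypothesis $n\in\mbox{ZG}(e)$, so I only need to treat $a\ge 1$.

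For the $p$-residue I would apply Proposition \ref{PropPeriodicityForkANDp}(ii). Its hypotheses are supplied by the prescribed data: $3^{m+1}\le n<2\cdot 3^{m+1}$ is the window containing $n$, the bound $h\ge m+2$ follows from $h\ge m+3$, and $e<3^{h}$ follows from $h\ge \lfloor\log_3 e\rfloor+1$. The proposition then gives $p(3^{h}a+n,e)\equiv p(n,e)\equiv 0\pmod 3$ for every $a\ge 1$, so the second Zhang-Ge congruence is preserved along the whole progression.

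For the $k$-residue I would use Proposition \ref{PropPeriodicityForkANDp}(i), but applied with the parameter $h-1$ in place of $h$, so that the modulus of periodicity it produces is $3^{(h-1)+1}=3^{h}$, the same modulus as in the $p$-step. This application needs $h-1\ge m+2$ (again from $h\ge m+3$) and $e<3^{h-1}$, and under these it yields $k(3^{h}a+n,e)\equiv k(n,e)\pmod 9$; since $k(n,e)\equiv 3$ or $6\pmod 9$, the same congruence holds for $k(3^{h}a+n,e)$. Combining the two congruences gives $3^{h}a+n\in\mbox{ZG}(e)$ for all $a$, which is the desired inclusion $3^{h}\N+n\subseteq\mbox{ZG}(e)$.

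The one genuinely delicate point is the matching of the two periods. Part (ii), which governs $p$ modulo $3$, is periodic with period $3^{h}$ as soon as $e<3^{h}$, whereas part (i), which governs $k$ modulo $9$, is inherently one power of $3$ coarser: for a radius $e<3^{h}$ it only guarantees periodicity with period $3^{h+1}$, because Lemma \ref{LemmaBinomialPeriodicity} controls $\binom{3^{h+1}a+n}{i}$ solely for indices $i<3^{h}$. To pull the mod-$9$ statement down to the common period $3^{h}$ one must therefore know $e<3^{h-1}$, i.e. $h$ must exceed $\log_3 e$ with one extra unit of margin beyond what the $p$-congruence alone requires. Checking that the chosen $h$ provides this margin---so that Lemma \ref{LemmaBinomialPeriodicity} indeed applies to every index $i\le e$ occurring in the defining sum for $k(n,e)$---is the crux of the argument; everything else is routine verification of the hypotheses of Proposition \ref{PropPeriodicityForkANDp}.
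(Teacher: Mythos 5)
Your overall strategy---reducing membership in $\mbox{ZG}(e)$ to the residues $k(\cdot,e)\bmod 9$ and $p(\cdot,e)\bmod 3$, handling the latter by Proposition \ref{PropPeriodicityForkANDp}(ii) with parameter $h$, and the former by Proposition \ref{PropPeriodicityForkANDp}(i) with parameter $h-1$ so that both periods equal $3^{h}$---is exactly the derivation the paper intends. But your proof has a genuine gap at the very step you single out as the crux: you never verify the inequality $e<3^{h-1}$, and for the corollary's choice $h=\max\{m+3,\lfloor \log_3(e)\rfloor+1\}$ it is false in general. The definition of $h$ only guarantees $e<3^{h}$, which is the margin needed for the $p$-congruence; whenever $\lfloor\log_3(e)\rfloor\geq m+2$ one has $h-1=\lfloor\log_3(e)\rfloor$, hence $e\geq 3^{\lfloor\log_3(e)\rfloor}=3^{h-1}$, and the appeal to part (i) with parameter $h-1$ is not permitted. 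A concrete instance within the paper's range of interest: $n=12$ (so $m=1$) and $e=57=3+9\cdot 6$, for which $\delta_3(57)=1$, $57\equiv 3\pmod 9$ and $12\in\mbox{ZG}(57)$ by Proposition \ref{PropDelta1p1}; here $h=\max\{4,4\}=4$ while $e=57\geq 27=3^{h-1}$.

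The obstruction is not a formality. For indices $3^{h-1}\leq i\leq e$ one has $\binom{n}{i}=0$, because $n<2\cdot 3^{m+1}\leq 2\cdot 3^{h-2}<3^{h-1}\leq i$, while $\binom{3^{h}a+n}{i}$ need not vanish modulo $9$: for example $\binom{93}{27}\equiv 3\pmod 9$ although $\binom{12}{27}=0$. So Lemma \ref{LemmaBinomialPeriodicity} cannot yield a term-by-term congruence on this range, and one would instead have to prove that the whole tail $\sum_{3^{h-1}\leq i\leq e}2^{i}\binom{3^{h}a+n}{i}\binom{e}{i}$ of $k(3^{h}a+n,e)$ vanishes modulo $9$ (each of these terms is divisible by $3$, since the base-$3$ digit of $3^{h}a+n$ in position $h-1$ is $0$, but removing the last factor of $3$ requires a cancellation argument that your proposal does not contain). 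What your argument does establish is the ``moreover'' claim with $h$ replaced by $\max\{m+3,\lfloor\log_3(e)\rfloor+2\}$, equivalently the claim as stated whenever $e<3^{m+2}$. In fairness, the paper presents this corollary without proof, and its implicit derivation from Proposition \ref{PropPeriodicityForkANDp} faces the same one-power-of-three mismatch; still, within your proposal the condition you flagged as ``the crux'' is left unverified and is in fact not satisfiable for the given $h$, so the proof is incomplete as written.
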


\section{Classification of the Zhang-Ge sets $\mbox{ZG}(e)$ and non-existence results for perfect Lee codes}\label{SectionMain}


In this section we prove that the Zhang-Ge set $\mbox{ZG}(e)$ contains infinitely many elements if $0<\delta_3(e)<\infty$ (see Definition \ref{DefDelta3}) and obtain a non-existence result for linear perfect Lee codes (Theorem \ref{ThMain2}). By Corollary \ref{CorEmptyOrInfinite} it suffices to prove it contains an element $n$ with $3^{m+1}\leq n < 2 \cdot 3^{m+1}$ for some $m\geq 0$. We start with the case $\delta_3(e)=1$.

\subsection{The case $\delta_3(e)=1$}

Note that $\delta_3(e)=1$ if and only if $e=a+3^2 b$ with $a\in\{3,4,5\}$ and $b\geq 0 $ with $\delta_3(b)=\infty$.

\begin{proposition}\label{PropDelta1p1}
Let $e=a+3^2 b$ with $a\in\{3,5\}$ and $b\geq 0$ satisfying $\delta_3(b)=\infty$. Then $12 \in \mbox{ZG}(e)$.
\end{proposition}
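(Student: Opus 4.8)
## Proof Proposal

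The goal is to show that $12 \in \mbox{ZG}(e)$, i.e., that the pair $(12, e)$ satisfies the Zhang-Ge condition in system (\ref{EqZhangGesystem}): both $k(12,e) \equiv 3$ or $6 \pmod 9$ and $p(12,e) \equiv 0 \pmod 3$. The plan is to compute these two quantities modulo $9$ and modulo $3$ respectively, exploiting the base-$3$ structure of $n=12 = 1\cdot 3 + 1\cdot 9$ (so in base $3$, $12 = (110)_3$) and of $e = a + 3^2 b$, where the key feature is that $a \in \{3,5\}$ lies in the range $[3^1, 2\cdot 3^1)$ contributing the digit-$1$ at position $\delta_3(e)=1$, while $b$ has no digit $1$ at all.

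\textbf{First, the $k$-computation.} To verify $k(12,e)\equiv 3$ or $6 \pmod 9$ I would apply the Davis-Webb machinery (Theorem \ref{DavisWebbTheorem} and Proposition \ref{PropDavisWebbExtension}) to the binomials $\binom{12}{i}$ appearing in (\ref{EqForkne1}), much as in Lemma \ref{LemmaBinomialPeriodicity}. Writing $k(12,e)=\sum_{i} 2^i \binom{12}{i}\binom{e}{i} \pmod 9$, the base-$3$ digits of $12$ are $(\ldots,0,1,1,0)$, and the digit-$1$ of $e$ at position $1$ together with the digits of $e$ coming from $b$ (all in $\{0,2\}$) should force most terms to vanish mod $9$ and isolate a small contribution. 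I expect to reduce the sum to a product of local factors $k(n_j, e_j)$-type terms mod $3$ (via Lemma \ref{LemmakIsMultiplicativeMod3}) plus a correction at position $1$, ultimately showing the $3$-adic valuation of $k(12,e)$ is exactly $1$ and that the cofactor mod $3$ is nonzero, which pins down $k(12,e)$ to $3$ or $6$ mod $9$. The two choices $a=3$ versus $a=5$ will change the digit of $e$ at position $0$ (namely $0$ versus $2$), and I would track both in parallel.

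\textbf{Second, the $p$-computation.} For $p(12,e)\equiv 0 \pmod 3$ I would use the congruence formulas already developed, especially Lemma \ref{LemmaOtherFormulaForp}, which relates $p(n, e)$ for $e = 2\cdot 3^{h-1}+e'$ to $p(n,e')$ via a descent on the digits of $e$. Here the structure $\delta_3(b)=\infty$ (all digits of $b$ in $\{0,2\}$) is exactly what lets me peel off the high digits $e_i\in\{0,2\}$ one at a time using the first case of Lemma \ref{LemmaOtherFormulaForp}, as long as the digits $n_i$ of $n-1 = 11 = (102)_3$ satisfy the hypothesis $n_{h-1}=2$ or $n_i\neq 2$ for some intermediate $i$. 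Since $n-1 = 11$ has base-$3$ digits $n_0=2, n_1=0, n_2=1$, this hypothesis needs care: for the digits of $e$ above position $2$ the relevant $n_i$ are all $0\neq 2$, so the "easy" first branch of Lemma \ref{LemmaOtherFormulaForp} applies and each peel contributes only a sign $(-1)^{n_{h-1}}$. The descent should terminate at the base case $e=a$ with $a\in\{3,5\}$, and I would finish by a direct evaluation of $p(12,3)$ and $p(12,5)$ from definition (\ref{EqFormulaForp}), checking each is $\equiv 0 \pmod 3$.

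\textbf{Main obstacle.} The hard part will be the bookkeeping at the low-order positions, where the digit structure of $n-1=11=(102)_3$ interacts nontrivially with the digit-$1$ of $e$ at position $1$ and the digit $a$ at position $0$; this is precisely the regime where the second (correction) branch of Lemma \ref{LemmaOtherFormulaForp} can activate, and where the Davis-Webb symbols for $k(12,e)$ are not purely multiplicative. I expect the choice of $n=12$ to be exactly engineered so that these corrections either vanish or combine cleanly, but confirming this requires carefully carrying the base-$3$ carries and verifying the two small base cases $e\in\{3,5\}$ explicitly. The restriction to $a\in\{3,5\}$ (excluding $a=4$, treated separately, presumably because $a=4=(11)_3$ introduces a second digit $1$ at position $0$) strongly suggests the proof hinges on $a$ having a single nonzero base-$3$ digit, which keeps the position-$0$ factor under control.
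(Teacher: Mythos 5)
Your proposal proves only half of the proposition. The second Zhang--Ge congruence, $p(12,e)\equiv 0\pmod{3}$, is handled correctly and in essentially the same way as the paper: peel off the leading digits $2$ of $e$ (all sitting at positions $\geq 2$, coming from $b$) using the first branch of Lemma \ref{LemmaOtherFormulaForp}, whose hypothesis holds for every $h\geq 3$ precisely because $n-1=11=(102)_3$ has $n_1=0\neq 2$, and then evaluate the base cases ($p(12,3)=732$ and $p(12,5)=45870$, both multiples of $3$). The gap is in the first congruence, $k(12,e)\equiv 3$ or $6 \pmod{9}$, which is the harder half and which your text never actually establishes: it only announces an ``expected'' outcome. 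Concretely, the tools you invoke cannot deliver it as cited. Lemma \ref{LemmakIsMultiplicativeMod3} is a congruence modulo $3$ only; it does show $3\mid k(12,e)$ (the digit $1$ of $e$ at position $1$ meets the digit $1$ of $12=(110)_3$), but it gives no control modulo $9$. The paper's genuine mod-$9$ digit machinery (Corollary \ref{Cormm} and Proposition \ref{PropFirstCondition}, resting on $2^{3^{m-1}k}\equiv(-1)^k\pmod{9}$) requires $m\geq 2$, i.e.\ $n=3^{m+1}+3^m\geq 36$, and breaks down at $n=12$ (the case $m=1$); this is exactly why the paper treats $\delta_3(e)\in\{1,2\}$ by a separate, computational argument. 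So the decisive step --- that the ``correction at position $1$'' leaves $k(12,e)$ with $3$-adic valuation exactly $1$ --- is the entire content of the proposition, and your own ``main obstacle'' paragraph concedes it is unverified.

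For comparison, the paper closes this step by a finite verification: it passes to the integer polynomial $f(e)=12!\cdot k(12,e)$, observes that $k(12,e)\equiv 3,6\pmod{9}$ if and only if $f(e)\equiv 3^6, 2\cdot 3^6\pmod{3^7}$, and checks by computer all $486$ residues $e\equiv a+3^2b \pmod{3^7}$ with $a\in\{3,5\}$, $0\leq b<3^5$ (this in fact proves the conclusion for every $e\equiv 3,5\pmod{9}$, with no hypothesis on $b$). Your Davis--Webb route could in principle be completed, and would even be lighter: since every index in $k(12,e)=\sum_{i=0}^{12}2^i\binom{12}{i}\binom{e}{i}$ satisfies $i\leq 12<3^3$, Theorem \ref{DavisWebbTheorem} shows that $\binom{e}{i}\bmod 9$ depends only on the four lowest base-$3$ digits of $e$ (for $e\geq 12$), so the claim reduces to checking the eight residues $e\equiv 3,21,57,75,5,23,59,77\pmod{81}$ together with the two small cases $e\in\{3,5\}$. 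But neither this reduction nor the resulting check appears in your proposal, so as written the first congruence is assumed rather than proved.
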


\begin{proof}
We have to prove that $k(12,e)\equiv 3\textrm{ or }6 \pmod{9}$ and $p(12,e)\equiv 0 \pmod{3}$. We note that $k(12,e)$ is a degree-$12$ polynomial with rational coefficients. Multiplying it by $12!$ we obtain the integer coefficients polynomial $f(e)=12!\cdot k(12,e) = 4096 e^{12} + 24576 e^{11} + 585728 e^{10} + 2703360 e^9 + 25479168 e^8 + 85966848 e^7 + 402980864 e^6 + 919142400 e^5 + 2188865536 e^4$ $+ 2940850176 e^3 + 3130103808 e^2 + 1799331840 e + 479001600$. Since $3^5\mid 12!$, $3^6\nmid 12!$, $3\cdot 12!\equiv 2\cdot 3^6 \pmod{3^7}$ and $6\cdot 12!\equiv 3^6 \pmod{3^7}$, we have that $k(12,e)\equiv 3\textrm{ or }6 \pmod{3^2}$ if and only if $f(e)\equiv 3^6 \textrm{ or } 2\cdot 3^6 \pmod{3^7}$. We note that $e\equiv 3\textrm{ or }5 \pmod{3^2}$ if and only if $e\equiv a+3^2 b\pmod{3^7}$ for some $a\in \{3,5\}$ and $b: 0\leq b < 3^5$. By a direct calculation we confirm that for these $486$ possible values of $a+3^2b$ we have $f(a+3^2 b )\equiv 3^6\textrm{ or } 2\cdot 3^6 \pmod{3^7}$. Thus, $k(12,e)\equiv 3\textrm{ or }6\pmod{9}$ whenever $e\equiv 3\textrm{ or }5\pmod{9}$. 
Since $p(12,3)=732\equiv 0 \pmod{3}$, $p(12,5)=45870\equiv 0 \pmod{3}$
and $p(12, 2\cdot 3^{h-1}+i)\equiv \pm p(12,i) \pmod{3}$ for $h>2$ and $0\leq i < 3^{h-1}$ (Lemma \ref{LemmaOtherFormulaForp}), we have that 
$p(12,e)\equiv 0 \pmod{3}$ for every $e$ of the form $e=3+9b$ or $e=5+9b$ with $b\geq 0$ and $\delta_3(b)=\infty$.
\end{proof}

\begin{proposition}\label{PropDelta1p2}
Let $e=4+3^2 b$ with $b\geq 0$ and $\delta_3(b)=\infty$. Then, $3 \in \mbox{ZG}(e)$.
\end{proposition}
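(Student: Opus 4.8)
The plan is to show $3\in \mbox{ZG}(e)$ for $e=4+3^2b$ with $\delta_3(b)=\infty$, following the same strategy used in Proposition~\ref{PropDelta1p1} but now with the base dimension $n=3$ instead of $n=12$. This means verifying the two congruences of the Zhang-Ge condition: $k(3,e)\equiv 3$ or $6\pmod 9$ and $p(3,e)\equiv 0\pmod 3$. The advantage of $n=3$ is that $k(3,e)$ is a degree-$3$ polynomial in $e$, so the computation is far lighter. Explicitly, $k(3,e)=\sum_{i=0}^{3}2^i\binom{3}{i}\binom{e}{i}=\frac{4}{3}e^3+2e^2+\frac{8}{3}e+1$, and clearing denominators by multiplying by $3$ gives an integer polynomial $g(e)=4e^3+6e^2+8e+3=3\cdot k(3,e)$.

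For the first congruence I would determine $k(3,e)\bmod 9$ directly from its value $\bmod 27$. Since $e=4+9b$, the residue of $e$ modulo $9$ is $4$ (and more refined information mod $27$ is governed by $b\bmod 3$), so I would simply evaluate $k(3,e)$ for $e\equiv 4,13,22\pmod{27}$ (the three residues compatible with $e\equiv 4\pmod 9$) and confirm that in each case $k(3,e)\equiv 3$ or $6\pmod 9$. Concretely, $k(3,4)=129\equiv 3\pmod 9$, which already gives the base case; the periodicity needed to cover all $e\equiv 4\pmod 9$ is a routine check on the cubic. Here I can also invoke Lemma~\ref{LemmakIsMultiplicativeMod3} to factor $k(3,e)\pmod 3$, though for the finer modulus $9$ a direct evaluation of the cubic at the relevant residues is cleanest.

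For the second congruence, $p(3,e)\equiv 0\pmod 3$, I would mirror the inductive argument of Proposition~\ref{PropDelta1p1}. The base case is $p(3,4)$, which I would compute from Equation~(\ref{EqFormulaForp}) and verify is divisible by $3$. For the inductive step, writing a general $e$ of the prescribed form as $e=2\cdot 3^{h-1}+e'$ with $e'$ again of the form $4+9b'$ (hence $\delta_3(b')=\infty$ and $e'<e$), I would apply Lemma~\ref{LemmaOtherFormulaForp} with $n=3$. Since $n-1=2$ has base-$3$ representation with digits $n_0=2$ and $n_i=0$ for $i\geq 1$, the hypothesis $n_i\neq 2$ for some $i$ with $1\leq i\leq h-2$ is satisfied whenever $h\geq 3$, so the lemma yields $p(3,e)\equiv (-1)^{n_{h-1}}p(3,e')\pmod 3$ with $n_{h-1}=0$, giving $p(3,e)\equiv p(3,e')\pmod 3$; the inductive hypothesis then closes the argument.

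The main obstacle is a subtle one: I must make sure the digit $4$ in the units-and-nines position interacts correctly with Lemma~\ref{LemmaOtherFormulaForp}, whose hypotheses distinguish the case $n_{h-1}=2$ or ``$n_i\neq 2$ for some middle index'' from the exceptional case. Because $n=3$ forces $n-1=2$ to have all higher digits zero, the middle-index condition is automatically met for every $h\geq 3$, so the exceptional second formula of Lemma~\ref{LemmaOtherFormulaForp} never triggers and the clean multiplicative recursion always applies; I should state this explicitly to justify the induction. The only genuine computation is the pair of base cases $k(3,4)$ and $p(3,4)$, which are small and immediate.
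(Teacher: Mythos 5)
There is a genuine gap in your treatment of the first congruence. You propose to verify $k(3,e)\equiv 3$ or $6 \pmod 9$ by checking all three residues $e\equiv 4,13,22 \pmod{27}$ compatible with $e\equiv 4\pmod 9$, and you assert the check will succeed in each case. It does not: for $e\equiv 13\pmod{27}$ the claim is false. Indeed, $k(3,13)=1+6\cdot 13+12\cdot\binom{13}{2}+8\cdot\binom{13}{3}=3303=9\cdot 367\equiv 0\pmod 9$. The point you are missing is that the hypothesis $\delta_3(b)=\infty$ is essential for the \emph{first} Zhang--Ge congruence, not only for the second: it forces $b\equiv 0$ or $2\pmod 3$ (a digit $1$ in the unit place of $b$ is forbidden), hence $e=4+9b\equiv 4$ or $22\pmod{27}$, and the residue $13$ never occurs. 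This is exactly how the paper proceeds: it notes $e\equiv 4$ or $22\pmod{27}$, then checks only $f(4)\equiv 18$ and $f(22)\equiv 9\pmod{27}$ for $f(e)=3!\cdot k(3,e)$. Your plan, executed as written, would terminate in a failed verification; the statement you set out to confirm (``$k(3,e)\equiv 3$ or $6\pmod 9$ whenever $e\equiv 4\pmod 9$'') is simply not true.

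The rest of your argument is sound and coincides with the paper's. Your reduction via the integer polynomial $g(e)=3\,k(3,e)=4e^3+6e^2+8e+3$ (so that $k(3,e)\equiv 3,6\pmod 9$ iff $g(e)\equiv 9,18\pmod{27}$, and $g(e)\bmod 27$ depends only on $e\bmod 27$) is a correct mechanism, equivalent to the paper's use of $f(e)=6\,k(3,e)$. Your handling of $p(3,e)\equiv 0\pmod 3$ is also correct and matches the paper: base case $p(3,4)=276\equiv 0\pmod 3$, then induction on $e=2\cdot 3^{h-1}+e'$ via Lemma~\ref{LemmaOtherFormulaForp}, and your explicit observation that $n-1=2$ has all higher base-$3$ digits equal to $0$, so the non-exceptional case of that lemma applies for every $h\geq 3$, is a useful clarification that the paper leaves implicit. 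Once you repair the first part by invoking $\delta_3(b)=\infty$ to discard $e\equiv 13\pmod{27}$, the proof is complete.
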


\begin{proof}
We consider the polynomial $f(e)=3! \cdot k(3,e) = 8e^3 + 12e^2 + 16e + 6$ and note that $k(3,e)\equiv 3\textrm{ or } 6 \pmod{3^2}$ if and only if $f(e)\equiv 9\textrm{ or }18 \pmod{3^3}$. If $e=4+3^2 b$ with $b\geq 0$ and $\delta_3(b)=\infty$, then $e\equiv 4\textrm{ or }22 \pmod{3^3}$. Since $f(4)=774\equiv 18 \pmod{3^3}$ and $f(22)=91350\equiv 9 \pmod{3^3}$ we have that $f(e)\equiv 9\textrm{ or }18 \pmod{3^3}$ if $e\equiv 4\textrm{ or }22 \pmod{3^3}$. This implies that $k(3,e)\equiv 3\textrm{ or }6 \pmod{9}$ if $e\equiv 4\textrm{ or }22 \pmod{3^3}$. In particular $k(3,4+3^2b)\equiv 3\textrm{ or }6 \pmod{3^2}$ for every $b\geq 0$ satisfying $\delta_3(b)=\infty$. 
Since $p(3,4)=276\equiv 0 \pmod{3}$ and $p(3, 2\cdot 3^{h-1}+i)\equiv \pm p(3,i) \pmod{3}$ for $h>2$ and $0\leq i < 3^{h-1}$ (Lemma \ref{LemmaOtherFormulaForp}), we have that $p(3,4+9b)\equiv 0 \pmod{3}$ for every $b\geq 0$ satisfying $\delta_3(b)=\infty$.
\end{proof}

The following corollary is a direct consequence of Propositions \ref{PropDelta1p1} and \ref{PropDelta1p2} and Corollary \ref{CorEmptyOrInfinite}.

\begin{corollary}\label{CorZGnonemptyForDelta1}
If $\delta_3(e)=1$ then the Zhang-Ge set $\mbox{ZG}(e)$ has infinitely many elements.
\end{corollary}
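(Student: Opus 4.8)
The plan is to exhibit, for every $e$ with $\delta_3(e)=1$, a single concrete element of $\mbox{ZG}(e)$ lying in one of the windows $\{3^{m+1}, 3^{m+1}+1,\ldots, 2\cdot 3^{m+1}-1\}$, and then let Corollary \ref{CorEmptyOrInfinite} do all the remaining work. The point is that the two preceding propositions already produce such an element in each relevant case, so the proof reduces to a trivial bookkeeping check that these witnesses fall in the correct range.

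First I would recall the characterization noted just before Proposition \ref{PropDelta1p1}: the condition $\delta_3(e)=1$ is equivalent to writing $e = a + 3^2 b$ with $a \in \{3,4,5\}$ and $b \geq 0$ satisfying $\delta_3(b)=\infty$. This partitions the admissible $e$ according to the value of $a$, and the two propositions are designed precisely to cover these subcases: $a \in \{3,5\}$ is handled by Proposition \ref{PropDelta1p1} and $a = 4$ by Proposition \ref{PropDelta1p2}, so the case split is exhaustive.

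In the first case $a \in \{3,5\}$, Proposition \ref{PropDelta1p1} gives $12 \in \mbox{ZG}(e)$; taking $m=1$ we have $3^{m+1} = 9 \leq 12 < 18 = 2\cdot 3^{m+1}$, so $12$ lies in the window for $m=1$. In the second case $a = 4$, Proposition \ref{PropDelta1p2} gives $3 \in \mbox{ZG}(e)$, and taking $m = 0$ we have $3^{m+1} = 3 \leq 3 < 6 = 2\cdot 3^{m+1}$, so $3$ lies in the window for $m=0$. In either case $\mbox{ZG}(e)$ meets some window $\{3^{m+1},\ldots,2\cdot 3^{m+1}-1\}$ non-trivially.

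The conclusion is then immediate from Corollary \ref{CorEmptyOrInfinite}: a non-empty intersection with any such window forces $\mbox{ZG}(e)$ to be infinite. There is essentially no obstacle at this stage, since the entire analytic content, namely the verification of the two congruences defining the Zhang-Ge condition, has already been absorbed into Propositions \ref{PropDelta1p1} and \ref{PropDelta1p2}. The only points requiring care are that the three values of $a$ really do exhaust $\delta_3(e)=1$ and that the explicit witnesses $12$ and $3$ genuinely satisfy $3^{m+1} \leq n < 2 \cdot 3^{m+1}$ for the stated $m$; both checks are one line each.
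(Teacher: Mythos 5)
Your proposal is correct and follows exactly the paper's route: the paper states this corollary as a direct consequence of Propositions \ref{PropDelta1p1} and \ref{PropDelta1p2} together with Corollary \ref{CorEmptyOrInfinite}, which is precisely your argument. You merely make explicit the bookkeeping (the exhaustive case split $a\in\{3,5\}$ versus $a=4$, and the window checks $9\leq 12<18$ and $3\leq 3<6$) that the paper leaves implicit.
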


\subsection{The case $\delta_3(e)=2$}

Note that $\delta_3(e)=2$ if and only if $e=a+3^3 b$ with $9\leq a<18$ and $b\geq 0 $ such that $\delta_3(b)=\infty$.

\begin{proposition}\label{PropDelta2}
Let $e=a+3^3 b$ with $9\leq a <18$, $b\geq 0$ and $\delta_3(b)=\infty$. Then, $12 \in \mbox{ZG}(e)$.
\end{proposition}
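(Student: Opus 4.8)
The plan is to mimic exactly the structure of the proof of Proposition~\ref{PropDelta1p1}, since $\delta_3(e)=2$ is the direct analogue of the $\delta_3(e)=1$ case one dimension up in the base-$3$ hierarchy. We must verify the two Zhang-Ge conditions for the pair $(12,e)$: namely $k(12,e)\equiv 3\textrm{ or }6\pmod 9$ and $p(12,e)\equiv 0\pmod 3$, for every $e=a+3^3 b$ with $9\le a<18$, $b\ge 0$ and $\delta_3(b)=\infty$. The hypothesis $\delta_3(b)=\infty$ means $b$ has no digit $1$ in base $3$, so that $e$ has its highest (and in the relevant range only) digit-$1$ in position $2$, confirming $\delta_3(e)=2$.

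For the first congruence I would again exploit that $k(12,e)$ is a degree-$12$ polynomial in $e$ with rational coefficients, clear denominators by multiplying by $12!$ to get an integer-coefficient polynomial $f(e)=12!\cdot k(12,e)$, and use the $3$-adic valuation facts $3^5\mid 12!$, $3^6\nmid 12!$ together with $3\cdot 12!\equiv 2\cdot 3^6$ and $6\cdot 12!\equiv 3^6\pmod{3^7}$ to translate the condition $k(12,e)\equiv 3\textrm{ or }6\pmod{3^2}$ into the polynomial congruence $f(e)\equiv 3^6\textrm{ or }2\cdot 3^6\pmod{3^7}$. The key observation is that this condition depends only on $e\bmod 3^7$, so it suffices to check it for the finitely many residues $e\equiv a+3^3 b\pmod{3^7}$ with $9\le a<18$ and $0\le b<3^4$ (a bounded, explicit finite set), which is a direct computation analogous to the $486$-value check in Proposition~\ref{PropDelta1p1}.

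For the second congruence I would first compute directly that $p(12,a)\equiv 0\pmod 3$ for each base case $a\in\{9,\ldots,17\}$, and then propagate these to all $e=a+3^3 b$ with $\delta_3(b)=\infty$ using Lemma~\ref{LemmaOtherFormulaForp}. The point is that writing $e=2\cdot 3^{h-1}+e'$ and repeatedly peeling off the top digit (which, by the hypothesis $\delta_3(b)=\infty$, is always $0$ or $2$ in positions above the second) keeps us in the first case of Lemma~\ref{LemmaOtherFormulaForp}, giving $p(12,e)\equiv (-1)^{n_{h-1}}p(12,e')\equiv \pm p(12,e')\pmod 3$; since the $e'$ eventually reduces to one of the base cases with $p(12,e')\equiv 0\pmod 3$, the sign is irrelevant and $p(12,e)\equiv 0\pmod 3$ follows by induction on the number of digits of $e$.

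The main obstacle I anticipate is bookkeeping rather than conceptual: I must confirm that the digit hypothesis on $b$ indeed forces every top-digit peel in the inductive step to land in the \emph{first} (sign-only) branch of Lemma~\ref{LemmaOtherFormulaForp} and never the more complicated second branch. This requires checking that the condition triggering the second branch ($n_{h-1}\ne 2$ together with $n_i=2$ for all $1\le i\le h-2$, where here $n=12$ so $n-1=11=(102)_3$) does not occur for the relevant range of $h$; since $n-1=11$ has only three base-$3$ digits and $h$ grows without bound, for $h$ large the middle digits $n_i$ are zero, so the second branch is not entered, and the finitely many small-$h$ cases are absorbed into the explicit base-case verification. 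Closing this gap cleanly—and making sure the finite residue check in the first part is stated as covering exactly the residues arising from $\delta_3(b)=\infty$—is where the care is needed.
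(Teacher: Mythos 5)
Your second congruence (the $p(12,e)\equiv 0\pmod 3$ part) is handled correctly and matches the paper: one verifies $p(12,a)\equiv 0\pmod 3$ for $9\le a\le 17$ and then peels top digits $2$ via Lemma~\ref{LemmaOtherFormulaForp}; since $n-1=11=(102)_3$ has $n_1=0\neq 2$, the sign-only branch of that lemma applies for every $h>3$, so the complication you worry about never arises (the paper states exactly this as $p(12,2\cdot 3^{h-1}+i)\equiv \pm p(12,i)\pmod 3$).

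The first congruence, however, contains a genuine gap: your reduction to a finite check discards the hypothesis $\delta_3(b)=\infty$ entirely. You propose to verify $f(e)\equiv 3^6$ or $2\cdot 3^6\pmod{3^7}$ for \emph{all} residues $e\equiv a+3^3b\pmod{3^7}$ with $9\le a<18$ and $0\le b<3^4$, i.e.\ with $b$ unrestricted. That check is false as stated: take $a=9$, $b=1$, so $e=36$. Working mod $9$ in $k(12,36)=\sum_{i=0}^{12}2^i\binom{12}{i}\binom{36}{i}$, the binomials $\binom{12}{i}$ vanish mod $9$ for $i\in\{4,5,7,8\}$ and $\binom{36}{i}$ vanishes mod $9$ for $i\in\{1,2,10,11,12\}$, so only $i=0,3,6,9$ contribute, giving $k(12,36)\equiv 1\cdot 1\cdot 1+8\cdot 4\cdot 3+1\cdot 6\cdot 3+8\cdot 4\cdot 4\equiv 1+6+0+2\equiv 0\pmod 9$. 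Thus $e\equiv 36\pmod{3^7}$ lies in your check set but fails the required congruence, and the ``direct computation analogous to the $486$-value check'' would abort at that point; of course $b=1$ has a digit $1$ and is excluded by the hypothesis, but your reduction never uses this. The repair—which is what the paper's proof does—is to carry the digit constraint into the residue reduction: since $\delta_3(b)=\infty$ forces $b\equiv 0$ or $2\pmod 3$, one has $e\equiv a'\pmod{3^4}$ for some $a'\in A=\{9,\dots,17\}\cup\{63,\dots,71\}$, and the finite verification is then run over the $486$ residues $a'+3^4b'\pmod{3^7}$ with $a'\in A$, $0\le b'<3^3$. In other words, the digit of $e$ in position $3$ must be restricted to $\{0,2\}$ (the digits in positions $4,5,6$ turn out not to affect $k(12,e)\bmod 9$, which is why the paper's superset passes), and your check set violates precisely this restriction.
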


\begin{proof}
The proof is similar to the proof of Proposition \ref{PropDelta1p1} and we give only a sketch. Consider the polynomial $f(e)=12!\cdot k(12,e)=  4096 e^{12} + 24576 e^{11} +\cdots + 479001600$. If $e=a+3^3 b$ with $9\leq a <18$ and $\delta_3(b)=\infty$, then $e\equiv a\textrm{ or } 2\cdot 3^3 +a \pmod{3^4}$ (because $b\equiv 0\textrm{ or }2 \pmod{3}$). Let $A=\{a': 9\leq a'<18 \textrm{ or } 63\leq a' <72\}$. We have that $e\equiv a'\pmod{3^4}$ for some $a'\in A$ if and only if $e\equiv a'+3^4 b' \pmod{3^7}$ with $a'\in A$ and $0\leq b' < 3^3$. By direct calculation we check that $f(a'+3^4 b')\equiv 3^6$ or $ 2\cdot 3^6 \pmod{3^7}$ for every $a'\in A$ and $0\leq b' <3^3$. This implies that $f(e)\equiv 3^6$ or $2\cdot 3^6 \pmod{3^7}$ for every $e\equiv a'+3^4 b' \pmod{3^7}$ with $a'\in A$ and $0\leq b' < 3^3$. Thus $k(12,e)\equiv 3,6 \pmod{3^2}$ if $e\equiv a'\pmod{3^4}$ for some $a'\in A$. In particular $k(12,e)\equiv 3\textrm{ or }6 \pmod{9}$ for every $e=a+3^3 b$ with $9\leq a <18$ and $\delta_3(b)=\infty$. Since $p(12,a)\equiv 0 \pmod{3}$ for $9\leq a < 18$ and $p(12, 2\cdot 3^{h-1}+i)\equiv \pm p(12,i) \pmod{3}$ for $h>3$ and $0\leq i < 3^{h-1}$ (Lemma \ref{LemmaOtherFormulaForp}) we have that $p(12,a+3^3 b)\equiv 0\pmod{3}$ if $9\leq a<18$ and $\delta_3(b)=\infty$.
\end{proof}

\begin{corollary}\label{CorZGnonemptyForDelta2}
If $\delta(e)=2$ then the Zhang-Ge set $\mbox{ZG}(e)$ has infinitely many elements.
\end{corollary}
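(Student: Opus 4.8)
The plan is to reduce Corollary \ref{CorZGnonemptyForDelta2} directly to Proposition \ref{PropDelta2} together with Corollary \ref{CorEmptyOrInfinite}, exactly as Corollary \ref{CorZGnonemptyForDelta1} was deduced from Propositions \ref{PropDelta1p1}--\ref{PropDelta1p2}. First I would observe that the hypothesis $\delta_3(e)=2$ means, by the characterization recorded just before Proposition \ref{PropDelta2}, that $e=a+3^3b$ with $9\leq a<18$ and $b\geq 0$ satisfying $\delta_3(b)=\infty$. This is precisely the hypothesis of Proposition \ref{PropDelta2}, which yields $12\in\mbox{ZG}(e)$. Hence $\mbox{ZG}(e)\neq\emptyset$, and in fact $12$ is a witness lying in the window $\{3^{m+1},\ldots,2\cdot3^{m+1}-1\}$ for $m=1$, since $3^2=9\leq 12<18=2\cdot 3^2$.

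Next I would invoke Corollary \ref{CorEmptyOrInfinite} to promote this single element to an infinite family. That corollary states that if $\mbox{ZG}(e)\cap\{3^{m+1},\ldots,2\cdot3^{m+1}-1\}\neq\emptyset$ for some $m\geq 0$, then $\mbox{ZG}(e)$ is infinite. With the element $n=12$ and $m=1$ identified above, the hypothesis of Corollary \ref{CorEmptyOrInfinite} is satisfied, so $\mbox{ZG}(e)$ has infinitely many elements. Concretely, the corollary even produces the arithmetic progression $3^h\cdot\N+12\subseteq\mbox{ZG}(e)$ with $h=\max\{m+3,\lfloor\log_3(e)\rfloor+1\}=\max\{4,\lfloor\log_3(e)\rfloor+1\}$, which exhibits the infinitude explicitly.

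The only genuine content is already packaged inside Proposition \ref{PropDelta2}, so I do not expect a substantive obstacle here; the remaining work is purely bookkeeping. The one point worth checking carefully is that $12$ really falls in the correct dyadic-in-base-$3$ window required by Corollary \ref{CorEmptyOrInfinite}, namely that $12$ has leading base-$3$ digit equal to $1$ (so that $3^{m+1}\leq 12<2\cdot 3^{m+1}$ with $m=1$); since $12=(110)_3$, this holds. Thus the proof is a two-line concatenation: apply Proposition \ref{PropDelta2} to get $12\in\mbox{ZG}(e)$ with $12$ in the window for $m=1$, then apply Corollary \ref{CorEmptyOrInfinite} to conclude infinitude. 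I would also note in passing the harmless typo that the statement writes $\delta(e)=2$ rather than $\delta_3(e)=2$, and phrase the proof in terms of $\delta_3$ to match the rest of the section.
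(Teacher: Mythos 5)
Your proof is correct and follows essentially the same route as the paper, which presents this corollary as a direct consequence of Proposition \ref{PropDelta2} (giving the witness $12\in\mbox{ZG}(e)$, with $3^2\leq 12<2\cdot 3^2$ placing it in the window for $m=1$) combined with Corollary \ref{CorEmptyOrInfinite}. Your verification that $12$ lies in the required base-$3$ window is exactly the bookkeeping the paper leaves implicit.
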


\subsection{The case $2< \delta_3(e)<\infty$}

The arguments used in the proofs of Propositions \ref{PropDelta1p1}, \ref{PropDelta1p2} and \ref{PropDelta2} require intermediate computations. So, it becomes infeasible when $n$ is large. Thus we need a new argument to approach the case $2< \delta_3(e)<\infty$. In this part we consider $e\geq 0$ such that $\delta_3(e)=m+1\geq 3$ and prove that $n=3^{m+1}+3^m \in \mbox{ZG}(e)$. Note that $3^{m+1}\leq n < 2 \cdot 3^{m+1}$ and Corollary \ref{CorEmptyOrInfinite} is applicable. We start with some preliminaries lemmas.

\begin{lemma}
Let $n=3^{m+1}+3^{m}$ with $m\geq 1$. Then $\binom{n}{i}\not\equiv 0 \pmod{9} \Leftrightarrow i/3^{m-1} \in \{0, 1, 2, 3, 6, 9,$ $10,11, 12\}$. Moreover, in this case we have 
$$\binom{n}{i} \equiv \left\{ \begin{array}{ll} 
1\pmod{9} & \textrm{if } i/3^{m-1} \in \{0,12\};\\
3\pmod{9} & \textrm{if } i/3^{m-1} \in \{1,2,10,11\};\\
4\pmod{9} & \textrm{if } i/3^{m-1} \in \{3,9\}.\\
6\pmod{9} & \textrm{if } i/3^{m-1}=6
\end{array}  \right.$$
\end{lemma}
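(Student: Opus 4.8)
The plan is to apply the Davis--Webb theorem (Theorem \ref{DavisWebbTheorem}) directly to $\binom{n}{i}$, exploiting that $n=3^{m+1}+3^m=4\cdot 3^m$ has a very sparse base-$3$ expansion: its digits are $n_{m+1}=n_m=1$ and $n_j=0$ for every other $j$. First I would dispose of the trivial range of $i$. Since $\binom{n}{i}=0$ whenever $i>n$, I may assume $i\le n<3^{m+2}$, so $i=\sum_{j=0}^{m+1}i_j3^j$; writing $n$ with $M:=m+2$ base-$3$ digits, Theorem \ref{DavisWebbTheorem} expresses $\binom{n}{i}\bmod 9$ as the leading symbol $\simb{1,1}{i_{m+1},i_m}$ times the product $\prod_{j=1}^{m}\simb{n_j}{i_j}^{-1}\simb{n_j,n_{j-1}}{i_j,i_{j-1}}$.

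Establishing the support (that $\binom{n}{i}\not\equiv 0\pmod 9$ forces $3^{m-1}\mid i$) is where the real work lies. For $m\ge 2$ every window with $1\le j\le m-1$ has $n_j=n_{j-1}=0$, and a short computation using the Remark after Theorem \ref{DavisWebbTheorem} shows that such a window $\simb{0}{i_j}^{-1}\simb{0,0}{i_j,i_{j-1}}$ contributes to $v_3\binom{n}{i}$ exactly $2$ when $(i_j,i_{j-1})=(0,{>}0)$, exactly $1$ when both digits are positive, and $0$ otherwise; moreover the window $j=m$ (where $n_m=1$, $n_{m-1}=0$) contributes at least $1$ whenever $i_{m-1}>0$. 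I would then argue that if some digit $i_\ell\ne 0$ with $\ell\le m-2$, these contributions already sum to at least $2$: if the nonzero run above $\ell$ breaks before reaching position $m-1$ one gets a $(0,{>}0)$ window worth $2$; if it reaches far enough one collects at least two windows worth $1$ (there are $m-1-\ell\ge 2$ of them when $\ell\le m-3$); and the single edge case $\ell=m-2$ with $i_{m-2},i_{m-1}>0$ is closed by the extra unit of valuation from the $j=m$ window. Hence $v_3\binom{n}{i}\ge 2$, i.e.\ $\binom{n}{i}\equiv 0\pmod 9$, unless $i_0=\cdots=i_{m-2}=0$.

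Finally, assuming $3^{m-1}\mid i$, I would show the value depends only on the top three digits. When $i_0=\cdots=i_{m-2}=0$ each window $j\le m-1$ evaluates (not merely $3$-adically, but exactly) to $1$, so $\binom{n}{i}\equiv\simb{1,1}{i_{m+1},i_m}\cdot\simb{1}{i_m}^{-1}\simb{1,0}{i_m,i_{m-1}}\pmod 9$. Writing $i=3^{m-1}t$ with $0\le t\le 12$, the three digits $(i_{m+1},i_m,i_{m-1})$ are exactly the base-$3$ digits of $t$, and since $(n_{m+1},n_m,n_{m-1})=(1,1,0)$ are the digits of $12$, this product is precisely the Davis--Webb expansion of $\binom{12}{t}$. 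Thus $\binom{n}{i}\equiv\binom{12}{t}\pmod 9$, and the lemma follows by tabulating $\binom{12}{t}\bmod 9$ for $t=0,\dots,12$: the residue vanishes exactly for $t\in\{4,5,7,8\}$, and otherwise yields $1$ for $t\in\{0,12\}$, $3$ for $t\in\{1,2,10,11\}$, $4$ for $t\in\{3,9\}$ and $6$ for $t=6$. The case $m=1$ (where $3^{m-1}=1$ and the support condition is vacuous) is just this same table applied to $n=12$.

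The main obstacle is the middle paragraph: the bookkeeping that converts ``a nonzero low digit'' into ``at least two factors of $3$'', including the boundary window $j=m$ that rescues the single edge case $\ell=m-2$. Everything else is either the sparse-digit structure of $n$ or the finite table for $\binom{12}{t}$.
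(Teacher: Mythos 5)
Your proof is correct and follows essentially the same route as the paper's: both apply the Davis--Webb theorem to the sparse base-$3$ expansion of $n=3^{m+1}+3^m$, show that any nonzero digit of $i$ below position $m-1$ forces $\binom{n}{i}\equiv 0\pmod 9$, and reduce the remaining cases to a finite table determined by the top three digits of $i$. The differences are only organizational: the paper tabulates the top-digit factor $\eta(i_{m+1},i_m,i_{m-1})$ and splits the window analysis into cases according to its value, whereas you run a uniform $3$-adic valuation count over the windows and recognize the top-digit product as the Davis--Webb expansion of $\binom{12}{i/3^{m-1}}$ --- a tidy shortcut, but the same underlying argument.
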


\begin{proof}

If $i>n$ we have $i/3^{m+1}>12$ and $\binom{n}{i}=0\equiv 0 \pmod{9}$. We assume now that $i\leq n$ and consider the base-$3$ representation of $i$ given by $i=\sum_{j=0}^{m+1}i_j 3^j$ with $i_{m+1}\in\{0,1\}$ and $i_j\in\{0,1,2\}$ if $0\leq j \leq m$. By Theorem \ref{DavisWebbTheorem} we have
\begin{equation}\label{EqBinomEta}
\binom{n}{i}\equiv \eta(i_{m+1},i_m, i_{m-1})\cdot \prod_{j=1}^{m-1} \simb{0}{i_j}^{-1}\simb{0,0}{i_j,i_{j-1}}\pmod{9},
\end{equation}
 where $\eta(i_{m+1},i_m, i_{m-1}) = \simb{1,1}{i_{m+1},i_m}\cdot \simb{1}{i_m}^{-1}\simb{1,0}{i_m, i_{m-1}}$. By direct computation we have 
$$\eta(i_{m+1},i_m, i_{m-1}) \equiv \left\{  \begin{array}{ll}
0\pmod{9} & \textrm{if }(i_{m+1},i_{m},i_{m-1})=(0,2,2), (0,2,1), (0,1,2), (0,1,1); \\
1\pmod{9} & \textrm{if }(i_{m+1},i_{m},i_{m-1})=(1,1,0), (0,0,0); \\
3\pmod{9} & \textrm{if }(i_{m+1},i_{m},i_{m-1})=(1,0,2), (1,0,1), (0,0,2), (0,0,1); \\
4\pmod{9} & \textrm{if }(i_{m+1},i_{m},i_{m-1})=(1,0,0), (0,1,0); \\
6\pmod{9} & \textrm{if }(i_{m+1},i_{m},i_{m-1})=(0,2,0).
\end{array}  \right.$$

We consider three cases:\\

\noindent If $(i_{m+1},i_{m},i_{m-1})=(0,2,2), (0,2,1), (0,1,2), (0,1,1)$ then $\binom{n}{i}\equiv 0 \pmod{9}$ (by Equation (\ref{EqBinomEta})).\\

\noindent If $(i_{m+1},i_{m},i_{m-1})=(1,0,2), (1,0,1), (0,0,2), (0,0,1)$ then $\eta(i_{m+1},i_m, i_{m-1})=3$. By Equation (\ref{EqBinomEta}) we have $\binom{n}{i} \not\equiv 0 \pmod{9}$ if and only if $\simb{0}{i_j}^{-1}\simb{0,0}{i_j,i_{j-1}}\not\equiv 0 \pmod{3}$ $\forall j: 1\leq j \leq m-1$, if and only if $i_{j-1}=0$ $\forall j: 1\leq j \leq m-1$, if and only if $i/3^{m-1}=i_{m+1}\cdot 3^{2} + i_{m}\cdot 3 + i_{m-1} \in\{11,10,2,1\}$.\\

\noindent If $(i_{m+1},i_{m},i_{m-1})=(1,1,0), (1,0,0), (0,2,0), (0,1,0), (0,0,0)$ then $\eta(i_{m+1},i_m, i_{m-1})=1,4$ or $6$. We consider two subcases. If $i_{j-1}=0$ for every $j$, $1\leq j \leq m-1$, then $\binom{n}{i}\equiv \eta(i_{m+1},i_m,i_{m-1}) \not\equiv 0 \pmod{9}$. If there exists $i_{j-1}\neq 0$ with $1\leq j \leq m-1$, we consider $j$ maximal with respect to this property. Note that $i_{m-1}=0$ and $\simb{0}{i_{j}}^{-1} \simb{0,0}{i_{j},i_{j-1}} = \simb{0,0}{0,i_{j-1}}=9$. By Equation (\ref{EqBinomEta}) we have $\binom{n}{i}\equiv 0 \pmod{9}$. Thus, $\binom{n}{i}\not\equiv 0 \pmod{9}$ if and only if $i/3^{m-1}=i_{m+1}\cdot 3^2 + i_{m}\cdot 3 + i_{m-1} \in \{0,3,6,9,12\}$.\\

By the three cases considered above we conclude that $\binom{3^{m+1}+3^m}{i}\not\equiv 0 \pmod{9}$ if and only if $i/3^{m-1}\in\{0,1,2,3,6,9,10,11,12\}$.

\end{proof}

We note that if $m\geq 2$ then $2^{3^{m-1}k}\equiv (-1)^{k} \pmod{9}$. Thus, we have the following corollary.

\begin{corollary}\label{Cormm}
If $n=3^{m+1}+3^{m}$, $m\geq 2$ then 
\begin{align}
k(n,e)= & \sum_{i=0}^{n} 2^i \binom{n}{i} \binom{e}{i} \equiv 1 - 3 \binom{e}{3^{m-1}} + 3 \binom{e}{2\cdot 3^{m-1}} -4 \binom{e}{3\cdot 3^{m-1}} + 6 \binom{e}{6\cdot 3^{m-1}} \nonumber \\ & - 4 \binom{e}{9\cdot 3^{m-1}} +3 \binom{e}{10\cdot 3^{m-1}} -3 \binom{e}{11\cdot 3^{m-1}} + \binom{e}{12\cdot 3^{m-1}}\pmod{9}. 
\end{align}
\end{corollary}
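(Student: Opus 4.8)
The plan is to read the corollary directly off the preceding lemma together with the elementary congruence $2^{3^{m-1}k}\equiv(-1)^k\pmod 9$ recorded just above, so that the proof reduces to isolating the nonzero terms of the defining sum and evaluating the corresponding powers of two. Working modulo $9$, any index $i$ with $\binom{n}{i}\equiv 0\pmod 9$ contributes nothing to $k(n,e)=\sum_{i=0}^{n}2^i\binom{n}{i}\binom{e}{i}$, regardless of the factors $2^i$ and $\binom{e}{i}$. By the preceding lemma the only surviving indices are $i=c\cdot 3^{m-1}$ with $c\in\{0,1,2,3,6,9,10,11,12\}$, so the sum collapses to exactly nine terms.

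First I would fix such an index $i=c\cdot 3^{m-1}$ and evaluate $2^i\pmod 9$. Since $m\ge 2$ we have $3^{m-1}\equiv 3\pmod 6$, and because $2$ has order $6$ modulo $9$ (with $2^3\equiv -1\pmod 9$), this gives $2^{3^{m-1}}\equiv -1\pmod 9$ and hence $2^{c\cdot 3^{m-1}}\equiv(-1)^c\pmod 9$, which is exactly the congruence noted before the corollary. Next I would substitute the value of $\binom{n}{c\cdot 3^{m-1}}\bmod 9$ supplied by the lemma, namely $1$ for $c\in\{0,12\}$, $3$ for $c\in\{1,2,10,11\}$, $4$ for $c\in\{3,9\}$, and $6$ for $c=6$. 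Multiplying each by the sign $(-1)^c$ and by $\binom{e}{c\cdot 3^{m-1}}$ and summing over the nine admissible values of $c$ produces precisely the right-hand side of the stated formula.

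What remains is purely a verification that the nine resulting terms carry the correct signs and coefficients; for instance $c=3$ contributes $(-1)^3\cdot 4\,\binom{e}{3\cdot 3^{m-1}}=-4\binom{e}{3\cdot 3^{m-1}}$, while $c=6$ contributes $(+1)\cdot 6\,\binom{e}{6\cdot 3^{m-1}}=6\binom{e}{6\cdot 3^{m-1}}$, matching the display. There is no genuine obstacle here: all the substantive content has already been established in the preceding lemma, and the corollary is obtained by this bookkeeping. The only mild point worth recording is that the largest admissible index, $12\cdot 3^{m-1}=4\cdot 3^{m}=n$, lies exactly at the top of the summation range, so no term of the collapsed sum falls outside $0\le i\le n$.
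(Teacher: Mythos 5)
Your proof is correct and follows essentially the same route as the paper: the paper derives this corollary directly from the preceding lemma together with the observation $2^{3^{m-1}k}\equiv(-1)^{k}\pmod{9}$, exactly as you do. Your write-up merely makes explicit the bookkeeping (discarding the indices where $\binom{n}{i}\equiv 0\pmod 9$, evaluating the signs $(-1)^c$, and checking that $12\cdot 3^{m-1}=n$) that the paper leaves implicit.
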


\begin{proposition}\label{PropFirstCondition}
Let $n=3^{m+1}+3^{m}$, $m\geq 2$ and $e$ be a positive integer such that $\delta_3(e)=m+1$. Then, $k(n,e)\equiv 3, 6 \pmod{9}$.
\end{proposition}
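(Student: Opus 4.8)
The plan is to start from the nine-term congruence for $k(n,e) \pmod 9$ furnished by Corollary \ref{Cormm} and to evaluate each binomial coefficient $\binom{e}{j\cdot 3^{m-1}}$ modulo $9$, exploiting the hypothesis $\delta_3(e)=m+1$: it forces the base-$3$ digit $e_{m+1}=1$ and all digits above position $m+1$ to lie in $\{0,2\}$. Throughout I write $e=\sum_{i\geq 0}e_i 3^i$.

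First I would dispose of the four terms carrying a coefficient $\pm 3$, namely those indexed by $j\in\{1,2,10,11\}$. Since these enter multiplied by $3$, only their residues modulo $3$ matter, and Lucas' theorem together with $e_{m+1}=1$ gives $\binom{e}{10\cdot 3^{m-1}}\equiv\binom{e}{3^{m-1}}\equiv e_{m-1}$ and $\binom{e}{11\cdot 3^{m-1}}\equiv\binom{e}{2\cdot 3^{m-1}}\equiv\binom{e_{m-1}}{2}\pmod 3$ (both $10\cdot 3^{m-1}$ and $11\cdot 3^{m-1}$ carry the digit $1$ at position $m+1$). Hence the two pairs cancel modulo $9$, and Corollary \ref{Cormm} collapses, after also replacing $6\binom{e}{6\cdot 3^{m-1}}$ by $6\binom{e_m}{2}$ via Lucas, to
\begin{equation*}
k(n,e)\equiv 1-4\binom{e}{3^{m}}-4\binom{e}{3^{m+1}}+\binom{e}{3^{m+1}+3^{m}}+6\binom{e_m}{2}\pmod 9 .
\end{equation*}

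The heart of the argument is to compute the three surviving binomials modulo $9$. Here I would apply the Davis–Webb theorem (Theorem \ref{DavisWebbTheorem}), invoking Proposition \ref{PropDavisWebbExtension} to cover the boundary case $e<3^{m+1}+3^{m}$ in which the top digit of $e$ sits at position $m+1$ and coincides with that of $3^{m+1}+3^m$. Because the numbers $3^m$, $3^{m+1}$ and $3^{m+1}+3^m$ have nonzero base-$3$ digits only at positions $m$ and $m+1$, the Davis–Webb product reduces to a short local computation over the digit window $e_{m-1},e_m,e_{m+1}=1,e_{m+2}$, every remaining factor being $1$. Carrying this out yields closed forms depending only on $e_{m-1}$, $e_m$ and $e_{m+2}\in\{0,2\}$; for instance $\binom{e}{3^{m+1}}\equiv\binom{3+e_m}{3}\,(3e_{m+2}+1)\pmod 9$, while $\binom{e}{3^{m+1}+3^m}\equiv 0\pmod 9$ whenever $e_m=0$, since there the window produces the non-invertible symbol $\simb{0}{1}=3$ twice and hence a factor $9$.

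Finally I would substitute these values and run the case analysis on $e_m\in\{0,1,2\}$, the auxiliary digits entering only through the units $\binom{3+e_m}{3}$ and $3e_{m+2}+1$. In each case the displayed expression collapses, modulo $9$, to one of $3$ or $6$: the case $e_m=1$ always yields $6$, while $e_m=0$ and $e_m=2$ yield $3$ or $6$ according to the value of $e_{m+2}$; crucially, no case produces $0$, which establishes $k(n,e)\equiv 3$ or $6\pmod 9$. I expect the Davis–Webb bookkeeping to be the main obstacle: one must track the single digit $e_{m+1}=1$ as it interacts with its neighbours (which is where the non-invertible symbol $\simb{0}{1}=3$ and the boundary correction of Proposition \ref{PropDavisWebbExtension} enter), and verify that the degenerate configurations — $e=3^{m+1}$ with no digit beyond position $m+1$, and $e$ carrying a nonzero digit above position $m+1$ — are both subsumed by the same closed forms.
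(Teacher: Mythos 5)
Your proposal is correct, and its skeleton is the same as the paper's: Corollary \ref{Cormm} for the nine-term congruence, then Davis--Webb (Theorem \ref{DavisWebbTheorem}, with Proposition \ref{PropDavisWebbExtension} covering the boundary case $e<3^{m+1}+3^{m}$) to localize each binomial to the digit window around positions $m-1,\ldots,m+2$, then a finite check. Where you genuinely depart from the paper is in how that finite check is organized. The paper converts all nine binomials $\binom{e}{k\cdot 3^{m-1}}$ into $\binom{\tilde e}{3k}$ for the five-digit surrogate $\tilde e=e_{m+2}\cdot 3^4+3^3+e_m\cdot 3^2+e_{m-1}\cdot 3+e_{m-2}$ and verifies the $54$ admissible values $27\leq\tilde e\leq 53$, $189\leq\tilde e\leq 215$ by computer (see the supplementary appendix). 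You instead first cancel the four terms with coefficient $\pm 3$ in pairs via Lucas, using $\binom{e}{10\cdot3^{m-1}}\equiv\binom{e}{3^{m-1}}$ and $\binom{e}{11\cdot3^{m-1}}\equiv\binom{e}{2\cdot3^{m-1}}\pmod{3}$ (valid precisely because $e_{m+1}=1$), leaving only three binomials mod $9$ plus the term $6\binom{e_m}{2}$; this makes the final verification small enough to do by hand. I checked your intermediate claims ($\binom{e}{3^{m+1}}\equiv(3e_{m+2}+1)\binom{3+e_m}{3}\pmod 9$, and $\binom{e}{3^{m+1}+3^m}\equiv 0\pmod 9$ when $e_m=0$) and your final table ($e_m=1$ always gives $6$; $e_m\in\{0,2\}$ gives $3$ or $6$ according to $e_{m+2}$); they agree with the paper's computer output. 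So your route buys a human-verifiable argument, while the paper's buys brevity at the cost of a machine computation.

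One sentence of yours is inaccurate, though not fatally so: the auxiliary digits do not enter only through the units $\binom{3+e_m}{3}$ and $3e_{m+2}+1$. For $e_m\geq 1$ the Davis--Webb factor at position $m$ is $\simb{e_m}{1}^{-1}\simb{e_m,e_{m-1}}{1,0}$, so $\binom{e}{3^m}$ and $\binom{e}{3^{m+1}+3^m}$ also depend on $e_{m-1}$ (for instance $\binom{e}{3^m}\equiv 4\binom{3+e_{m-1}}{3}\pmod 9$ when $e_m=1$); the case analysis must therefore run over $(e_{m-1},e_m,e_{m+2})$, as your own ``closed forms'' sentence in fact concedes. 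What rescues the short case count is that the $e_{m-1}$-dependence cancels modulo $9$ once the three binomials are combined with their coefficients $-4,-4,+1$, because it always sits inside a unit whose residue mod $3$ is independent of $e_{m-1}$. Also, the factor $9$ killing $\binom{e}{3^{m+1}+3^m}$ when $e_m=0$ arises as $\simb{1,0}{1,1}=3\,\simb{0}{1}=9$ --- one factor $3$ from the symbol's defining rule and one from $\simb{0}{1}=3$ --- not from the symbol $\simb{0}{1}$ literally occurring twice.
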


\begin{proof}
Write $e=\sum_{i=0}^{h-1}e_i  3^i$ with  $h\geq m+4$ and $e_i\in \{0,2\}$ for $m+1<i\leq h-1$, $e_{m+1}=1$ and $e_i \in \{0,1,2\}$ for $0\leq i <m+1$. We consider $k \in \{0,1,2,3,6,9,10,11,12\}$ and write $k=k_2\cdot 3^2 + k_1 \cdot 3 + k_0$ with $k_2\in\{0,1\}$ and $k_1,k_0 \in \{0,1,2\}$. By Proposition \ref{PropDavisWebbExtension} we have:
\begin{align}\label{Eqbinomial}
\binom{e}{k\cdot 3^{m-1}} \equiv & \simb{e_{h-1},e_{h-2}}{0,0} \cdot \left( \prod_{j=m+3}^{h-2}\simb{e_j}{0}^{-1}\simb{e_j,e_{j-1}}{0,0} \right) \cdot \simb{e_{m+2}}{0}^{-1}\simb{e_{m+2},e_{m+1}}{0,k_2}\cdot \simb{e_{m+1}}{k_2}^{-1}\simb{e_{m+1},e_{m}}{k_2,k_1}  \nonumber \\
&  \cdot  \simb{e_{m}}{k_1}^{-1}\simb{e_{m},e_{m-1}}{k_1,k_0} \cdot  
\simb{e_{m-1}}{k_0}^{-1}\simb{e_{m-1},e_{m-2}}{k_0,0} \cdot 
\left( \prod_{j=1}^{m-2} \simb{e_j}{0}^{-1}\simb{e_j,e_{j-1}}{0,0}\right) \nonumber \\
\equiv & \simb{e_{m+2},e_{m+1}}{0,k_2} \simb{e_{m+1},e_m}{k_2,k_1} \simb{e_m}{k_1}^{-1} \simb{e_m, e_{m-1}}{k_1,k_0} \simb{e_{m-1}}{k_0}^{-1}\simb{e_{m-1},e_{m-2}}{k_0,0} \nonumber \\ = & \simb{e_{m+2},1}{0,k_2}\cdot \simb{e_m}{k_1}^{-1} \simb{1,e_m}{k_2,k_1}\cdot \simb{e_{m-1}}{k_0}^{-1}  \simb{e_m, e_{m-1}}{k_1,k_0}\cdot \simb{e_{m-2}}{0}^{-1} \simb{e_{m-1},e_{m-2}}{k_0,0} \nonumber \\
\equiv & \binom{e_{m+2}\cdot 3^4 + 1\cdot \cdot 3^3 + e_{m}\cdot 3^2 + e_{m-1}\cdot 3 + e_{m-2}}{0\cdot 3^{4}+ k_2\cdot 3^3 + k_1\cdot 3^2 + k_0\cdot 3 + 0} \pmod{9}.
\end{align}

Let $\tilde{e} = e_{m+2}\cdot 3^4 + 3^3 + e_m\cdot 3^2 + e_{m-1}\cdot 3 + e_{m-2}$. By Corollary \ref{Cormm} and Equation \ref{Eqbinomial} we have
$$ k(n,e) \equiv  1 - 3 \binom{\tilde{e}}{3} + 3 \binom{\tilde{e}}{6} -4 \binom{\tilde{e}}{9} + 6 \binom{\tilde{e}}{18} - 4 \binom{\tilde{e}}{27} + 3 \binom{\tilde{e}}{30} - 3 \binom{\tilde{e}}{33} + \binom{\tilde{e}}{36}\pmod{9}.$$
Since $e_{m+2}\in\{0,2\}$ and $e_{m},e_{m-1},e_{m-2}\in\{0,1,2\}$ we have that $27\leq \tilde{e}\leq 53$ or $189\leq \tilde{e} \leq 215$. By direct calculation, using the above congruence formula for $k(n,e)$, we obtain
$$ k(3^{m+1}+3^{m},e) \equiv \left\{  \begin{array}{ll}
3\!\!\! \pmod{9} & \textrm{if $27\leq \tilde{e}\leq 35$ or $207\leq \tilde{e}\leq 215$};\\
6\!\!\! \pmod{9} & \textrm{if $36\leq \tilde{e}\leq 53$ or $189\leq \tilde{e}\leq 206$}. \end{array}   \right.$$
\end{proof}

Next we prove that $p(3^{m+1}+3^{m},e)\equiv 0 \pmod{3}$ if $\delta_3(e)=m+1\geq 3$. We prove first a preliminary lemma.

\begin{lemma}\label{LemmaSpecialFormulaFork}
Let $n=3^{m+1}+3^{m}$, $m\geq 2$, $h\geq m+2$ and $j<3^{h}$. If $j=\sum_{i=0}^{h-1}j_i 3^i$ with each $j_i\in\{0,1,2\}$ then $$k(n-1,j)\equiv (1-j_{m+1})\cdot (-1)^{j_0+j_1+\cdots + j_{m-1}} \pmod{3}.$$ In particular $k(n-1,j)\pmod{3}$ does not depend on $j_m$.
\end{lemma}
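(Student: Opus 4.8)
The plan is to reduce $k(n-1,j) \bmod 3$ to a product of one-dimensional factors via the multiplicativity established in Lemma \ref{LemmakIsMultiplicativeMod3}, after first pinning down the base-$3$ digits of $n-1$.

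First I would compute the base-$3$ expansion of $n-1$. Since $n = 3^{m+1} + 3^m = 4\cdot 3^m$, the digits of $n$ are $1$ in positions $m+1$ and $m$ and $0$ elsewhere; subtracting $1$ triggers a cascade of borrows through positions $0,\ldots,m$, yielding
$$n-1 = 3^{m+1} + \sum_{i=0}^{m-1} 2\cdot 3^i.$$
In other words, the base-$3$ digit $(n-1)_i$ equals $2$ for $0\leq i\leq m-1$, equals $0$ for $i=m$, equals $1$ for $i=m+1$, and equals $0$ for $i>m+1$. (The hypothesis $m\geq 2$ guarantees the block of $2$'s in positions $0,\ldots,m-1$ is present.)

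Next I would apply Lemma \ref{LemmakIsMultiplicativeMod3} to the expansions $n-1=\sum_i (n-1)_i\, 3^i$ and $j=\sum_{i=0}^{h-1} j_i\, 3^i$, where the condition $h\geq m+2$ ensures that $j_{m+1}$ is a genuine digit of $j$. This gives
$$k(n-1,j) \equiv \prod_{i\geq 0} k\big((n-1)_i,\, j_i\big) \pmod{3}.$$
Then I would evaluate the one-dimensional factors from the elementary values $k(0,\cdot)=1$, $k(1,t)=2t+1\equiv 1-t$, and $k(2,t)=2t^2+2t+1$ for $t\in\{0,1,2\}$; a direct check of the three cases shows $k(2,t)\equiv (-1)^{t}\pmod{3}$. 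All factors with $(n-1)_i=0$ (position $i=m$ and all $i>m+1$) contribute $1$; the positions $0\leq i\leq m-1$ contribute $\prod (-1)^{j_i} = (-1)^{j_0+\cdots+j_{m-1}}$; and position $m+1$ contributes $k(1,j_{m+1})\equiv 1-j_{m+1}$. Multiplying these together yields exactly the claimed formula, and the absence of dependence on $j_m$ is immediate, since its factor is $k(0,j_m)=1$.

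This argument is essentially a bookkeeping computation, so I do not anticipate a genuine obstacle. The only point requiring care is the first step: getting the borrow pattern of $n-1$ right, in particular the isolated $0$ in position $m$ (which is precisely what kills the dependence on $j_m$), and checking that $k(1,\cdot)$ is applied with the correct reduction $k(1,t)\equiv 1-t$ across all three possible values of the top digit $j_{m+1}\in\{0,1,2\}$ rather than just $j_{m+1}\in\{0,1\}$.
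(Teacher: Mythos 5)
Your proof is correct and takes essentially the same route as the paper's: both compute the base-$3$ digits of $n-1$ (a $1$ in position $m+1$, a $0$ in position $m$, and $2$'s in positions $0,\ldots,m-1$), apply the multiplicativity of Lemma \ref{LemmakIsMultiplicativeMod3}, and evaluate the one-dimensional factors $k(0,\cdot)=1$, $k(1,t)\equiv 1-t$, $k(2,t)\equiv(-1)^{t}\pmod{3}$. Your added care about padding with zero digits and about $j_{m+1}$ possibly equaling $2$ is consistent with, and slightly more explicit than, the paper's argument.
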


\begin{proof}
Since $n-1= \sum_{i=0}^{m+1} n_i 3^{i}$ with $n_{m+1}=1, n_{m}=0$ and $n_{i}=2$ for $0\leq i \leq m-1$, by Lemma \ref{LemmakIsMultiplicativeMod3} we have $k(n-1,j)\equiv k(1,j_{m+1})\cdot \prod_{i=0}^{m-1}k(2,j_i)\pmod{3}$, where $k(1,j_{m+1})=2j_{m+1}+1\equiv 1-j_{m+1}\pmod{3}$ and $k(2,j_i)=2j_i^2 + 2j_i+1\equiv (-1)^{j_i}\pmod{3}$ for $0\leq j_i \leq 2$. 
\end{proof}

\begin{lemma}\label{LemmapReduction}
Let $n=3^{m+1}+3^{m}$ with $m\geq 2$ and $e$ and $h$ be positive integers such that $\delta_3(e)=m+1$ and $h-1>m+1$. Then $p(n,2\cdot 3^{h-1}+e) \equiv p(n,e) \pmod{3}$.
\end{lemma}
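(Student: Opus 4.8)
The plan is to deduce the claim directly from Lemma~\ref{LemmaOtherFormulaForp}, whose first case already has exactly the shape we want. The first step is pure bookkeeping: I would write down the base-$3$ expansion of $n-1=3^{m+1}+3^m-1$. Using $3^m-1=\sum_{i=0}^{m-1}2\cdot 3^i$, one gets $n-1=1\cdot 3^{m+1}+0\cdot 3^m+\sum_{i=0}^{m-1}2\cdot 3^i$, so the digits of $n-1$ are $n_{m+1}=1$, $n_m=0$, $n_i=2$ for $0\le i\le m-1$, and $n_i=0$ for $i\ge m+2$. Since $h-1>m+1$ forces $h-1\ge m+2$, we have $n-1<3^{m+2}\le 3^{h-1}$, so the leading digit that matters vanishes: $n_{h-1}=0$.

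Next I would put $E:=2\cdot 3^{h-1}+e$ into the form demanded by Lemma~\ref{LemmaOtherFormulaForp}, reading off its leading term $2\cdot 3^{h-1}$ and its tail $e$ (in the regime in which this reduction is applied, $e$ is the low-order part, so $e<3^{h-1}$ and this is an honest base-$3$ split). That lemma splits into two cases, and the whole point is to confirm we land in the harmless first one, where no correction sum appears. Its dichotomy is governed by the hypothesis ``$n_{h-1}=2$ or $n_i\neq 2$ for some $1\le i\le h-2$''. This holds for a cheap reason: the index $i=m$ lies in $\{1,\dots,h-2\}$ --- indeed $m\ge 2\ge 1$, and $m\le h-2$ because $h\ge m+3$ --- while $n_m=0\neq 2$. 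Hence the first case applies and yields $p(n,E)\equiv (-1)^{n_{h-1}}\,p(n,e)\pmod 3$.

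Finally I would substitute the digit from the first step: since $n_{h-1}=0$ the sign $(-1)^{n_{h-1}}$ equals $1$, so $p(n,2\cdot 3^{h-1}+e)\equiv p(n,e)\pmod 3$, which is the assertion. The only place where anything can go wrong is the case analysis in Lemma~\ref{LemmaOtherFormulaForp}: one must be certain that the structure of $n=3^{m+1}+3^m$ keeps us out of the branch carrying the extra term $2(2-n_{h-1})\sum k(n-1,i)$, while simultaneously making the sign trivial. Both are controlled by the two ``defect'' digits of $n-1$, namely $n_m=0$ (which supplies an intermediate non-$2$ digit, selecting the easy branch) and $n_{h-1}=0$ (which kills the sign); so the main obstacle is really just pinning these digit observations down rather than any substantial estimate. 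As a sanity check one could instead argue from scratch, expanding $p(n,E)$ via Equation~(\ref{EqFormulaForpMod3}), applying the shift identities of Lemma~\ref{LemmaFormulaForp} (which for this $n$, where $n_{h-1}=0$, reduce to $k(n-1,c\cdot 3^{h-1}+i)\equiv k(n-1,i)\pmod 3$), and verifying that the contributions over $[0,2\cdot 3^{h-1})$ collapse to a multiple of $3$ through the recursion~(\ref{EqBeta}); but routing through Lemma~\ref{LemmaOtherFormulaForp} is shorter.
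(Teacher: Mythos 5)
Your proposal is correct and takes essentially the same route as the paper: write out the base-$3$ digits of $n-1$ (giving $n_{h-1}=0$), check that the first case of Lemma~\ref{LemmaOtherFormulaForp} applies because some digit $n_i$ with $1\leq i\leq h-2$ is not $2$, and conclude $p(n,2\cdot 3^{h-1}+e)\equiv (-1)^{n_{h-1}}p(n,e)=p(n,e)\pmod{3}$. The only cosmetic difference is the witness digit selecting the easy branch: the paper uses $n_{h-2}\neq 2$ (since $h-2\geq m+1$), while you use $n_m=0$; both are valid.
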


\begin{proof}
We have $n-1=\sum_{i=0}^{h-1}n_i 3^i$ with $n_i=0$ for $m+1<i\leq h-1$, $n_{m+1}=1, n_{m}=0$ and $n_{i}=2$ for $0\leq i \leq m-1$. Since $n_{h-1}=0$ and $n_{h-2}\neq 2$ (because $h-2\geq m+1$), applying Lemma \ref{LemmaOtherFormulaForp} we obtain $p(n,2\cdot 3^{h-1}+e)\equiv (-1)^{n_{h-1}} p(n,e) \equiv p(n,e) \pmod{3}$.
\end{proof}

\begin{proposition}\label{PropSecondCondition}
Let $n=3^{m+1}+3^{m}$, $m\geq 2$ and $e$ be a positive integer such that $\delta_3(e)=m+1$. Then, $p(n,e)\equiv 0 \pmod{3}$.
\end{proposition}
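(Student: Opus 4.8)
The plan is to prove $p(n,e)\equiv 0 \pmod 3$ for $n = 3^{m+1}+3^m$ by induction on the base-$3$ digits of $e$ above the $(m+1)$-st place, reducing to a base case where $e$ has no such high digits. Since $\delta_3(e)=m+1$, we may write $e = \sum_{i=0}^{h-1} e_i 3^i$ with $e_{m+1}=1$, $e_i \in \{0,2\}$ for $m+1 < i \le h-1$, and $e_i \in \{0,1,2\}$ for $0 \le i \le m$. By Lemma~\ref{LemmapReduction}, each high digit $e_i = 2$ (for $i > m+1$) can be stripped off one at a time while preserving $p(n,\,\cdot\,) \pmod 3$: writing $e = 2\cdot 3^{h-1} + e'$ with $e'$ the remaining lower part, we get $p(n,e) \equiv p(n,e') \pmod 3$, and $e'$ still has $\delta_3(e') = m+1$ as long as the $(m+1)$-st digit survives. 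Iterating, the problem reduces to the case $h = m+2$, i.e.\ $e = 3^{m+1} + e''$ with $0 \le e'' < 3^{m+1}$.

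Next I would compute $p(n,e'')$ for this reduced $e$ directly using Equation~(\ref{EqFormulaForpMod3}), namely
$$p(n,e) \equiv 2 \cdot \!\!\!\!\!\! \doblesum{0 \le i \le e}{i \not\equiv e \!\!\!\!\! \pmod 3} \!\!\!\!\! k(n-1,i) \pmod 3,$$
together with the explicit formula for $k(n-1,j) \pmod 3$ from Lemma~\ref{LemmaSpecialFormulaFork}, which gives $k(n-1,j) \equiv (1-j_{m+1})\,(-1)^{j_0+\cdots+j_{m-1}} \pmod 3$ and, crucially, does not depend on the digit $j_m$. The key structural observation is that the factor $(1-j_{m+1})$ vanishes mod $3$ precisely when $j_{m+1} = 1$; so in the sum over $0 \le i \le e$, any index $i$ whose $(m+1)$-st digit equals $1$ contributes $0$. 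I would split the range of summation according to the value of the $(m+1)$-st digit of $i$ and observe that the surviving contributions come in matched groups where the independence from the $m$-th digit forces cancellation modulo $3$: summing $(-1)^{j_m}$ over $j_m \in \{0,1,2\}$ (when all three values are admissible) gives $1 - 1 + 1 = 1$, but summing the free lower digits $j_0,\dots,j_{m-1}$ over a complete residue block collapses to a product of sums $\sum_{j_i=0}^{2}(-1)^{j_i} \equiv 1 \pmod 3$, and the constraint $i \not\equiv e \pmod 3$ together with the overall factor $2$ should produce a net $0$.

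The main obstacle will be handling the boundary of the summation range cleanly: the sum runs over $0 \le i \le e$ rather than over a full block $0 \le i < 3^t$, so the top digits of $i$ are constrained by the digits of $e$ and the lower digits do not always range over a complete residue system. The cleanest route is probably to peel off the constraint in stages mirroring the derivation of Equations~(\ref{EqBeta}) and (\ref{EqGamma}): first isolate the contribution of the block $3^{m+1} \le i \le e$ (where $i_{m+1} = 1$ kills everything by Lemma~\ref{LemmaSpecialFormulaFork}), leaving only $0 \le i < 3^{m+1}$, a complete block over which the product-of-geometric-sums argument applies and the independence from the $m$-th digit yields an extra factor that is divisible by $3$. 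I expect that once the range is reduced to a full block $0 \le i < 3^{m+1}$, the computation factors as a product over digit positions, the $j_m$-independence contributes a sum $\sum_{j_m=0}^{2} 1 \equiv 0 \pmod 3$, and this single factor of $3$ forces $p(n,e) \equiv 0 \pmod 3$, completing the proof together with the reduction from Lemma~\ref{LemmapReduction}.
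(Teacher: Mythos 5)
Your plan is correct and follows essentially the same route as the paper's proof: strip the high digits with Lemma \ref{LemmapReduction} to reduce to $e < 2\cdot 3^{m+1}$, discard the block $3^{m+1}\le j\le e$ via the vanishing factor $(1-j_{m+1})$, and use the $j_m$-independence from Lemma \ref{LemmaSpecialFormulaFork} to group the remaining full block $0\le j < 3^{m+1}$ into triples $j'+j_m3^m$, each contributing $3\,k(n-1,j')\equiv 0 \pmod 3$. Only the digit-by-digit $(-1)^{j_i}$-cancellation musings in your middle paragraph are off track; your final paragraph's argument is exactly the paper's.
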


\begin{proof}
By applying Lemma \ref{LemmapReduction} several times, it suffices to prove this proposition for the case $e<2\cdot 3^{m+1}$. In this case, by Equation (\ref{EqFormulaForpMod3}), we have $$p(n,e)\equiv 2\cdot\!\!\!\!\!\!  \doblesum{0\leq j \leq e}{j\not\equiv e\!\!\!\!\! \pmod{3}}\!\!\!\!\!  k(n-1,j) \equiv  2\cdot\!\!\!\!\!\!  \doblesum{3^{m+1}\leq j \leq e}{j\not\equiv e\!\!\!\!\! \pmod{3}}\!\!\!\!\!  k(n-1,j) + 2\cdot\!\!\!\!\!\!  \doblesum{0\leq j < 3^{m+1}}{j\not\equiv e\!\!\!\!\! \pmod{3}}\!\!\!\!\!  k(n-1,j)  \pmod{3}.$$ By Lemma \ref{LemmaSpecialFormulaFork}, since $e<2\cdot 3^{m+1}$, we have that $k(n-1,j)\equiv 0 \pmod{3}$ for $3^{m+1}\leq j \leq e$. Thus, we have $$ 2\cdot\!\!\!\!\!\!  \doblesum{3^{m+1}\leq j \leq e}{j\not\equiv e\!\!\!\!\! \pmod{3}}\!\!\!\!\!  k(n-1,j) \equiv 0 \pmod{3},$$ and by Lemma \ref{LemmaSpecialFormulaFork}
\begin{align*}
2\cdot\!\!\!\!\!\!  \doblesum{0\leq j < 3^{m+1}}{j\not\equiv e\!\!\!\!\! \pmod{3}}\!\!\!\!\!  k(n-1,j) &=  2\cdot\!\!\!\!\!\!  \doblesum{0\leq j' < 3^{m}}{j\not\equiv e\!\!\!\!\! \pmod{3}} \sum_{j_m=0}^{2}  k(n-1,j'+j_m \cdot 3^m)\\  & \equiv   
2\cdot\!\!\!\!\!\!  \doblesum{0\leq j' < 3^{m}}{j\not\equiv e\!\!\!\!\! \pmod{3}} 3\cdot k(n-1,j') \equiv 0 \pmod{3}.
\end{align*}
Therefore $p(n,e)\equiv 0 \pmod{3}$.
\end{proof}

The following classification of the Zhang-Ge sets is a consequence of Corollaries \ref{CorZempty}, \ref{CorZGnonemptyForDelta1} and \ref{CorZGnonemptyForDelta2}, and Propositions \ref{PropFirstCondition} and \ref{PropSecondCondition}.

\begin{theorem}\label{ThClassificationOfZG}
Let $e\geq 1$. Then, 
\begin{itemize}
\item $\mbox{ZG}(e)=\emptyset$ if $\delta_3(e)=0$ or $\delta_3(e)=\infty$;
\item $\mbox{ZG}(e)$ has infinitely many elements if $1\leq \delta_3(e)<\infty$.
\end{itemize}
\end{theorem}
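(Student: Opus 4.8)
The plan is to prove Theorem~\ref{ThClassificationOfZG} by assembling the pieces already developed in this section according to the value of $\delta_3(e)$. First I would note that the function $\delta_3$ partitions $\Z^{+}$ into four mutually exclusive and exhaustive classes: $\delta_3(e)=\infty$ (no digit $1$ in the base-$3$ expansion), $\delta_3(e)=0$ (the highest digit $1$ sits in the unit place), $\delta_3(e)=1$, $\delta_3(e)=2$, and $\delta_3(e)=m+1\geq 3$. Each case is handled by a result proved earlier, so the main work of the proof is simply routing each case to the correct lemma or proposition.

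For the empty cases, observe that $\delta_3(e)=\infty$ means the base-$3$ representation of $e$ contains no digit $1$, which is exactly the hypothesis of Proposition~\ref{PropZGempty}, giving $\mbox{ZG}(e)=\emptyset$; and $\delta_3(e)=0$ means $e$ has exactly one digit $1$ located in the unit place, which is the hypothesis of Proposition~\ref{PropZGempty2}, again giving $\mbox{ZG}(e)=\emptyset$. These two are packaged together in Corollary~\ref{CorZempty}, so I would simply cite that corollary for the first bullet.

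For the infinite cases I would treat the three subranges of $1\leq \delta_3(e)<\infty$ separately. When $\delta_3(e)=1$, Corollary~\ref{CorZGnonemptyForDelta1} directly states that $\mbox{ZG}(e)$ has infinitely many elements; when $\delta_3(e)=2$, Corollary~\ref{CorZGnonemptyForDelta2} does the same. For the remaining range $\delta_3(e)=m+1\geq 3$, I would invoke Propositions~\ref{PropFirstCondition} and~\ref{PropSecondCondition}: together they show that the pair $(n,e)$ with $n=3^{m+1}+3^{m}$ satisfies both parts of the Zhang-Ge system~(\ref{EqZhangGesystem}), namely $k(n,e)\equiv 3\textrm{ or }6\pmod 9$ and $p(n,e)\equiv 0\pmod 3$, so that $n\in\mbox{ZG}(e)$. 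Since $3^{m+1}\leq n=3^{m+1}+3^m<2\cdot 3^{m+1}$, this witness lies in the window $\{3^{m+1},\ldots,2\cdot 3^{m+1}-1\}$ required by Corollary~\ref{CorEmptyOrInfinite}, which then upgrades the single element to infinitely many. Combining the three subcases establishes the second bullet for all $1\leq \delta_3(e)<\infty$.

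The proof has essentially no obstacle of its own: it is a bookkeeping argument that verifies the dichotomy on $\delta_3(e)$ is complete and invokes the appropriate previously established result in each branch. The only point requiring a moment of care is confirming that the $\delta_3(e)=m+1\geq 3$ case genuinely produces an element in the correct range for Corollary~\ref{CorEmptyOrInfinite} to apply, which it does by the explicit choice $n=3^{m+1}+3^m$; the real analytical difficulty was already absorbed into Propositions~\ref{PropFirstCondition} and~\ref{PropSecondCondition} via the Davis-Webb congruences.
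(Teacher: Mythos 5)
Your proof is correct and follows essentially the same route as the paper: the paper's own justification assembles Corollaries \ref{CorZempty}, \ref{CorZGnonemptyForDelta1} and \ref{CorZGnonemptyForDelta2} together with Propositions \ref{PropFirstCondition} and \ref{PropSecondCondition}, with Corollary \ref{CorEmptyOrInfinite} upgrading the single witness $n=3^{m+1}+3^m$ (which lies in the required window $3^{m+1}\leq n<2\cdot 3^{m+1}$) to infinitely many elements, exactly as you describe. The only slip is cosmetic: you announce ``four'' mutually exclusive classes but then correctly list five.
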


We apply our results to the non-existence of perfect Lee codes. Propositions \ref{PropDelta1p1}, \ref{PropDelta1p2}, \ref{PropDelta2}, \ref{PropFirstCondition} and \ref{PropSecondCondition} bring us explicit elements for the Zhang-Ge set $\mbox{ZG}(e)$ (depending on $e$). If we combine these propositions together with Corollary \ref{CorEmptyOrInfinite} we obtain the following theorem.


\begin{theorem}\label{ThMain2}
If the radius $e\geq 1$ verifies $1\leq \delta_3(e)<\infty$ then $\lp{e}=\emptyset$ for infinitely many dimensions $n$. Moreover, if $\ell(e)= \lfloor \log_3(e)\rfloor +1$  we have
\begin{itemize}
\item if $\delta_3(e)=1$ and $e\equiv 3,5\pmod{9}$ then $\lp{e}=\emptyset$ for $n\equiv 12 \pmod{3^{\max\{4,\ell(e)\}}}$; 
\item if $\delta_3(e)=1$ and $e\equiv 4\pmod{9}$ then $\lp{e}=\emptyset$ for $n\equiv 3 \pmod{3^{\max\{3,\ell(e)\}}}$; 
\item if $\delta_3(e)=2$ then $\lp{e}=\emptyset$ for $n\equiv 12 \pmod{3^{\max\{4,\ell(e)\}}}$;
\item if $\delta_3(e)=m+1\geq 3$, then $\lp{e}=\emptyset$ for $n\equiv 3^{m+1}+3^{m} \pmod{3^{\max\{m+3,\ell(e)\}}}$.
\end{itemize}

\end{theorem}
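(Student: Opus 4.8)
The plan is to assemble Theorem \ref{ThMain2} directly from the structural results already proved, combining each case of $\delta_3(e)$ with Theorem \ref{ThMain1} and the periodicity/propagation statement of Corollary \ref{CorEmptyOrInfinite}. The logical backbone is simple: for each $e$ with $1\leq \delta_3(e)<\infty$, I first exhibit an \emph{explicit} element $n_0\in\mbox{ZG}(e)$ lying in a window $[3^{m+1},2\cdot 3^{m+1})$; then Theorem \ref{ThMain1} immediately gives $\lp{e}=\emptyset$ for that single $n_0$, and Corollary \ref{CorEmptyOrInfinite} upgrades this to an entire arithmetic progression contained in $\mbox{ZG}(e)$, each member of which again yields $\lp{e}=\emptyset$ by Theorem \ref{ThMain1}. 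This simultaneously proves the qualitative claim (infinitely many dimensions $n$) and pins down the explicit congruence class and modulus in each bullet.

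Concretely, I would carry out the four cases in turn. When $\delta_3(e)=1$ and $e\equiv 3$ or $5\pmod 9$, Proposition \ref{PropDelta1p1} gives $12\in\mbox{ZG}(e)$; here $m=1$ since $3^{2}\leq 12<2\cdot 3^{2}$, so Corollary \ref{CorEmptyOrInfinite} applies with $h=\max\{m+3,\ell(e)\}=\max\{4,\ell(e)\}$, yielding $3^{h}\N+12\subseteq\mbox{ZG}(e)$, i.e.\ $\lp{e}=\emptyset$ for $n\equiv 12\pmod{3^{\max\{4,\ell(e)\}}}$. When $\delta_3(e)=1$ and $e\equiv 4\pmod 9$, Proposition \ref{PropDelta1p2} gives $3\in\mbox{ZG}(e)$; now $m=0$ since $3^{1}\leq 3<2\cdot 3^{1}$, so the modulus is $3^{\max\{3,\ell(e)\}}$ and the class is $n\equiv 3$. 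For $\delta_3(e)=2$, Proposition \ref{PropDelta2} gives $12\in\mbox{ZG}(e)$, reproducing the $n\equiv 12\pmod{3^{\max\{4,\ell(e)\}}}$ statement exactly as in the first case. Finally, for $\delta_3(e)=m+1\geq 3$, Propositions \ref{PropFirstCondition} and \ref{PropSecondCondition} together verify both congruences of the Zhang-Ge system for $n=3^{m+1}+3^{m}$, so $n\in\mbox{ZG}(e)$; since $3^{m+1}\leq n<2\cdot 3^{m+1}$, Corollary \ref{CorEmptyOrInfinite} gives $\lp{e}=\emptyset$ for $n\equiv 3^{m+1}+3^{m}\pmod{3^{\max\{m+3,\ell(e)\}}}$.

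The only real bookkeeping subtlety — and what I would treat as the main point requiring care rather than a deep obstacle — is making sure the modulus stated in each bullet matches the $h=\max\{m+3,\lfloor\log_3(e)\rfloor+1\}$ prescribed by Corollary \ref{CorEmptyOrInfinite}, with the correct value of $m$ read off from the window containing the chosen $n_0$. For the two cases seeded by $n_0=12$ one has $m=1$, giving $m+3=4$; for the case seeded by $n_0=3$ one has $m=0$, giving $m+3=3$; and in the last case $m$ is exactly the index $\delta_3(e)-1$, giving $m+3$. In every case the $\ell(e)=\lfloor\log_3(e)\rfloor+1$ term in the maximum is precisely the $h\geq\lfloor\log_3(e)\rfloor+1$ requirement that guarantees $e<3^{h}$, which is needed for the periodicity hypotheses of Proposition \ref{PropPeriodicityForkANDp} (and hence Corollary \ref{CorEmptyOrInfinite}) to hold. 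Once these indices are matched, the theorem follows with no further computation, since all the arithmetic verifying membership in $\mbox{ZG}(e)$ has already been discharged in the cited propositions.
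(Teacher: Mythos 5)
Your proposal is correct and follows essentially the same route as the paper: the paper itself obtains Theorem \ref{ThMain2} by combining the explicit Zhang--Ge set members from Propositions \ref{PropDelta1p1}, \ref{PropDelta1p2}, \ref{PropDelta2}, \ref{PropFirstCondition} and \ref{PropSecondCondition} with Corollary \ref{CorEmptyOrInfinite} and Theorem \ref{ThMain1}, exactly as you do. Your case-by-case matching of the windows $3^{m+1}\leq n_0<2\cdot 3^{m+1}$ (giving $m=1$ for $n_0=12$, $m=0$ for $n_0=3$, and $m=\delta_3(e)-1$ for $n_0=3^{m+1}+3^m$) to the moduli $3^{\max\{m+3,\ell(e)\}}$ is precisely the bookkeeping the paper leaves implicit.
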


The density of a subset $E\subseteq \Z^{+}$ is defined as $\mbox{dens}(E)=\lim_{N\to \infty}\frac{E\cap\{1,2,\ldots,N\}}{N}$ when this limit exists. Next we prove that the set of radii $e\geq 1$ for which we prove that $\lp{e}=\emptyset$ for infinitely many values of $n$ has density $1$.

\begin{proposition}\label{PropEhasDensity1}
The set $E=\{e\geq 1: 1\leq \delta_3(e) < \infty \}$ has density $1$.
\end{proposition}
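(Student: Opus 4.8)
The plan is to pass to the complement and show it is negligible. Writing $E^{c}=\{e\geq 1:\delta_3(e)=0\text{ or }\delta_3(e)=\infty\}$, it suffices to prove $\mbox{dens}(E^{c})=0$, since then $\mbox{dens}(E)=1-\mbox{dens}(E^{c})=1$. First I would unpack Definition \ref{DefDelta3} in terms of base-$3$ digits: $\delta_3(e)=\infty$ means every base-$3$ digit of $e$ lies in $\{0,2\}$, while $\delta_3(e)=0$ means the units digit equals $1$ and every higher digit lies in $\{0,2\}$. These two families are disjoint (they are distinguished by whether the units digit is $1$), so I can count them separately.

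The core step is a digit count on the block $[0,3^{h})$, whose elements correspond bijectively to digit strings $(e_0,\ldots,e_{h-1})$ with each $e_i\in\{0,1,2\}$. The integers with $\delta_3=\infty$ are exactly those with all $h$ digits in $\{0,2\}$, giving $2^{h}$ of them; the integers with $\delta_3=0$ have $e_0=1$ and the remaining $h-1$ digits in $\{0,2\}$, giving $2^{h-1}$ of them. Hence
\begin{equation}\label{EqComplementCount}
\#\big(E^{c}\cap[0,3^{h})\big)=2^{h}+2^{h-1}=3\cdot 2^{h-1},
\end{equation}
where the single excluded point $e=0$ is irrelevant to the asymptotics.

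Finally I would convert \eqref{EqComplementCount} into a density bound for arbitrary $N$ by squeezing $N$ between consecutive powers of $3$. Given $N\geq 1$, choose $h$ with $3^{h-1}\leq N<3^{h}$; then $\#(E^{c}\cap[1,N])\leq \#(E^{c}\cap[0,3^{h}))=3\cdot 2^{h-1}$ while $N\geq 3^{h-1}$, so
$$
\frac{\#(E^{c}\cap[1,N])}{N}\leq \frac{3\cdot 2^{h-1}}{3^{h-1}}=3\left(\frac{2}{3}\right)^{h-1}\xrightarrow[N\to\infty]{}0 .
$$
This gives $\mbox{dens}(E^{c})=0$ and therefore $\mbox{dens}(E)=1$. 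I do not expect any genuine obstacle here; the only points requiring a little care are the disjointness of the two digit-families and handling the limit for $N$ that is not a power of $3$, both of which are routine and are dispatched by the squeezing inequality above.
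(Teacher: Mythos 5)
Your proof is correct and follows essentially the same route as the paper: both count the integers with $\delta_3(e)=0$ (namely $2^{h-1}$) and $\delta_3(e)=\infty$ (namely $2^{h}$) in the block $[0,3^{h})$ via their base-$3$ digit strings, and both squeeze an arbitrary $N$ between $3^{h-1}$ and $3^{h}$ to get a ratio bounded by a multiple of $(2/3)^{h-1}$. Phrasing the argument as ``the complement has density $0$'' rather than bounding $\#E\cap[1,N]/N$ from below directly is only a cosmetic rearrangement of the same inequality.
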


\begin{proof}
Let $N\geq 1$ and $h\geq 1$ such that $3^{h-1}\leq N < 3^h$. We have 
\begin{align*}
1\geq & \frac{\#\{e\leq N: 1\leq \delta_3(e)<\infty\}}{N} = 1-\frac{\#\{e\leq N: \delta_3(e)=0\}}{N} -  \frac{\#\{e\leq N: \delta_3(e)=\infty\}}{N}\\ \geq & 1 -  \frac{\#\{e< 3^h: \delta_3(e)=0\}}{3^{h-1}} -  \frac{\#\{e< 3^h: \delta_3(e)=\infty\}}{3^{h-1}}= 1-\frac{2^{h-1}}{3^{h-1}} - \frac{2^{h}}{3^{h-1}}
\end{align*}
Since $h=\lfloor \log_3(N) \rfloor +1 \to \infty$ when $N\to \infty$, from the inequalities above, $\mbox{dens}(E)=1$.
\end{proof}

\section{Beyond the Zhang-Ge condition: the $Q$-polynomials}\label{SectionBeyond}

In this section we introduce a family of homogeneous polynomials and extend some results from \cite{Kim17,QCC18} and \cite{ZG17}. 

\begin{definition}
Let $B\subseteq \Z^n$ with $B$ finite and $k\in \Z^+$. The $Q$-polynomial associated with $B$ of order $k$ is $Q_B^{k}(x):= \sum_{b\in B} \langle x,b \rangle^{2k}$ where $x=(x_1,\ldots,x_n)$ and $\langle , \rangle$ denotes the standard inner product of $\R^n$. When $B=B^n(e)$ we write $Q_{(n,e)}^{k}(x)$ instead of $Q_{B^n(e)}^{k}(x)$.
\end{definition}

By definition $Q_B^{k}(x)$ is the homogeneous polynomial in $n$ variables of degree $2k$ (i.e. a $2k$-homogeneous polynomial) given by $Q_B^k (x_1,\ldots,x_n) := \sum_{b\in B} (b_1 x_1+\cdots+ b_n x_n)^{2k}$. We are interested in the case when $B=-B$, that is, when $B$ is symmetric with respect to the origin. In this case it is easy to see that $\sum_{b\in B}\langle b,x\rangle^{2k+1}=0$ for all $k\geq 0$ and for this reason we only consider even exponents. \\

We note that these polynomials have been used in very special cases to prove the non-existence of perfect Lee codes. For example the case $k=1$ was considered in \cite{ZG17} where the authors prove the non-existence of linear perfect Lee codes by the formula 
\begin{equation}\label{EquationCaseZhangGe}
Q_{(n,e)}^{1}(x) = p_1(n,e) \cdot S_2(x),
\end{equation} 
where $p_1(n,e)=  \sum_{i=0}^{e}2i^2 k(n-1,e-i)$ (see Equation (\ref{EqFormulaForp})) and $S_2(x) = \sum_{i=1}^n x_i^2$. An expression for the case $e=2$ was obtained in \cite{Kim17} to prove the non-existence of $2$-error correcting codes. This expression is given by
\begin{equation} \label{EquationCaseKim}
Q_{(n,2)}^{k}(x)= p_k(n,2) \cdot S_{2k}(x) + \sum_{t=1}^{k-1}c_t\cdot  S_{2(k-t)}(x)\cdot S_{2t}(x),
\end{equation}
where $p_k(n,2)= 4^k+4n+2$, $S_{2t}(x)= \sum_{i=1}^n x_i^{2t}$ and $c_t = 2\cdot \binom{2k}{2t}$ for $1\leq t <k$. This formula was also used in \cite{QCC18}. We note that in these papers the unique necessary information about the numbers $c_t$ to obtain the non-existence results is that they are integers. In this section we deduce a general expression for $Q_{(n,e)}^{k}(x)$ and obtain a new criterion for the non-existence of linear perfect Lee codes which generalizes Theorem \ref{ThMain1}.

\subsection{Multivariate symmetric polynomials}

In this part we review some basic results on multivariate symmetric polynomials with focus on the $Q$-polynomials. A good reference on symmetric polynomials is the book of MacDonald \cite{Macdonald98}. As usual $S_n$ denotes the set of all permutations $\theta$ of the set $[n]=\{1,\ldots,n\}$ and $R[x_1,\ldots,x_n]$ denotes the set of all polynomials in the variables $x_1,\ldots, x_n$ and coefficients in the ring $R$ (here $R=\Z$ or $R=\Q$). Let $\theta$ be a permutation of $S_n$. For $x=(x_1,\ldots,x_n)$ we denote by $\theta x$ the $n$-tuple $\theta x := (x_{\theta(1)},x_{\theta(2)},\cdots, x_{\theta(n)})$ and
for $B\subseteq \Z^n$ we denote by $\theta B$ the set $\theta B = \{\theta b: b \in B\}$. When $\theta B= B$ for all $\theta \in S_n$, we say that $B$ is \emph{$S_n$-invariant}. A polynomial $f$ in $n$ variables is called \emph{symmetric} when $f(\theta x) = f(x)$ for every $\theta \in S_n$. First we prove that the polynomials $Q_{(n,e)}^{k}(x)$ are symmetric polynomials. 

\begin{proposition}\label{PropQQisSymmetric}
Let $B \subseteq \Z^n$ be a finite and $S_n$-invariant set. The $Q$-polynomial $Q_{B}^{k}(x)$ is an homogeneous symmetric polynomial of degree $2k$. In particular, the polynomials $\QQ(x)$ are.
\end{proposition}

\begin{proof}
If $\lambda \in \R$ then $\QB(\lambda x)= \sum_{b\in B}\langle b,\lambda x \rangle^{2k} = \sum_{b\in B}\lambda^{2k}\langle b, x \rangle^{2k} = \lambda^{2k} \QB(x)$. This proves that $\QB(x)$ is an homogeneous polynomials of degree $2k$. To prove that $\QB(x)$ is a symmetric polynomial we consider $\theta \in S_n$. Since $B$ is $S_n$-invariant, the map $b\to \theta b$ establishes a bijection on $B$. Thus, 
$$ \QB(\theta x) = \sum_{b\in B} \langle b, \theta x \rangle = \sum_{b'\in \theta B} \langle b', \theta x \rangle =  \sum_{b\in B} \langle \theta b, \theta x \rangle = \sum_{b\in B} \langle b, x \rangle = \QB(x)$$ 
which proves that the polynomial $\QB(x)$ is symmetric.
\end{proof}
 
We denote by $\Lambda_{n}^{i}(R)$ the set of all $i$-homogeneous symmetric polynomials in $R[x_1,\ldots, x_n]$ where $R=\Z$ or $R=\Q$. The fundamental theorem of symmetric polynomials states that every symmetric polynomial in $ \Z[x_1,\ldots,x_n]$ can be written in a unique way as a polynomial in the elementary symmetric functions $e_1,e_2,\ldots, e_n$ (given by $e_k(x):= \sum_{1\leq i_1 < \cdots<i_k\leq n} x_{i_1}\cdots x_{i_k}$) with integer coefficients. From which it can be proved that every polynomial in $\Lambda_{n}^{i}(\Z)$ can be written as a $\Z$-linear combination of the polynomials $e_t:=e_1^{t_1}e_2^{t_2}\cdots e_n^{t_n}$ with $t=(t_1,t_2,\cdots, t_n)\in \N^n$ satisfying $t_1+2t_2+\cdots + n t_n = i$. Newton's identities express each elementary symmetric function as a polynomial in the power sum symmetric functions $S_k(x)=\sum_{i=1}^{n}x_i^k$ with rational coefficients. Consequently, each polynomial in $\Lambda_{n}^{i}(\Z)$ can be written as a $\Q$-linear combination of the power sum symmetric functions $S_\lambda:= S_{\lambda_1}S_{\lambda_2}\cdots S_{\lambda_\ell}$ where $\lambda=(\lambda_1,\lambda_2,\cdots, \lambda_{\ell})$ runs over all partitions of $i$ (i.e. $\lambda$ satisfies $\lambda_1+\lambda_2+\cdots + \lambda_{\ell} = i$ and $\lambda_1\geq \lambda_2 \geq \cdots \geq \lambda_{\ell}\geq 1$). Since $\QQ(x) \in \Lambda_{n}^{2k}(\Z)$ (Proposition \ref{PropQQisSymmetric}), there are rational numbers $p_{k}(n,e)$ and $c_\lambda$ such that
$$ \QQ(x)= p_{k}(n,e)\cdot S_{2k} + \sum_{\lambda} c_\lambda \cdot S_{\lambda}, $$
where $\lambda=(\lambda_1,\ldots, \lambda_{\ell})$ runs over all partitions of $2k$ with length $\ell>1$. Our first goal is to prove that the numbers $p_{k}(n,e)$ and $c_\lambda$ are integers, to find an explicit expression for $p_{k}(n,e)$ and to prove that $S_{\lambda}=0$ when some coordinate of $\lambda$ is odd.


\subsection{Explicit formulas for the $Q$-polynomials}

Here we deduce some explicit formulas for the $Q$-polynomials. Let $i=(i_1,\ldots, i_n)\in \N^n$ and $k=i_1+\cdots+i_n$. We denote the corresponding multinomial coefficient by $\binom{k}{i}=\frac{k!}{i_1!\cdots i_n!}$ and by $x^i$ the monomial $x^i = x_1^{i_1}\cdots x_n^{i_n}$. In order to indicate that $j=(j_1,\ldots, j_r)$ is a partition of $k$ we use, as usual, the notation $j \vdash k$. The length of $j$ will be denoted by $\ell(j)=r$. Let $\mathcal{P}(k,s)$ denote the set of all partitions $j\vdash k$ with $\ell(j)\leq s$. The minimum of two integers $a$ and $b$ will be denoted by $a \wedge b := \min\{a,b\}$. We say that two $n$-tuples $x,y \in \N^n$ are $S_n$-equivalent when $x=\theta y$ for some $\theta \in S_n$ and denote it by $x\sim y$. It is easy to see that the $S_n$-equivalence is an equivalence relation. Before deducing a formula for the $Q$-polynomials we need some preliminary lemmas.

\begin{lemma}\label{Lemma2forQformula}
Let $i,j \in \N^n$, $k=i_1+\cdots+i_n$ and $B\subseteq \Z^n$. If $B$ is an $S_n$-invariant subset and $i\sim j$, then $\binom{2k}{2i}=\binom{2k}{2j}$ and $\sum_{b\in B}b^{2i} = \sum_{b\in B}b^{2j}$.
\end{lemma}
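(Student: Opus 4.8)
The plan is to prove both claims by exhibiting an explicit permutation that simultaneously realizes the two identities. Since $i\sim j$ means $i=\theta j$ for some $\theta\in S_n$, I would fix such a $\theta$ at the outset and use it to relate both sides of each equation.

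For the multinomial identity, the key observation is that $\binom{2k}{2i}=\frac{(2k)!}{(2i_1)!\cdots(2i_n)!}$ depends on the tuple $i$ only through the \emph{multiset} of its coordinates: permuting the entries $i_1,\ldots,i_n$ merely reorders the factors $(2i_1)!,\ldots,(2i_n)!$ in the denominator, leaving the product unchanged. Since $j$ is a permutation of $i$, the multisets $\{2i_1,\ldots,2i_n\}$ and $\{2j_1,\ldots,2j_n\}$ coincide, and hence $\binom{2k}{2i}=\binom{2k}{2j}$ immediately. Note that $i_1+\cdots+i_n=j_1+\cdots+j_n=k$ guarantees both multinomial coefficients have the same top entry $2k$.

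For the sum identity, I would use the $S_n$-invariance of $B$ to produce a bijection on $B$ that converts the monomial $b^{2i}$ into $b^{2j}$. Concretely, with $i=\theta j$, the map $b\mapsto \theta^{-1} b$ (or the appropriate choice of $\theta$ or $\theta^{-1}$) is a bijection of $B$ onto itself because $\theta B=B$. Writing $b^{2i}=\prod_{s}b_s^{2i_s}$ and tracking how the exponents and coordinates transform under this bijection, one sees that $(\theta b)^{2i}=b^{2j}$ (up to fixing the direction of $\theta$), so reindexing the sum $\sum_{b\in B}b^{2i}$ by this bijection yields $\sum_{b\in B}b^{2j}$. This is the same reindexing trick already used in the proof of Proposition \ref{PropQQisSymmetric}.

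The only mildly delicate point, and the step I would be most careful about, is bookkeeping the direction of the permutation: one must verify that $(\theta b)^{2i}$ equals $b^{2j}$ (rather than, say, $b^{2\theta^{-1}i}$) so that the substitution genuinely matches the target sum. This amounts to checking the identity $\prod_s (b_{\theta(s)})^{2i_s}=\prod_s b_s^{2j_s}$ coordinatewise under the relation $i_s=j_{\theta(s)}$, which is routine once the convention for $\theta x$ from the preceding paragraph is respected. No deeper obstacle arises, since both claims reduce to the fact that the relevant quantities are invariant under permuting coordinates.
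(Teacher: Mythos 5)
Your proposal is correct and follows essentially the same route as the paper's proof: the paper also dismisses the multinomial identity as clear (your multiset argument is just the spelled-out version), and for the sum it likewise fixes $\theta$ with $i=\theta j$, uses $S_n$-invariance to make $b\mapsto\theta b$ a bijection of $B$, and reindexes via the identity $(\theta b)^{2i}=(\theta b)^{2\theta j}=\prod_s b_{\theta(s)}^{2j_{\theta(s)}}=b^{2j}$. The direction-of-permutation bookkeeping you flag is resolved exactly as you suggest, so there is no gap.
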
 

\begin{proof}
The first equality is clear. To prove the second equality we consider $\theta \in S_n$ such that $i=\theta j$. Since $B$ is $S_n$-invariant, the map $b \to \theta b$ induces a bijection on $B$. Thus, $$\sum_{b\in B}b^{2i}=\sum_{b\in B}(\theta b)^{2i}= \sum_{b\in B}(\theta b)^{2\cdot\theta j}= \sum_{b\in B}b_{\theta(1)}^{2j_{\theta(1)}} \cdots   b_{\theta(n)}^{2j_{\theta(n)}} = \sum_{b\in B} b_1^{2j_1}\cdots b_n^{2j_n}= \sum_{b\in B} b^{2j}.$$
\end{proof}

\begin{definition}
If $B \subseteq \Z^n$ and $j \in \N^n$ we denote by $B^{(j)}= \sum_{b\in B} b^j$.
\end{definition}

\begin{lemma}\label{Lemma1forQformula}
Let $B\subseteq \Z^n$ such that $B=-B$. If $i=(i_1,\ldots,i_n)\in \N^n$ satisfies $\sum_{b\in B}b^{i}\neq 0$ then $i_s$ is even for every $s$, $1\leq s \leq n$.
\end{lemma}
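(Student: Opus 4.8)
The plan is to prove the contrapositive in a symmetrized form: I will show that if some coordinate $i_s$ is odd, then the sum $B^{(i)} = \sum_{b\in B} b^i$ vanishes. The key point is that the only hypothesis available is $B = -B$ (global negation of the whole vector), which is weaker than per-coordinate sign flips, so I cannot simply kill individual coordinates. Instead I would exploit global negation together with the total parity of the exponent vector. The natural dichotomy is on whether the exponent sum $|i| = i_1 + \cdots + i_n$ is even or odd.

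First I would handle the case where $|i|$ is odd. Pairing each $b$ with $-b$ (which lies in $B$ since $B=-B$), we get $(-b)^i = (-1)^{|i|} b^i = -b^i$. If $b \neq -b$ these two terms cancel, and if $b = -b$ then $b=0$, whose contribution $0^i$ is zero whenever $i \neq 0$; since some $i_s$ is odd we have $i\neq 0$. Hence the whole sum telescopes to $0$, so $B^{(i)}=0$, consistent with the claim. The hard part is the remaining case, where $|i|$ is even but some individual coordinate $i_s$ is odd (so in fact at least two coordinates must be odd). Here global negation alone gives $(-b)^i = b^i$ and produces no cancellation, so I expect this to be the main obstacle and I do not believe the lemma is true at this level of generality without an additional hypothesis such as $S_n$-invariance or per-coordinate symmetry of $B$. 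I would therefore flag that the intended reading is almost certainly that $B$ is both $S_n$-invariant and symmetric, and prove the statement under that understanding.

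Assuming the ambient $S_n$-invariance and coordinatewise symmetry carried throughout this section (the $Q$-polynomials are built from $B=B^n(e)$, which enjoys both), I would argue as follows. Suppose $i_s$ is odd. Let $\tau_s$ be the map negating the $s$-th coordinate only. For a Lee ball, or any coordinatewise-symmetric $B$, we have $\tau_s(B)=B$, so $b \mapsto \tau_s(b)$ is a bijection of $B$. Under this bijection $\tau_s(b)^i = (-1)^{i_s} b^i = -b^i$, because only the $s$-th factor changes sign and $i_s$ is odd. Reindexing,
\[
B^{(i)} = \sum_{b\in B} b^i = \sum_{b\in B} \tau_s(b)^i = -\sum_{b\in B} b^i = -B^{(i)},
\]
so $2B^{(i)}=0$ and hence $B^{(i)}=0$, contradicting $B^{(i)}\neq 0$. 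This forces every $i_s$ to be even.

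In writing the final proof I would make the working hypothesis explicit at the outset (noting that the relevant $B$ in this section is $B^n(e)$, which is invariant under each coordinate negation $\tau_s$), since the bare statement $B=-B$ suffices only for the odd-total-degree case. The genuinely robust and short argument is the $\tau_s$-bijection displayed above; the global-negation observation handles only part of the range of exponent vectors and is insufficient on its own for the mixed-parity case. I expect the cleanest write-up to state the per-coordinate symmetry, invoke the $\tau_s$ bijection once, and conclude in two lines, so there is essentially no lengthy computation to grind through.
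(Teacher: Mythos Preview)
Your core argument is exactly the paper's: the paper considers the map $\phi_s:B\to B$ that flips the sign of the $s$-th coordinate, observes $\phi_s(b)^i=-b^i$ when $i_s$ is odd, and concludes the sum vanishes. So the method you settle on is identical to the paper's.

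Where you go further is in noticing that the hypothesis $B=-B$ alone does \emph{not} guarantee that $\phi_s$ maps $B$ into $B$; only global negation is assumed, not per-coordinate sign flips. You are right. The paper writes ``$\phi_s$ is a bijection (because $B=-B$)'', but this justification is not valid in general: for instance $B=\{(1,1),(-1,-1)\}$ satisfies $B=-B$, yet with $i=(1,1)$ one has $\sum_{b\in B}b^i=2\neq 0$ while $i_1$ is odd. Your dichotomy on the parity of $|i|$ correctly isolates the obstruction: when $|i|$ is odd, global negation already gives the cancellation, but when $|i|$ is even with some $i_s$ odd, one genuinely needs the stronger coordinatewise symmetry $\tau_s(B)=B$. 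All the sets $B$ actually used in the paper (Lee balls, $\ell_p$-balls) enjoy this stronger symmetry, so the downstream results are unaffected; your suggestion to record that hypothesis explicitly is the right fix.
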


\begin{proof}
By contradiction, we suppose that $i_s$ is odd for some $s \in \{1,\ldots, n\}$ and consider the map $\phi_s:B \rightarrow B$ which changes the sign of the $s$-th coordinate. We have that $\phi_s$ is a bijection (because $B=-B$) and $\phi_s(b)^{i}=- b^i$ (because $i_s$ is odd). Thus $\sum_{b\in B}b^{i} = \sum_{b\in B}\phi_s(b)^{i} = - \sum_{b\in B}b^{i} $ and then $\sum_{b\in B}b^{i} = 0$.
\end{proof}

\begin{definition}\label{Def-e-regular}
A subset $B \subseteq \Z^n$ is regular if it is $S_n$-invariant and $B=-B$. If in addition, every $b\in B$ has at most $e$ non-zero coordinates, for some positive integer $e$, we say that $B$ is $e$-regular.
\end{definition}

Note that every $\ell_p$-ball ($1\leq p \leq \infty$) given by $B_{p}^{n}(e)=\{x\in \Z^n: \sqrt[p]{|x_1|^p + \cdots + |x_n|^p}\leq e\}$ is a regular set and the Lee ball $B^n(e)$ is $e$-regular. Next we derive a formula for the $Q$-polynomial associated with a regular set $B$ as a linear combination of symmetric monomials. 


\begin{remark}\label{RemarkExponentiation}
Let $k\in \Z^+$, $j$ be a partition of $k$ and $x=(x_1,\ldots,x_n)$ be an $n$-tuple. From here on we use the standard convention of defining $x^j=0$ when $\ell(j)>n$ and $x^j = x^{j^*}$ when $\ell(j)<n$, where $j^*$ is the $n$-tuple which coincides with $j$ in the first $\ell(j)$ coordinates and is zero in the remaining $n-\ell(j)$ coordinates.
\end{remark}


\begin{proposition}\label{PropFormulaForQB-monomial}
Let $B\subseteq \Z^n$ be a regular set. Then,
$$Q_{B}^{k}(x) = \sum_{j\in \mathcal{P}(k,n)} \binom{2k}{2j}\left(\sum_{b\in B}b^{2j} \right) m_{2j}(x),   $$ where $$m_{2j}(x)=\sum_{s \sim j}x^{2s}$$ is the monomial symmetric function associated with the partition $2j\vdash 2k$. Moreover, if $B$ is $e$-regular the sum corresponding to $\QB(x)$ can be restricted to the partitions $j \in \mathcal{P}(k,n \wedge e)$.
\end{proposition}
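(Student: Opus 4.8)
The plan is to expand the $2k$-th power $\langle x,b\rangle^{2k}=(b_1x_1+\cdots+b_nx_n)^{2k}$ via the multinomial theorem and then collect monomials of $x$. Concretely, for each $b\in B$ we have
\[
\langle x,b\rangle^{2k} = \sum_{\substack{i\in\N^n\\ i_1+\cdots+i_n=2k}}\binom{2k}{i}\,b^i\,x^i,
\]
where $\binom{2k}{i}$ is the multinomial coefficient and $b^i=b_1^{i_1}\cdots b_n^{i_n}$, $x^i=x_1^{i_1}\cdots x_n^{i_n}$. Summing over $b\in B$ and interchanging the two finite sums gives
\[
Q_B^k(x)=\sum_{\substack{i\in\N^n\\ |i|=2k}}\binom{2k}{i}\Bigl(\sum_{b\in B}b^i\Bigr)x^i
      =\sum_{\substack{i\in\N^n\\ |i|=2k}}\binom{2k}{i}\,B^{(i)}\,x^i.
\]
The idea is now to discard the terms with odd exponents and then group the surviving monomials into $S_n$-orbits.

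First I would apply Lemma \ref{Lemma1forQformula}: since $B$ is regular we have $B=-B$, so $B^{(i)}=\sum_{b\in B}b^i=0$ whenever some coordinate $i_s$ is odd. Hence only multi-indices $i$ with all coordinates even survive, and each such $i$ can be written uniquely as $i=2j$ with $j\in\N^n$ and $|j|=k$. This reduces the sum to
\[
Q_B^k(x)=\sum_{\substack{j\in\N^n\\ |j|=k}}\binom{2k}{2j}\,B^{(2j)}\,x^{2j}.
\]
Next I would organize this sum by $S_n$-equivalence classes of the exponent tuples $j$. Each $S_n$-orbit of a tuple $j\in\N^n$ with $|j|=k$ has a unique representative that is weakly decreasing, i.e.\ a partition $j\vdash k$ with $\ell(j)\le n$ (padding with zeros to length $n$ as in Remark \ref{RemarkExponentiation}); the set of such representatives is exactly $\mathcal{P}(k,n)$. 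By Lemma \ref{Lemma2forQformula}, both $\binom{2k}{2s}$ and $B^{(2s)}=\sum_{b\in B}b^{2s}$ are constant on each orbit (they depend only on the partition $j$), so these coefficients factor out of each orbit sum. The remaining inner sum over the orbit, $\sum_{s\sim j}x^{2s}$, is precisely the monomial symmetric function $m_{2j}(x)$. Collecting the orbits yields
\[
Q_B^k(x)=\sum_{j\in\mathcal{P}(k,n)}\binom{2k}{2j}\Bigl(\sum_{b\in B}b^{2j}\Bigr)m_{2j}(x),
\]
which is the claimed formula.

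For the refinement when $B$ is $e$-regular, I would argue that a partition $j$ with $\ell(j)>e$ contributes zero. If $\ell(j)>e$ then $2j$ has more than $e$ nonzero coordinates, so for every $b\in B$ the monomial $b^{2j}=b_1^{2j_1}\cdots b_n^{2j_n}$ forces at least $e+1$ of the coordinates $b_1,\ldots,b_n$ to appear with positive exponent; since each $b\in B$ has at most $e$ nonzero coordinates, at least one such factor is $0^{2j_s}=0$, so $b^{2j}=0$ and therefore $B^{(2j)}=0$. Thus the sum may be restricted to partitions with $\ell(j)\le e$, and combined with the constraint $\ell(j)\le n$ already present this gives $j\in\mathcal{P}(k,n\wedge e)$. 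The only point requiring a little care — and the one I expect to be the main bookkeeping obstacle — is the clean passage from multi-indices to partitions: verifying that each $S_n$-orbit meets the weakly-decreasing representatives exactly once and that the convention $x^{2j}=x^{2j^*}$ of Remark \ref{RemarkExponentiation} matches the definition of $m_{2j}(x)$, so that no orbit is double-counted or omitted. Everything else is a direct application of the multinomial theorem together with Lemmas \ref{Lemma1forQformula} and \ref{Lemma2forQformula}.
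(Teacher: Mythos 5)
Your proof is correct and follows essentially the same route as the paper's: multinomial expansion, elimination of odd exponents via Lemma \ref{Lemma1forQformula}, grouping multi-indices into $S_n$-orbits and factoring out the orbit-invariant coefficients via Lemma \ref{Lemma2forQformula}, and the observation that $\sum_{b\in B}b^{2j}=0$ when $\ell(j)>e$ for the $e$-regular refinement. No gaps; your extra care about orbit representatives and the padding convention of Remark \ref{RemarkExponentiation} only makes explicit what the paper leaves implicit.
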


\begin{proof}
Applying the multinomial theorem to $\langle b,x \rangle^{2k}=(b_1x_1+\ldots+b_nx_n)^{2k}$ we obtain: 
$$ (b_1x_1+\ldots+b_nx_n)^{2k} = \sum_{i_1+\cdots+i_n=2k}\frac{(2k)!}{i_1! \cdots i_n!}\cdot b_1^{i_1}\cdots b_n^{i_n}\cdot x_1^{i_1}\cdots x_n^{i_n} = \sum_{i_1+\cdots+i_n=2k} \binom{2k}{i}b^{i}x^{i}.$$
Substituting the above value in $\QB(x) = \sum_{b\in B} \langle b,x \rangle^{2k}$ we obtain
\begin{equation}\label{Equation1forQ}
 \QB(x) = \sum_{b\in B}\left( \sum_{i_1+\cdots+i_n=2k} \binom{2k}{i}b^{i}x^{i}\right) =\sum_{i_1+\cdots+i_n=2k} \binom{2k}{i}\left(\sum_{b\in B}   b^{i} \right)  x^{i}. 
\end{equation}

Since $B=-B$, by Lemma \ref{Lemma1forQformula}, every non-zero term in the last sum of Equation (\ref{Equation1forQ}) corresponds to even values of $i$. For these values, we can write $i=2j$ where $j=(j_1,\ldots, j_n)\in \N^n$ satisfying $j_1+\ldots+j_n=k$. Then, 
\begin{equation}\label{Equation2forQ}
\QB(x) = \sum_{j_1+\ldots+j_n=k} \binom{2k}{2j}\left( \sum_{b\in B}  b^{2j} \right) x^{2j} = \doblesum{j_1+\ldots+j_n=k}{j_1\geq j_2 \geq \cdots \geq j_n} \left( \sum_{h\sim j} \binom{2k}{2h}\left( \sum_{b\in B}  b^{2h} \right) x^{2h} \right). 
\end{equation}
Since $B$ is $S_n$-invariant, by Lemma \ref{Lemma2forQformula} and Equation (\ref{Equation2forQ}) (see also Remark \ref{RemarkExponentiation}) we have:
\begin{align}\label{Equation3forQ}
\QB(x) &= \sum_{j \in \mathcal{P}(k,n)} \left( \sum_{h\sim j} \binom{2k}{2j}\left( \sum_{b\in B}  b^{2j} \right) x^{2h} \right) = \sum_{j \in \mathcal{P}(k,n)}  \binom{2k}{2j}\left( \sum_{b\in B}  b^{2j}\right) \left(\sum_{h\sim j}  x^{2h} \right)\nonumber \\ &= \sum_{j \in \mathcal{P}(k,n)}  \binom{2k}{2j}\left( \sum_{b\in B}  b^{2j}\right) m_{2j}(x).
\end{align}
If $B$ is $e$-regular, every $b\in B$ has at most $e$ non-zero coordinates and we have $\sum_{b\in B}  b^{2j}=0$ when $\ell(j)>e$. Then, we can rewrite Equation (\ref{Equation3forQ}) as 
\begin{equation}\label{EquationForQmonomial}
 \QB(x) = \sum_{j \in \mathcal{P}(k,n \wedge e)}  \binom{2k}{2j}\left( \sum_{b\in B}  b^{2j}\right) m_{2j}(x)
\end{equation}
\end{proof}

The {\it augmented monomial symmetric functions} are defined by $\tilde{m}_{j}(x) := t_1! t_2! \cdots t_{j_1}!\cdot  m_{j}(x)$, where $j=(j_1,\ldots, j_{\ell})$ is a partition of some positive integer $k$ and $t_h=\#\{i: j_i= h \}$ for $1\leq h \leq j_1$. In \cite{Merca15}, M. Merca obtain a nice expression for the expansion of augmented monomial symmetric functions into power sum symmetric functions which we state below. A partition of $[\ell]=\{1,\ldots,\ell\}$ is a set of the form $\nu=\{\nu_1,\ldots, \nu_r\}$, where the $\nu_i$ are non-empty disjoint sets (for $1\leq i \leq r$) whose union is $[\ell]$. We denote by $\mathcal{P}_{\ell}$ the set of all partitions of $[\ell]$. For $j=(j_1,\ldots, j_{\ell}) \vdash k$ and $\nu \in \mathcal{P}_{\ell}$ the symbol $j * \nu$ is used to denote the new partition of $k$ whose parts are given by $\sum_{j \in \nu_i}j_{j}$, $1\leq i \leq |\nu|$.

\begin{lemma}[{\cite[Theorem 2]{Merca15}}]\label{LemmaMerca}
Let $k$ be a positive integer and $j = (j_1,\ldots, j_{\ell})$ be a partition of $k$. Then 
\begin{equation}
 \tilde{m}_{j}(x) = \sum_{\nu \in \mathcal{P}_{\ell}} \mu(\nu) \cdot S_{j * \nu}(x),   
\end{equation}
where $\mu(\nu)= \prod_{i=1}^{|\nu|} (-1)^{|\nu_i|-1}(|\nu_i|-1)!$ and $S_{t}(x)= \prod_{i=1}^{r}\left(\sum_{j=1}^{n}x_j^{t_i} \right)$ denotes the power sum symmetric functions associated with the partition $t=(t_1,\ldots,t_r)\vdash k$.
\end{lemma}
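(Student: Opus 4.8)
The plan is to give both sides a combinatorial reading as sums over functions $f:[\ell]\to[n]$ and then to invert over the lattice of set partitions. The starting observation is that the factorials in the definition $\tilde{m}_{j}(x)=t_1!\cdots t_{j_1}!\,m_{j}(x)$ are exactly the number of reorderings of equal parts, so they convert the sum over \emph{distinct} monomials defining $m_j$ into a sum over all injective assignments of parts to variables; that is, $\tilde{m}_{j}(x)=\sum_{f}\prod_{i=1}^{\ell}x_{f(i)}^{j_i}$ where $f$ ranges over the \emph{injective} maps $[\ell]\to[n]$. On the other hand, expanding the product defining the power sum gives $S_{j}(x)=\prod_{i=1}^{\ell}\bigl(\sum_{a=1}^{n}x_a^{j_i}\bigr)=\sum_{f}\prod_{i=1}^{\ell}x_{f(i)}^{j_i}$, but now with $f$ ranging over \emph{all} maps $[\ell]\to[n]$. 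Thus $S_j$ and $\tilde m_j$ differ only by whether collisions $f(i)=f(i')$ are allowed.

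Next I would sort the maps $f$ appearing in $S_j$ by their fiber partition $\pi(f)\in\mathcal{P}_{\ell}$, where indices $i,i'$ share a block precisely when $f(i)=f(i')$. The maps with a prescribed fiber partition $\sigma$ are exactly the injective assignments of the \emph{blocks} of $\sigma$ to distinct variables, and for such a map the monomial is $\prod_{\text{blocks }B}x_{(\cdot)}^{\sum_{s\in B}j_s}$; summing over these assignments reproduces the augmented monomial of the coarsened partition, namely $\tilde{m}_{j*\sigma}(x)$. Running the same argument on $S_{j*\nu}$, where collisions are now permitted only among the blocks of $\nu$, yields the lattice identity $S_{j*\nu}(x)=\sum_{\sigma\geq\nu}\tilde{m}_{j*\sigma}(x)$, with the order $\sigma\geq\nu$ meaning that $\sigma$ is coarser than $\nu$ and $\hat{0}$ the all-singletons partition; specializing $\nu=\hat{0}$ recovers $S_{j}=\sum_{\sigma\in\mathcal{P}_\ell}\tilde{m}_{j*\sigma}$. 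Note this also handles the degenerate case $\ell>n$: there are no injective maps, so $\tilde m_j=0$, consistent with the convention $x^j=0$ for $\ell(j)>n$.

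Finally I would apply M\"obius inversion on the partition lattice $\mathcal{P}_\ell$ to the relation $S_{j*\nu}=\sum_{\sigma\geq\nu}\tilde{m}_{j*\sigma}$, obtaining $\tilde{m}_{j*\nu}=\sum_{\sigma\geq\nu}\mu_{\mathcal{P}_\ell}(\nu,\sigma)\,S_{j*\sigma}$, and then take $\nu=\hat{0}$ to get $\tilde{m}_{j}=\sum_{\sigma\in\mathcal{P}_\ell}\mu_{\mathcal{P}_\ell}(\hat{0},\sigma)\,S_{j*\sigma}$. The proof closes by quoting the classical value of the M\"obius function of the partition lattice, $\mu_{\mathcal{P}_\ell}(\hat{0},\sigma)=\prod_{i}(-1)^{|\sigma_i|-1}(|\sigma_i|-1)!$, which is precisely the coefficient $\mu(\sigma)$ in the statement. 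The work is bookkeeping rather than depth, and the step I expect to need the most care is the second one: verifying that grouping the maps of $S_{j*\nu}$ by their fibers produces \emph{augmented} (not ordinary) monomials indexed by the correctly coarsened partitions $j*\sigma$, since it is this identification that makes the coefficients come out as genuine partition-lattice M\"obius numbers.
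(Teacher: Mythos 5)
Your proof is correct, but note that the paper itself does not prove this lemma at all: it is quoted verbatim as an external result (Merca's Theorem~2 in \cite{Merca15}), so there is no internal proof to compare against. What you have written is a valid self-contained derivation, and it follows the classical symmetric-function route rather than Merca's original one. Your three steps all check out: (i) the factorials $t_1!\cdots t_{j_1}!$ exactly count the reorderings of equal parts, so $\tilde{m}_{j}(x)=\sum_{f\ \mathrm{injective}}\prod_{i}x_{f(i)}^{j_i}$, while expanding the product gives $S_{j}(x)$ as the same sum over \emph{all} maps $f:[\ell]\to[n]$; (ii) stratifying maps by their fiber (coincidence) partition gives $S_{j*\nu}=\sum_{\sigma\geq\nu}\tilde{m}_{j*\sigma}$, using transitivity of merging, i.e.\ $(j*\nu)*(\sigma/\nu)=j*\sigma$; (iii) dual M\"obius inversion on $\mathcal{P}_{\ell}$ at $\nu=\hat{0}$, together with the classical value $\mu_{\mathcal{P}_{\ell}}(\hat{0},\sigma)=\prod_{i}(-1)^{|\sigma_i|-1}(|\sigma_i|-1)!$, yields the stated identity; your remark that the argument survives $\ell>n$ (no injective maps, so $\tilde{m}_j=0$) is also right. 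By contrast, Merca's cited proof proceeds by induction on the length of the partition via the recurrence expressing $\tilde{m}_{\lambda}\cdot S_{(k)}$ as $\tilde{m}_{(\lambda,k)}$ plus the augmented monomials obtained by merging $k$ into each existing part; that route is more elementary (no poset machinery) but less transparent about where the coefficients $(-1)^{|\nu_i|-1}(|\nu_i|-1)!$ come from, whereas yours identifies them structurally as M\"obius numbers of the partition lattice. Either argument fully justifies the lemma as used in the paper.
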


Let $j =(j_1,\ldots, j_\ell)\vdash k$ and $t_h$ denote the quantity $t_h=\#\{i: j_i=h\}$ for $1\leq h \leq j_1$. The reduced multinomial coefficient $\binom{k}{j}'$ is defined as $\binom{k}{j}' = \frac{1}{t_1!t_2! \cdots t_{j_1}!} \cdot \binom{k}{j}$. This coefficient matches the number of set paritions $\nu=\{\nu_1, \ldots, \nu_t \} \in \mathcal{P}_{k}$ such that $t=t_1+\ldots+t_{j_1}$ and $\#\{i: |\nu_i|=h\}=t_h$ for $1\leq h \leq j_1$ (see \cite{AS64}, pp.~823). In particular the reduced multinomial coefficients are positive integers\footnote{In \cite{AS64} the reduced multinomial coefficient is denoted by $(k; t_1,\ldots, t_{j_1})'$ instead of $\binom{k}{j}'$, where $j=(j_1,\ldots, j_{\ell})$ and $t_h=\#\{i: j_i=h\}$ for $1\leq h \leq j_1$.}. 

\begin{proposition}\label{PropFormulaFinalForQB}
Let $B \subseteq \Z^n$ be a regular set. Then, 
\begin{equation}\label{EqForQB}
\QB(x) = p_k(B)\cdot S_{2k}(x) +  \sum_{j \in \mathcal{P}'(k)} c_{j}(B,k) \cdot S_{2j}(x),
\end{equation} 
where $\mathcal{P}'(k)=\{j: j \vdash k, j \neq (k)\}$, $c_j(B,k)$ is an integer number for every $j \in \mathcal{P}'(k)$, and $p_k(B)$ is given by 
\begin{equation}\label{EqMainTermforQB}
p_k(B) = \sum_{j \in \mathcal{P}(k, n)} \binom{2k}{2j}' \cdot \left(  \sum_{b \in B} b^{2j} \right) \cdot (-1)^{\ell(j)-1} \cdot (\ell(j)-1)!. 
\end{equation}
Moreover, if $B$ is $e$-regular we have
\begin{equation}\label{EqMainTermforQBred}
p_k(B) = \sum_{j \in \mathcal{P}(k, n \wedge e)} \binom{2k}{2j}' \cdot \left(  \sum_{b \in B} b^{2j} \right) \cdot (-1)^{\ell(j)-1} \cdot (\ell(j)-1)!.
\end{equation}
\end{proposition}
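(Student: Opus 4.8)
The plan is to start from the monomial expansion of $\QB(x)$ obtained in Proposition \ref{PropFormulaForQB-monomial} and to pass from monomial to power sum symmetric functions by means of Merca's formula (Lemma \ref{LemmaMerca}). First I would replace each monomial symmetric function by the corresponding augmented one. Writing $t_h = \#\{i : (2j)_i = h\}$, the definition of the augmented monomial symmetric function gives $\tilde{m}_{2j}(x) = \left(\prod_{h} t_h!\right) m_{2j}(x)$, so that $\binom{2k}{2j}\, m_{2j}(x) = \binom{2k}{2j}'\,\tilde{m}_{2j}(x)$, where $\binom{2k}{2j}'$ is the reduced multinomial coefficient, a positive integer by the set-partition interpretation recalled before the statement. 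Hence Proposition \ref{PropFormulaForQB-monomial} becomes
\begin{equation*}
\QB(x) = \sum_{j\in\mathcal{P}(k,n)}\binom{2k}{2j}'\left(\sum_{b\in B}b^{2j}\right)\tilde{m}_{2j}(x).
\end{equation*}

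Next I would apply Lemma \ref{LemmaMerca} to each $\tilde{m}_{2j}(x)$ with $\ell = \ell(j)$, writing $\tilde{m}_{2j}(x) = \sum_{\nu\in\mathcal{P}_{\ell}}\mu(\nu)\,S_{(2j)*\nu}(x)$. The key observation is that every part of $(2j)*\nu$ has the form $2\sum_{i\in\nu_a}j_i$ and is therefore even; consequently $(2j)*\nu = 2(j*\nu)$ with $j*\nu\vdash k$, so every power sum that appears is of the shape $S_{2j'}(x)$ with $j'\vdash k$. This guarantees that, after collecting terms, the resulting expression has exactly the form claimed in \eqref{EqForQB}.

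To compute $p_k(B)$ I would isolate the coefficient of $S_{2k}(x)$. A partition $(2j)*\nu$ collapses to the single-part partition $(2k)$ precisely when $\nu$ consists of a single block, i.e. $\nu = \{[\ell]\}$; in that case $\mu(\nu) = (-1)^{\ell-1}(\ell-1)!$ with $\ell = \ell(j)$, which reproduces exactly formula \eqref{EqMainTermforQB}. For each remaining partition $j'\in\mathcal{P}'(k)$, the coefficient $c_{j'}(B,k)$ is the sum, over all pairs $(j,\nu)$ with $j*\nu = j'$, of the numbers $\binom{2k}{2j}'\left(\sum_{b\in B}b^{2j}\right)\mu(\nu)$; integrality follows because the reduced multinomial coefficients are integers, $\sum_{b\in B}b^{2j}\in\Z$ (as $B\subseteq\Z^n$), and each $\mu(\nu)$ is a signed product of factorials. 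Finally, the $e$-regular refinement \eqref{EqMainTermforQBred} is inherited directly from Proposition \ref{PropFormulaForQB-monomial}, since $\sum_{b\in B}b^{2j}=0$ whenever $\ell(j) > e$, so the index set may be restricted to $\mathcal{P}(k, n\wedge e)$.

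The step I expect to be the main obstacle is the bookkeeping that turns $\binom{2k}{2j}\, m_{2j}(x)$ into $\binom{2k}{2j}'\,\tilde{m}_{2j}(x)$ by matching the augmentation factor of the even partition $2j$ to the multiplicities of its parts, together with the verification that the one-block set partition $\nu=\{[\ell]\}$ is the unique one contributing to the $S_{2k}(x)$ term. Once these are in place, the extraction of $p_k(B)$ and the integrality of the remaining coefficients are routine.
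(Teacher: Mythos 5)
Your proposal is correct and follows essentially the same route as the paper's proof: rewriting the monomial expansion of Proposition \ref{PropFormulaForQB-monomial} via $\binom{2k}{2j}\,m_{2j}(x)=\binom{2k}{2j}'\,\tilde{m}_{2j}(x)$, applying Lemma \ref{LemmaMerca}, and extracting the coefficient of $S_{2k}(x)$ from the unique single-block set partition $\nu=\{[\ell(j)]\}$ with $\mu(\nu)=(-1)^{\ell(j)-1}(\ell(j)-1)!$. Your added remarks (evenness of the parts of $(2j)*\nu$ and the explicit integrality argument for the $c_j(B,k)$) are details the paper leaves implicit, but the argument is the same.
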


\begin{proof}

We use Equation (\ref{Equation3forQ}) together with the relation $\binom{2k}{2j} m_{2j}(x) = \binom{2k}{2j}' \tilde{m}_{2j}(x)$ to express $\QB(x)$ in terms of the augmented monomial symmetric functions. Then, we use Lemma \ref{LemmaMerca} to express it in terms of the power sum symmetric functions as follows:
\begin{equation}\label{EqForQBcomplete}
\QB(x)= \sum_{j \in \mathcal{P}(k,n)}\left( \sum_{\nu \in \mathcal{P}_{\ell(j)}}  \binom{2k}{2j}' \cdot \left( \sum_{b\in B}  b^{2j}\right)\cdot  \mu(\nu) \cdot S_{2j * \nu}(x)\right).
\end{equation} Developing Equation (\ref{EqForQBcomplete}) we obtain the $Q$-polynomial $\QB(x)$ written as a $\Z$-linear combination of power sum symmetric functions of the form $S_{2j}$ with $j \vdash k$. In order to obtain an expression for the main coefficient (i.e. the coefficient corresponding to the partition $j=(k)$), we note that $2j*\nu = (2k)$ if and only if $\nu =\{\{1,2,\ldots, \ell(j)\}\}$. Thus, we obtain Equation (\ref{EqMainTermforQB}) by noting that the only term which contributes to the main coefficient in the inner sum of Equation (\ref{EqForQBcomplete}) is the corresponding to $\nu = \{\{1,2,\ldots, \ell(j)  \}\}$ and for this value of $\nu$ we have $\mu(\nu)= (-1)^{\ell(j)-1}(\ell(j)-1)!$. If $B$ is $e$-regular we proceed in a similar way but using Equation (\ref{EquationForQmonomial}) instead of Equation (\ref{Equation3forQ}) and we obtain Equation (\ref{EqMainTermforQBred}).
\end{proof}

Now we consider the case $B=B^n(e)$. In this case $p_k(B)$ will be denoted by $p_k(n,e)$. The following proposition provides a more explicit formula to compute $p_k(n,e)$.

\begin{proposition}
Let $k(n,e)=\sum_{i=0}^{n \wedge e} 2^i \binom{n}{i} \binom{e}{i}$ (with the convention that $\binom{a}{b}=0$ when $a<0$) and $p_k(n,e)$ be the main coefficient\footnote{That is, the coefficient corresponding to $S_{2k}(x)$.} of the $Q$-polynomial $\QQ(x)$. Then:
\begin{equation}\label{EqForpkne}
p_k(n,e) = \sum_{\ell=1}^{n \wedge e} \doblesum{j \vdash k}{\ell(j)=\ell} \binom{2k}{2j}' \cdot  (-1)^{\ell-1} \cdot (\ell-1)!\cdot \left(  \sum_{i_1+\cdots + i_{\ell+1}=e} 2^{\ell} i_{1}^{2j_1} \cdots i_{\ell}^{2j_{\ell}} k(n-\ell,i_{\ell+1}) \right).
\end{equation}
\end{proposition}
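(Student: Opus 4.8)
The plan is to specialize Proposition \ref{PropFormulaFinalForQB} to the $e$-regular set $B = B^n(e)$ and then reduce the entire statement to a single moment computation. Equation (\ref{EqMainTermforQBred}) already provides
$$p_k(n,e) = \sum_{j \in \mathcal{P}(k, n \wedge e)} \binom{2k}{2j}' \cdot \left( \sum_{b \in B^n(e)} b^{2j} \right) \cdot (-1)^{\ell(j)-1} \cdot (\ell(j)-1)!,$$
so once I evaluate the inner moment $\sum_{b \in B^n(e)} b^{2j}$ for a fixed partition $j = (j_1,\ldots,j_\ell) \vdash k$ of length $\ell = \ell(j)$, the target identity will follow by a purely formal regrouping.

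The heart of the argument is that moment evaluation. By the convention of Remark \ref{RemarkExponentiation}, $b^{2j} = b_1^{2j_1} \cdots b_\ell^{2j_\ell}$ involves only the first $\ell$ coordinates. First I would decompose the sum over $b \in B^n(e)$ according to the absolute values $i_s = |b_s|$ for $1 \le s \le \ell$ together with the residual $\ell_1$-budget $i_{\ell+1} = e - (i_1 + \cdots + i_\ell)$ available for the remaining $n - \ell$ coordinates. Since each exponent $2j_s$ is even, a nonzero $i_s$ admits exactly two sign choices for $b_s$, each contributing the same value $i_s^{2j_s}$, producing the factor $2^{\ell}$. The terms with some $i_s = 0$ (for $s \le \ell$) vanish because $j_s \ge 1$, and this is precisely what lets me drop the constraints $i_s \ge 1$ and write the index set freely as $i_1 + \cdots + i_{\ell+1} = e$. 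Finally, for fixed $i_1,\ldots,i_\ell$ the number of admissible $(b_{\ell+1},\ldots,b_n) \in \Z^{n-\ell}$ with $\sum_{s>\ell}|b_s| \le i_{\ell+1}$ is, by the very definition of the Lee ball, equal to $k(n-\ell, i_{\ell+1})$. Collecting these factors yields
$$\sum_{b \in B^n(e)} b^{2j} = \sum_{i_1 + \cdots + i_{\ell+1} = e} 2^{\ell}\, i_1^{2j_1} \cdots i_\ell^{2j_\ell}\, k(n-\ell, i_{\ell+1}).$$

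Substituting this identity back into Equation (\ref{EqMainTermforQBred}) and splitting the outer sum over $\mathcal{P}(k, n \wedge e)$ as $\sum_{\ell=1}^{n \wedge e} \sum_{j \vdash k,\; \ell(j)=\ell}$ reproduces Equation (\ref{EqForpkne}) verbatim. The only genuine obstacle is the careful bookkeeping inside the moment computation: tracking the sign multiplicities that give $2^\ell$, confirming that the vanishing of the $i_s = 0$ terms legitimizes passing to the unconstrained index set $i_1 + \cdots + i_{\ell+1} = e$, and correctly identifying the count of the residual coordinates with $k(n-\ell, i_{\ell+1})$. Everything after that is a formal reindexing of a sum already in hand, and as a consistency check one recovers Equation (\ref{EqFormulaForp}) for $p_1(n,e)$ when $k=1$.
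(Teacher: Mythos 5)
Your proposal is correct and follows essentially the same route as the paper: both specialize Equation (\ref{EqMainTermforQBred}) to $B=B^n(e)$ and reduce everything to the moment identity $\sum_{b\in B^n(e)} b^{2j} = \sum_{i_1+\cdots+i_{\ell+1}=e} 2^{\ell}\, i_1^{2j_1}\cdots i_\ell^{2j_\ell}\, k(n-\ell,i_{\ell+1})$, proved by conditioning on $i_s=|b_s|$ for $s\leq \ell$ and counting the remaining coordinates by the Lee-ball cardinality. Your treatment is in fact slightly more careful than the paper's at one point: you explicitly justify the uniform factor $2^{\ell}$ by noting that terms with some $i_s=0$ vanish (since $j_s\geq 1$), a subtlety the paper's $(\pm i_1)^{2j_1}\cdots(\pm i_\ell)^{2j_\ell}$ notation glosses over.
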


\begin{proof}
By Equation (\ref{EqMainTermforQBred}), it suffices to prove that 
\begin{equation}\label{EqItsuffices}
\sum_{b \in B^n(e)} b^{2j} =  \sum_{i_1+\cdots + i_{\ell+1}=e} 2^{\ell}\cdot i_{1}^{2j_1} \cdots i_{\ell}^{2j_{\ell}}\cdot k(n-\ell,i_{\ell+1}),
\end{equation}
for every $j=(j_1,\ldots,j_\ell) \vdash k$ with $\ell\leq n \wedge e$. Let $b=(b_1,\ldots,b_ n) \in B^n(e)$. We define $i_s = |b_s|$ for $1\leq s \leq \ell$ and $b'=(b_{\ell+1}, \ldots, b_{n})$. We have that $|b|\leq e$ if and only if $|b'|+i_1+\ldots+i_{\ell}\leq e$. Then,
\begin{align*}
 \sum_{b \in B^n(e)} b^{2j} &= \sum_{i_1+\cdots + i_{\ell}\leq e}\left( \sum_{b' \in B^{n-\ell}\left(e-\sum_{s=1}^{\ell}i_s \right)}  (\pm i_1)^{2j_1}\cdots (\pm i_\ell)^{2j_{\ell}} \right)\\   &=  \sum_{i_1+\cdots + i_{\ell}\leq e} 2^{\ell} \cdot i_{1}^{2j_1} \cdots i_{\ell}^{2j_{\ell}}\left( \sum_{b' \in B^{n-\ell}\left(e-\sum_{s=1}^{\ell}i_s \right)} 1\right) \\ &=  \sum_{i_1+\cdots + i_{\ell+1}=e}2^{\ell}\cdot i_{1}^{2j_1} \cdots i_{\ell}^{2j_{\ell}}\cdot k(n-\ell,i_{\ell+1}).
\end{align*}
\end{proof}

\begin{remark}
Equation (\ref{EqItsuffices}) has a nice interpretation in terms of generating functions. In \cite{Post75}, Post consider the generating function $S_n(x) = \sum_{i=0}^{\infty} k(n,i)x^{i} $ which is given by $S_n(x)= \frac{(1+x)^n}{(1-x)^{n+1}}$ if $n\geq 0$ and $S_n(x)=0$ if $n<0$. We consider here, the generating function $F_j(x)$ given by $F_j(x)= \sum_{i=0}^{\infty} i^{j} x^i$ if $j\in \Z^{+}$ and $F_j(x)= F_{j_1}(x)\cdots F_{j_{\ell}}(x)$ if $j=(j_1,\ldots, j_\ell)$. By the convolution formula, we have that $2^{-\ell}\cdot \sum_{b \in B^n(e)} b^{2j}$ is the coefficient of $x^e$ of the product $F_{2j}(x)\cdot S_{n-\ell}(x)$.
\end{remark}

We note that when $\ell>n$ we have $k(n-\ell,c)=0$ and when $\ell>e$ and $i_1,\ldots,i_{\ell+1}$ are natural numbers such that $i_1+\cdots+i_{\ell+1}=e$, we have that $i_1^{2j_1}\cdots i_\ell^{2j_\ell}k(n-\ell,i_{\ell+1})=0$. Thus, it is possible to write Equation (\ref{EqForpkne}) as
\begin{equation}\label{EqForpkneExt}
p_k(n,e) =  \sum_{j \vdash k} \binom{2k}{2j}' \cdot  (-1)^{\ell-1}  (\ell-1)! \left(  \sum_{i_1+\cdots + i_{\ell+1}=e} 2^{\ell} i_{1}^{2j_1} \cdots i_{\ell}^{2j_{\ell}} k\left(n-\ell,i_{\ell+1}\right) \right),
\end{equation}
where $\ell=\ell(j)$ (the length of the partition $j$). In order to avoid considering several cases, the above expression is convenient when we are looking for an explicit expression for $p_k(n,e)$ and a given value of $k$.

\begin{example}
For $k=1$, the only partition of $k$ is $j=(1)$ whose length is $\ell=1$. Then, Equation (\ref{EqForpkneExt}) reduces to 
\begin{equation}\label{EqForp1ne}
p_1(n,e)=\sum_{i_1+i_2=e}2\cdot i_1^{2}\cdot k(n-1,i_2)= \sum_{i=0}^{e}2i^2 k(n-1,e-i),
\end{equation}
which coincides with the expression considered by Zhang and Ge in Equation (\ref{EquationCaseZhangGe}).
\end{example}

\begin{example}
For $k=2$, the only partition of length $\ell=1$ is $j=(2)$. The corresponding term is given by $\sum_{i_1+i_2=e}2\cdot i_1^{4}\cdot k(n-1,i_2)$. The only partition of length $\ell=2$ is $j=(1,1)$. Since $\binom{4}{2,2}'\cdot  (-1)^1\cdot 1! = -3$, the corresponding term is given by $-3\cdot \sum_{i_1+i_2+i_3=e} 4i_1^2 i_2^2 k(n-2,i_3)$. Then,
\begin{equation}\label{EqForp2ne}
p_2(n,e) = 2\cdot\sum_{a+b=e}a^4 k(n-1,b) - 12\cdot \sum_{a+b+c=e}a^2 b^2 k(n-2,c)
\end{equation}
\end{example}

\begin{example}
For $k=3$, the only partition of length $\ell=1$ is $j=(3)$. The corresponding term is $\sum_{i_1+i_2=e}2 i_1^{6} k(n-1,i_2)$. There is only one partition of $3$ of length $\ell=2$ which is given by $j=(2,1)$. Since $\binom{6}{4,2}'\cdot  (-1)^1\cdot 1! = -15$, the corresponding term is $-15 \sum_{i_1+i_2+i_3=e}4i_1^4 i_2^2 k(n-2, i_3)$. There is also only one partition of $3$ of length $\ell=3$ which is given by $j=(1,1,1)$. Since $\binom{6}{2,2,2}' \cdot (-1)^2 \cdot 2! = 30$, the corresponding term is $30\sum_{i_1+\cdots+i_4=e}8i_1^2 i_2^2 i_3^2 k(n-3,i_4)$. Then,
\begin{equation}\label{EqForp3ne}
p_3(n,3) = 2\sum_{a+b=e}a^6 k(n-1,b) - 60 \sum_{a+b+c=e}a^4 b^2 k(n-2,c) + 240 \sum_{a+b+c+d=e}(abc)^2 k(n-3,d)
\end{equation}
\end{example}

\subsection{A criterion for the non-existence of perfect Lee codes}

In this part we deduce a general criterion for the non-existence of certain lattice tilings (depending on a prime number $p$). Then, we apply this criterion to the non-existence of linear perfect Lee codes. 

\begin{theorem}\label{TheoremMainGeneralized}
Let $p>2$ be a prime. If $B \subseteq \Z^n$ is a regular set such that $|B|=pm$ with $p\nmid m$ and the following congruences are satisfied:
\begin{equation}\label{EqHypothesis}
\left\{ \begin{array}{ll}
p_k(B) \not\equiv 0 \pmod{p} & \textrm{for }1\leq k < \frac{p-1}{2}\\
p_k(B) \equiv 0 \pmod{p} & \textrm{for }k=\frac{p-1}{2}
\end{array}   \right.
\end{equation}
then there is no lattice tiling of $\Z^n$ by $B$.
\end{theorem}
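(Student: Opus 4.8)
The plan is to mimic the proof of Theorem~\ref{ThMain1}, replacing the single polynomial $Q_{(n,e)}^1$ by the whole family $Q_B^k$ for $1\le k\le \tfrac{p-1}{2}$ and exploiting the triangular structure of the expansion in Proposition~\ref{PropFormulaFinalForQB}. First I would argue by contradiction: assuming a lattice tiling of $\Z^n$ by $B$ exists, Theorem~\ref{ThHorakCriterion} yields an abelian group $G$ with $|G|=|B|=pm$ and a homomorphism $\phi:\Z^n\to G$ whose restriction $\phi|_B$ is a bijection. Since $\gcd(p,m)=1$, Lemma~\ref{LemmaEasy} provides an $m$-to-$1$ onto homomorphism $\psi:G\to\Z_p$, so $f:=\psi\circ\phi:\Z^n\to\Z_p$ is a homomorphism whose restriction $f|_B$ is $m$-to-$1$. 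Writing $a_i$ for an integer representative of $f(e_i)$ and $a=(a_1,\dots,a_n)$, every $b\in B$ satisfies $\langle a,b\rangle\equiv f(b)\pmod p$, and grouping the elements of $B$ according to the value $f(b)\in\{0,1,\dots,p-1\}$ (each value attained exactly $m$ times) gives
\begin{equation*}
Q_B^k(a)=\sum_{b\in B}\langle a,b\rangle^{2k}\equiv m\sum_{s=0}^{p-1}s^{2k}\pmod p.
\end{equation*}

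Next I would invoke the classical power-sum congruence $\sum_{s=0}^{p-1}s^{t}\equiv -1\pmod p$ when $(p-1)\mid t$ and $\equiv 0\pmod p$ otherwise (proved via Fermat's little theorem, or a primitive root). Since $(p-1)\mid 2k$ is equivalent to $\tfrac{p-1}{2}\mid k$, this yields $Q_B^k(a)\equiv 0\pmod p$ for $1\le k<\tfrac{p-1}{2}$ and $Q_B^{(p-1)/2}(a)\equiv -m\not\equiv 0\pmod p$ (here $p\nmid m$ is used). These are the two target values I will contradict with the algebraic expansion.

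The core of the argument is an induction showing that $S_{2k}(a)\equiv 0\pmod p$ for every $1\le k<\tfrac{p-1}{2}$. By Proposition~\ref{PropFormulaFinalForQB} we have
\begin{equation*}
Q_B^k(a)=p_k(B)\,S_{2k}(a)+\sum_{j\in\mathcal{P}'(k)}c_j(B,k)\,S_{2j}(a),
\end{equation*}
where every $j\in\mathcal{P}'(k)$ has length $\ell(j)\ge 2$, so each part satisfies $1\le j_i<k$. For $k=1$ the sum is empty and $p_1(B)\not\equiv 0\pmod p$ forces $S_2(a)\equiv 0$. Assuming $S_{2k'}(a)\equiv 0\pmod p$ for all $k'<k$, each factor $S_{2j_i}(a)$ in $S_{2j}(a)=\prod_i S_{2j_i}(a)$ vanishes mod $p$, so the whole sum over $\mathcal{P}'(k)$ vanishes; since $Q_B^k(a)\equiv 0$ and $p_k(B)\not\equiv 0\pmod p$, I conclude $S_{2k}(a)\equiv 0\pmod p$, closing the induction.

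Finally I would close the contradiction at $k=\tfrac{p-1}{2}$. Every $j\in\mathcal{P}'\!\left(\tfrac{p-1}{2}\right)$ again has all parts $j_i<\tfrac{p-1}{2}$, so by the induction just established $S_{2j}(a)\equiv 0\pmod p$; moreover $p_{(p-1)/2}(B)\equiv 0\pmod p$ by hypothesis. Hence the expansion gives $Q_B^{(p-1)/2}(a)\equiv 0\pmod p$, contradicting $Q_B^{(p-1)/2}(a)\equiv -m\not\equiv 0\pmod p$. Therefore no such $f$ exists, and by Theorem~\ref{ThHorakCriterion} there is no lattice tiling of $\Z^n$ by $B$. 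The only delicate points are the bookkeeping that forces every cross term $S_{2j}(a)$ to factor through power sums of strictly lower even order (so that the induction closes) and the correct reading of the power-sum congruence; note that for $p=3$ the range $1\le k<\tfrac{p-1}{2}$ is empty and the argument collapses exactly to the proof of Theorem~\ref{ThMain1}.
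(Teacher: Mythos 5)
Your proposal is correct and follows essentially the same route as the paper's proof: contradiction via Theorem~\ref{ThHorakCriterion} and Lemma~\ref{LemmaEasy}, evaluation of $Q_B^k$ at the vector of images of the standard basis, the power-sum congruence modulo $p$ (which the paper derives explicitly with a primitive root $\xi$, while you simply invoke the classical statement), and the same triangular induction forcing $S_{2k}(a)\equiv 0 \pmod p$ for $1\le k<\frac{p-1}{2}$ before reaching the contradiction at $k=\frac{p-1}{2}$. Your closing remark that the case $p=3$ collapses to Theorem~\ref{ThMain1} is also accurate.
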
 

\begin{proof}
By contradiction, suppose that there is a lattice tiling of $\Z^n$ by $B$. Then, by Theorem \ref{ThHorakCriterion}, there is an abelian group $G$ and an homomorphism $\phi: \Z^n \rightarrow G$ such that the restriction $\phi|_B : B \rightarrow G$ is a bijection. By Lemma \ref{LemmaEasy}, there is a surjective homomorphism $\phi': G \rightarrow \Z_p$. Then, the restriction of the homomorphism $\psi:= \phi' \circ \phi : \Z^n \rightarrow \Z_p$ to $B$ is an $m$-to-$1$ map. Let $\xi$ be a primitive root modulo $p$. We have the following congruences modulo $p$:
\begin{align}\label{EquationAbove}
\sum_{b\in B}\psi(b)^{2k} &\equiv m\cdot \left( \sum_{i=1}^{p-1}i^{2k}\right) \equiv m \cdot \left( \sum_{j=0}^{p-2}(\xi^j)^{2k}\right)  \nonumber \\ &\equiv  m \cdot \left( \sum_{j=0}^{p-2}(\xi^{2k})^{j}\right) \equiv \left\{\begin{array}{ll}
\frac{1-\xi^{2k(p-1)}}{1-\xi^{2k}}\equiv 0  & \textrm{if }0<2k<p-1;\\
m(p-1)\equiv -m  & \textrm{if } 2k=p-1.
\end{array}   \right.
\end{align}
Let $\{e_1,\ldots,e_n\}$ be the standard basis of $\R^n$. We consider the $n$-tuple $x=(x_1,\ldots,x_n)\in \Z^n$ such that $x_i \equiv \psi(e_i) \pmod{p}$ and $0\leq x_i <p$ for $1\leq i \leq n$. We have that $\QB(x)=\sum_{b\in B} \left(\sum_{i=1}^n b_i x_i\right)^{2k} = \sum_{b\in B}\psi^{2k}\left( \sum_{i=1}^n b_ie_i \right) = \sum_{b\in B}\psi^{2k}(b)$. Thus, by Proposition \ref{PropFormulaFinalForQB} and Equation (\ref{EquationAbove}) we have that
\begin{equation}\label{EqA}
p_k(B)\cdot S_{2k}(x) +  \sum_{j \in \mathcal{P}'(k)} c_{j}(B,k) \cdot S_{2j}(x) \equiv 0 \pmod{p} 
\end{equation}
for $1\leq k < \frac{p-1}{2}$ and
\begin{equation}\label{EqB}
p_{\frac{p-1}{2}}(B)\cdot S_{p-1}(x) +  \sum_{j \in \mathcal{P}'\left(\frac{p-1}{2}\right)} c_{j}\left(B,\frac{p-1}{2}\right) \cdot S_{2j}(x) \not\equiv 0 \pmod{p}.
\end{equation}
Using Equations (\ref{EqHypothesis}) and (\ref{EqA}), and the fact that if $j=(j_1,\ldots,j_{\ell}) \in \mathcal{P}'(k)$ then $S_{2j_1}(x)\mid S_{2j}$ with $j_1<k$ (because $\ell>1$), it is easy to prove by induction that $S_{2k}\equiv 0 \pmod{p}$ for $1\leq k < \frac{p-1}{2}$. This implies that $S_{2j}(x)\equiv 0 \pmod{p}$ for every $j\in\mathcal{P}'\left(\frac{p-1}{2}\right)$ and Equation (\ref{EqB}) becomes
$$ p_{\frac{p-1}{2}}(B)\cdot S_{p-1}(x) \not\equiv 0 \pmod{p},  $$
which is a contradiction because by hypothesis we have that $p_{\frac{p-1}{2}}(B)\equiv 0 \pmod{p}$.
\end{proof}

We are mainly interested in Lee codes, however Theorem \ref{TheoremMainGeneralized} can be applied also to codes with respect to the $\ell_p$ metric ($1\leq p \leq \infty$) since the balls for these metrics are regular sets. In the following example we prove the non-existence of certain $\ell_2$-codes.

\begin{example}\label{Example-Euclidean}
There are no linear perfect $8$-error-correcting codes in $\Z^3$ with respect to the Euclidean ($\ell_2$) metric. Indeed, the  ball $B= \{b\in \Z^3: \sqrt{b_1^2+b_2^2+b_3^2}\leq 8 \}$ has cardinality $|B|=2109$ and we can apply Theorem \ref{TheoremMainGeneralized} with $p=3$ (and $m=703$). Equation (\ref{EqHypothesis}) is equivalent to $p_1(B)\equiv 0 \pmod{3}$. By direct calculation we have $p_1(B)=\sum_{b\in B}b_1^2 = 26688\equiv 0 \pmod{3}$.
\end{example}

Let $p=2q+1$ be a prime (with $q\in \Z^+$). We say that a pair of positive integers $(n,e)$ satisfies the \emph{$p$-condition of non-existence} if it verifies the following system 
\begin{equation}
\left\{ \begin{array}{l}
k(n,e) \equiv tp \pmod{p^2} \textrm{ for some }t: 1 \leq t<p; \\
p_{i}(n,e) \not\equiv 0 \pmod{p} \textrm{ for every }i: 1\leq i < \frac{p-1}{2}; \\
p_{\frac{p-1}{2}}(n,e) \equiv 0 \pmod{p}.
\end{array}    \right.
\end{equation}

The following proposition is a direct corollary of Theorem \ref{TheoremMainGeneralized}.

\begin{proposition}\label{Prop-Specialization-for-Lee}
If $(n,e)$ verifies the $p$-condition of non-existence for some prime $p>2$, then $\lp{e}=\emptyset$.
\end{proposition}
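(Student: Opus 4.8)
The plan is to deduce this proposition as an immediate specialization of Theorem \ref{TheoremMainGeneralized} to the Lee ball $B=B^n(e)$. First I would recall that $B^n(e)$ is an $e$-regular set (hence regular), as observed right after Definition \ref{Def-e-regular}, so Theorem \ref{TheoremMainGeneralized} is applicable to it. Second, I would invoke the equivalence between codes and tilings: by the corollary to Theorem \ref{ThHorakCriterion}, we have $\lp{e}\neq\emptyset$ if and only if there is a lattice tiling of $\Z^n$ by $B^n(e)$. Thus it suffices to show that the $p$-condition of non-existence rules out any such tiling, and for this I only need to check that the three displayed congruences translate into the hypotheses of Theorem \ref{TheoremMainGeneralized}.

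Next I would verify the cardinality hypothesis $|B|=pm$ with $p\nmid m$. Since $|B^n(e)|=k(n,e)$, the first line of the $p$-condition, namely $k(n,e)\equiv tp \pmod{p^2}$ with $1\leq t<p$, lets me write $k(n,e)=tp+p^2 s=p(t+ps)$ for some integer $s\geq 0$. Setting $m=t+ps$ gives $|B|=pm$, and from $m\equiv t \pmod{p}$ together with $1\leq t<p$ I conclude $p\nmid m$. I would then match the remaining two lines. With $B=B^n(e)$ the main coefficient $p_k(B)$ of the $Q$-polynomial $\QQ(x)$ is by definition $p_k(n,e)$, so the conditions $p_i(n,e)\not\equiv 0 \pmod{p}$ for $1\leq i<\tfrac{p-1}{2}$ and $p_{\frac{p-1}{2}}(n,e)\equiv 0 \pmod{p}$ are exactly the congruences (\ref{EqHypothesis}) demanded by Theorem \ref{TheoremMainGeneralized}. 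Hence all hypotheses hold, the theorem yields the non-existence of a lattice tiling of $\Z^n$ by $B^n(e)$, and therefore $\lp{e}=\emptyset$.

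The one point that is not purely mechanical—and which I would flag as the main (though minor) obstacle—is recognizing that the single congruence $k(n,e)\equiv tp \pmod{p^2}$ with $1\leq t<p$ simultaneously encodes both $p\mid |B|$ and $p^2\nmid |B|$, i.e. precisely the factorization $|B|=pm$ with $p\nmid m$ required by the theorem. Once this is observed, the rest is a transparent substitution $B=B^n(e)$ into Theorem \ref{TheoremMainGeneralized}, so the whole argument fits in a few lines.
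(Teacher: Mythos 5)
Your proof is correct and follows exactly the route the paper intends: the paper simply declares Proposition \ref{Prop-Specialization-for-Lee} a direct corollary of Theorem \ref{TheoremMainGeneralized}, and your write-up supplies precisely the routine verifications behind that claim (regularity of $B^n(e)$, the tiling--code equivalence via Theorem \ref{ThHorakCriterion}, the translation of $k(n,e)\equiv tp \pmod{p^2}$ into $|B|=pm$ with $p\nmid m$, and the identification $p_k(B^n(e))=p_k(n,e)$).
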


In Section \ref{SectionMain} we proved that if $e$ satisfies $1\leq \delta_3(e)<\infty$, there are infinitely many dimensions $n$ such that $\lp{e}=\emptyset$ (Theorem \ref{ThMain2}). This result was obtained using the Zhang-Ge condition (Equation \ref{EqZhangGesystem}) which is equivalent to the $3$-condition of non-existence. It is possible to extend this result to other values of $e$ considering the $p$-condition of non-existence for other primes $p\neq 3$. For example, using the $5$-condition of non-existence we can extend the above result to the case $e=2$, see \cite[Theorem 1]{Qureshi18}. For this case, since the formulas for $p_1(n,e)$ and $p_2(n,e)$ can be obtained directly from the Kim's formula (Equation (\ref{EquationCaseKim})), it is not necessary to use the full potential of Equation (\ref{EqForpkneExt}). To finish this section, we use the $5$-condition of non-existence together Equation (\ref{EqForpkneExt}) to prove that there are infinitely many dimensions $n$ such that $\lp{e}=\emptyset$ for $e=6$ and $7$.

\begin{proposition}\label{Prope6}
If $n\equiv 22,47,72,97$ or $122 \pmod{125}$, then $\lp{6}=\emptyset$.
\end{proposition}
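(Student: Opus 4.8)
The plan is to apply Proposition \ref{Prop-Specialization-for-Lee} with the prime $p=5$, which means verifying that the pairs $(n,6)$ satisfy the $5$-condition of non-existence whenever $n$ lies in one of the stated residue classes modulo $125$. Since $p=5=2q+1$ with $q=2$, the $5$-condition requires three things: that $k(n,6)\equiv 5t\pmod{25}$ for some $t\in\{1,2,3,4\}$ (equivalently $k(n,6)$ is divisible by $5$ but not by $25$), that $p_1(n,6)\not\equiv 0\pmod 5$, and that $p_2(n,6)\equiv 0\pmod 5$. Here $p_1$ and $p_2$ are the main coefficients of the $Q$-polynomials, for which I have the explicit formulas \eqref{EqForp1ne} and \eqref{EqForp2ne} at my disposal.

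First I would reduce each of the three quantities $k(n,6)$, $p_1(n,6)$, $p_2(n,6)$ to an explicit polynomial in $n$. Using $k(n,6)=\sum_{i=0}^{6}2^i\binom{n}{i}\binom{6}{i}$ gives a degree-$6$ polynomial in $n$ with rational coefficients; clearing denominators by multiplying through by $6!$ (as in the proof of Proposition \ref{PropDelta1p1}) converts the congruence $k(n,6)\equiv 5t\pmod{25}$ into an integer-polynomial congruence modulo a suitable power of $5$. Likewise, substituting $e=6$ into \eqref{EqForp1ne} and \eqref{EqForp2ne} and expanding $k(n-1,b)$ and $k(n-2,c)$ via \eqref{EqForkne1} yields polynomials in $n$ for $p_1(n,6)$ and $p_2(n,6)$. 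Each of these is a polynomial identity in $n$, so the value modulo $5$ (resp. modulo $25$) depends only on $n\bmod 5$ (resp. $n\bmod 25$), and in fact on $n\bmod 125$ once I account for the highest power of $5$ dividing the cleared denominators.

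Next I would verify the three congruences directly. Because every quantity is eventually periodic in $n$ modulo a power of $5$, it suffices to check the claims on a finite set of residues: I would confirm that for each of the five residues $n\equiv 22,47,72,97,122\pmod{125}$ the polynomial $k(n,6)$ is divisible by $5$ but not by $25$, that $p_1(n,6)\not\equiv 0\pmod 5$, and that $p_2(n,6)\equiv 0\pmod 5$. These are finitely many routine modular evaluations. Once all three congruences hold for a given residue class, Proposition \ref{Prop-Specialization-for-Lee} immediately gives $\lp{6}=\emptyset$ for every $n$ in that class.

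The main obstacle is purely computational rather than conceptual: the formula \eqref{EqForp2ne} for $p_2(n,6)$ involves the double convolution $\sum_{a+b+c=6}a^2b^2\,k(n-2,c)$, and expanding this into a clean polynomial in $n$ modulo $5$ requires care in tracking the contributions of $k(n-1,\cdot)$ and $k(n-2,\cdot)$ and in handling the factorials in the denominators correctly (so that the reduction modulo $5$ and modulo $25$ is legitimate). The five residue classes are presumably exactly those satisfying all three conditions, so I would also want to confirm that no other residue modulo $125$ is erroneously included or omitted; but establishing the stated implication only requires checking the listed residues, which is a finite and mechanical verification.
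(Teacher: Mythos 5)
Your proposal is correct and follows essentially the same route as the paper: apply the $5$-condition of non-existence (Proposition \ref{Prop-Specialization-for-Lee}), compute $k(n,6)$, $p_1(n,6)$, $p_2(n,6)$ as explicit polynomials in $n$ via Equations (\ref{EqForkne1}), (\ref{EqForp1ne}) and (\ref{EqForp2ne}), clear the denominators (tracking the factor of $5$ they contain, which is exactly why the paper's check runs modulo $125$ rather than $25$), and then verify the resulting congruences on finitely many residues modulo $125$. The paper does precisely this, with the mechanical verification relegated to SAGE computations in the supplementary appendix.
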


\begin{proof}
By direct calculation using Equations (\ref{EqForkne1}), (\ref{EqForp1ne}) and (\ref{EqForp2ne}) we obtain:\\
$k(n,6)=(4n^6+12n^5+70n^4+120n^3+196n^2+138n+45)/45$,\\
$p_1(n,6)=(8n^5+60n^4+280n^3+720n^2+1032n+630)/15$,\\
$p_2(n,6)= (8n^5+180n^4+1480n^3+6360n^2+14232n+13230)/15$.\\
Since $N\equiv 5,10,15\textrm{ or }20\pmod{25} \Leftrightarrow 45N\equiv 25,50,75\textrm{ or }100 \pmod{125}$ and $N\equiv 0 \pmod{5} \Leftrightarrow 15N\equiv 0 \pmod{25}$, the pair $(n,6)$ satisfies the $5$-condition of non-existence if and only if it verifies the following system of congruences:
$$ \left\{ \begin{array}{l}
4n^6+12n^5+70n^4+120n^3+196n^2+138n+45 \equiv 25,50,75\textrm{ or }100 \pmod{125}, \\
8n^5+60n^4+280n^3+720n^2+1032n+630 \not\equiv 0 \pmod{25},\\
8n^5+180n^4+1480n^3+6360n^2+14232n+13230  \equiv 0 \pmod{25}.
\end{array}  \right.  $$
Since every integer solution of this system is also a solution modulo $125$, it suffices to check the possible solutions with $0\leq n <125$. Then, the solutions are the positive integers $n$ such that $n\equiv 22,47,72,97\textrm{ or }122 \pmod{125}$.
\end{proof}

\begin{proposition}\label{Prope7}
If $n\equiv 13,23,38,48,63,73,88,98,113\textrm{ or }123 \pmod{125}$, then $\lp{7}=\emptyset$.
\end{proposition}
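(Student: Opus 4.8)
The plan is to repeat verbatim the strategy of Proposition \ref{Prope6}, applying the specialization Proposition \ref{Prop-Specialization-for-Lee} with the prime $p=5$. Here $p=2q+1$ with $q=2$, so $\frac{p-1}{2}=2$ and the $5$-condition of non-existence for the pair $(n,7)$ amounts to the three requirements $k(n,7)\equiv 5,10,15$ or $20 \pmod{25}$, $p_1(n,7)\not\equiv 0 \pmod{5}$, and $p_2(n,7)\equiv 0 \pmod{5}$. By Proposition \ref{Prop-Specialization-for-Lee}, verifying these three congruences for the listed residues of $n$ modulo $125$ yields $\lp{7}=\emptyset$.

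First I would produce explicit polynomial expressions in $n$ for the three relevant quantities. Using Equation (\ref{EqForkne1}) with $e=7$ gives $k(n,7)$ as a degree-$7$ polynomial with rational coefficients; Equation (\ref{EqForp1ne}) gives $p_1(n,7)$ and Equation (\ref{EqForp2ne}) gives $p_2(n,7)$, both of lower degree. As in the case $e=6$, each of these is an integer-valued function represented by a polynomial with a fixed denominator dividing $7!$ (one checks that $5$ divides this denominator to the first power only), and I would clear denominators to obtain integer-coefficient polynomials $A(n)=D_k\cdot k(n,7)$ and analogous integral multiples of $p_1(n,7)$ and $p_2(n,7)$.

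Next I would convert the three modular conditions into integer-polynomial congruences, tracking the denominators exactly as the factors $45$ and $15$ were handled in Proposition \ref{Prope6}. The key observation is that the part of each denominator prime to $5$ is invertible modulo the relevant power of $5$; hence ``$k(n,7)\equiv 5,10,15$ or $20\pmod{25}$'' is equivalent to a congruence $A(n)\equiv c \pmod{125}$ whose admissible residues $c$ are precisely the four nonzero multiples of $25$ in $[0,125)$, while $p_1(n,7)\not\equiv 0\pmod 5$ and $p_2(n,7)\equiv 0\pmod 5$ become congruences for the integral multiples of $p_1$ and $p_2$ modulo the appropriate power of $5$. Because these integer-polynomial congruences are periodic in $n$ with period dividing $125$, the whole $5$-condition depends only on $n \bmod 125$.

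Finally, since the condition is determined by $n \bmod 125$, it suffices to test the $125$ residues $0\le n<125$. A direct check then isolates exactly $n\equiv 13,23,38,48,63,73,88,98,113$ or $123\pmod{125}$ as the solutions, which completes the proof. I expect the only delicate point to be the denominator bookkeeping of the previous paragraph, namely matching the ``$\equiv tp\pmod{p^2}$'' shape of the $5$-condition to an honest integer congruence after multiplying by $D_k$; everything else is a finite and routine verification entirely parallel to the case $e=6$.
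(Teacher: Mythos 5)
Your proposal is correct and follows essentially the same route as the paper: apply the $5$-condition of non-existence via Proposition \ref{Prop-Specialization-for-Lee}, compute $k(n,7)$, $p_1(n,7)$, $p_2(n,7)$ explicitly from Equations (\ref{EqForkne1}), (\ref{EqForp1ne}), (\ref{EqForp2ne}), clear the denominators ($315$ and $45$, each exactly divisible by $5$) to turn the conditions into integer congruences modulo $125$ and $25$, and finish with a finite check over $0\le n<125$. Your denominator bookkeeping matches the paper's equivalences $315N\equiv 25,50,75,100\pmod{125}$ and $45N\equiv 0\pmod{25}$ exactly.
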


\begin{proof}
By direct calculation using Equations (\ref{EqForkne1}), (\ref{EqForp1ne}) and (\ref{EqForp2ne}) we obtain:\\
$k(n,7)= (8n^7+28n^6+224n^5+490n^4+1232n^3+1372n^2+1056n+315)/315   $,\\
$p_1(n,7)=(8n^6+72n^5+440n^4+1560n^3+3512n^2+4488n+2520)/45   $,\\
$p_2(n,7)= (8n^6+216n^5+2240n^4+13080n^3+44912n^2+85704n+70560)/45      $.\\
Since $N\equiv 5,10,15\textrm{ or }20\pmod{25} \Leftrightarrow 315N\equiv 25,50,75\textrm{ or }100 \pmod{125}$ and $N\equiv 0 \pmod{5} \Leftrightarrow 45N\equiv 0 \pmod{25}$, the pair $(n,7)$ satisfies the $5$-condition of non-existence if and only if it verifies the following system of congruences:
$$\left\{ \begin{array}{l}
8n^7+28n^6+224n^5+490n^4+1232n^3+1372n^2+1056n+315 \equiv 25,50,75,100 \!\!\!\!\pmod{125}\\
8n^6+72n^5+440n^4+1560n^3+3512n^2+4488n+2520\not\equiv 0\!\!\!\! \pmod{25} \\
8n^6+216n^5+2240n^4+13080n^3+44912n^2+85704n+70560 \equiv 0\!\!\!\! \pmod{25}
\end{array} \right.$$
As in the proof of Proposition \ref{Prope6}, we can restrict the possible values of $n$ to $0\leq n < 125$. Then, the solutions of the above system are the positive integers $n$ such that $n\equiv 13$, $23$, $38$, $48$, $63$, $73$, $88$, $98$, $113$ or $123 \pmod{125}$.
\end{proof}

\bibliographystyle{alpha}

\section*{Supplementary appendix}

In this appendix we present the intermediate computation used to prove some results of this paper (specifically in Propositions 5, 6, 7, 8, 16 and 17, Lemma 7 and Example 5). The software used to implement the algorithms is SAGE \cite{Sage}.

\subsection*{Intermediate computation in Section III}

\paragraph{Computation for Proposition 5}

First we calculate the integer coefficient polynomial $f(e)=12! \cdot k(12,e)=  12! \sum_{i=0}^{12} 2^{i} \binom{12}{i} \binom{e}{i}$ with the following code:

\includegraphics[width=1.0\textwidth]{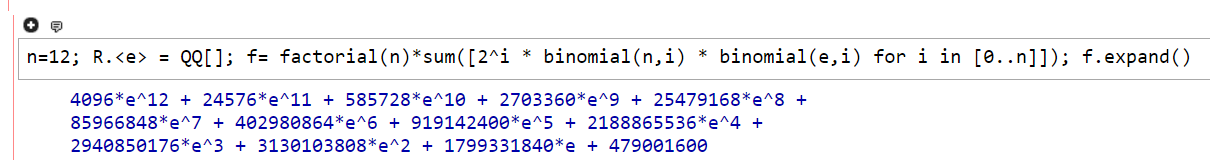}

Next, we check that for the $486$ elements of the form $e=a+3^2b$ with $a\in\{3,5\}$ and $0\leq b < 3^5$, we have $f(e)\equiv 3^{6}$ or $2\cdot 3^6 \pmod{3^7}$. We consider the list $L$ consisting of all the elements $e$ verifying this property and check that the length of $L$ is exactly $486$.

\includegraphics[width=1.0\textwidth]{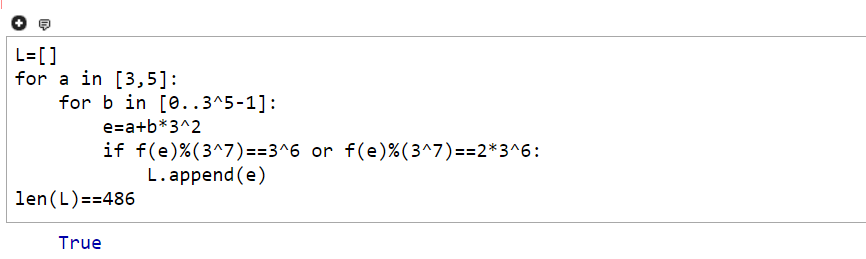}

Finally, we compute $p(12,3)$ and $p(12,5)$ and check that both values are multiple of $3$.

\includegraphics[width=1.0\textwidth]{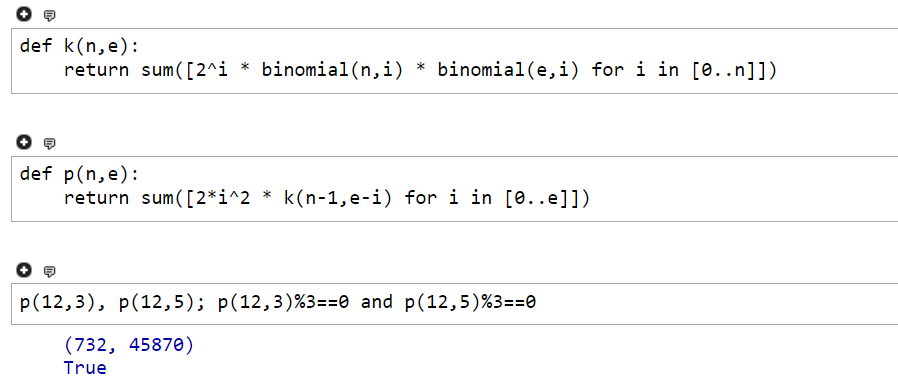}
\vspace{2mm}

\paragraph{Computation for Proposition 6}
This proposition follows similar ideas as in the proof of Proposition 5. Some intermediate calculus used in the proof of Proposition 6 is showed below (we use the same algorithm for $k(n,e)$ and $p(n,e)$ as used for Proposition 5).

\includegraphics[width=1.0\textwidth]{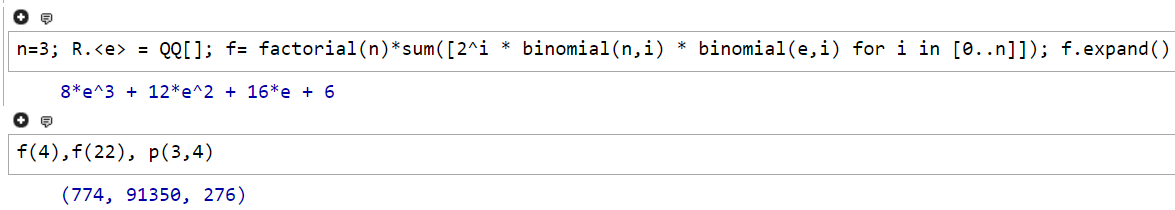}
\vspace{2mm}

\paragraph{Computation for Proposition 7}
Here we use the same code to compute $f(e)=12! \cdot k(12,e)=  12! \sum_{i=0}^{12} 2^{i} \binom{12}{i} \binom{e}{i}$ as in Proposition 5. We define $A=\{a': 9 \leq a' <18 \textrm{ or } 63\leq a'<72\}$ and have to check that $f(a'+3^4 b')\equiv 3^6$ or $2\cdot 3^6 \pmod{3^7}$ for every $a'\in A$ and $0\leq b'<3^3$. We check that it is true for each value of $a'+3^4 b'$ (there are $18\cdot 3^3= 486$ possibilities). 

\includegraphics[width=1.0\textwidth]{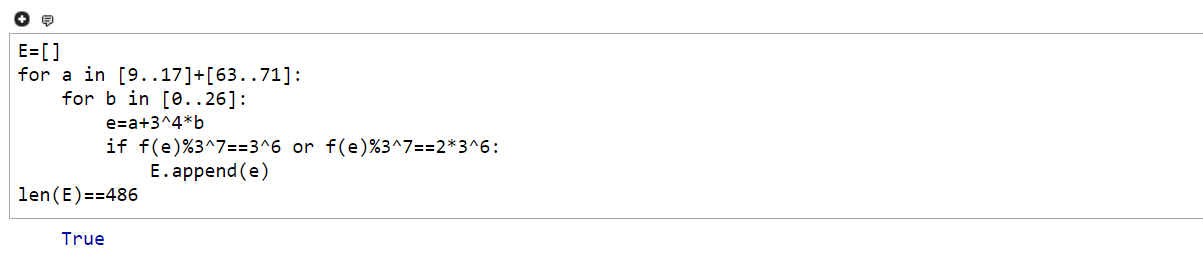}

We use the same code for $k(n,e)$ and $p(n,e)$ as in Proposition 5 and check that $p(12,a)\equiv 0 \pmod{3}$ for $9\leq a < 18$.

\includegraphics[width=1.0\textwidth]{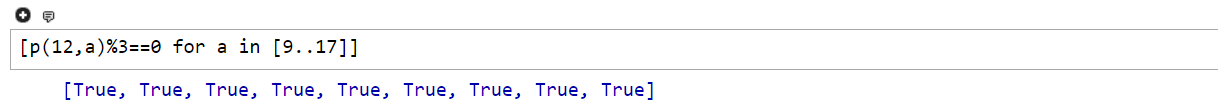}
\vspace{2mm}

\paragraph{Computation for Lemma 7}
We define the Davis-Webb symbols and $\eta(i_{m+1},i_m, i_{m-1}) = \simb{1,1}{i_{m+1},i_m}\cdot \simb{1}{i_m}^{-1}\simb{1,0}{i_m, i_{m-1}}$. Then, we compute $\eta(i_{m+1},i_m, i_{m-1})$ for $(i_{m+1},i_m, i_{m-1})=(0,0,0)$, $(0,0,1)$, $(0,0,2)$, $(0,1,0)$, $(0,1,1)$, $(0,1,2)$, $(0,2,0)$, $(0,2,1)$, $(0,2,2)$, $(1,0,0)$, $(1,0,1)$, $(1,0,2)$, $(1,1,0)$. We rename the variables in the SAGE code: $a=i_{m+1}, b=i_m$ and $c=i_{m-1}$.

\includegraphics[width=1.0\textwidth]{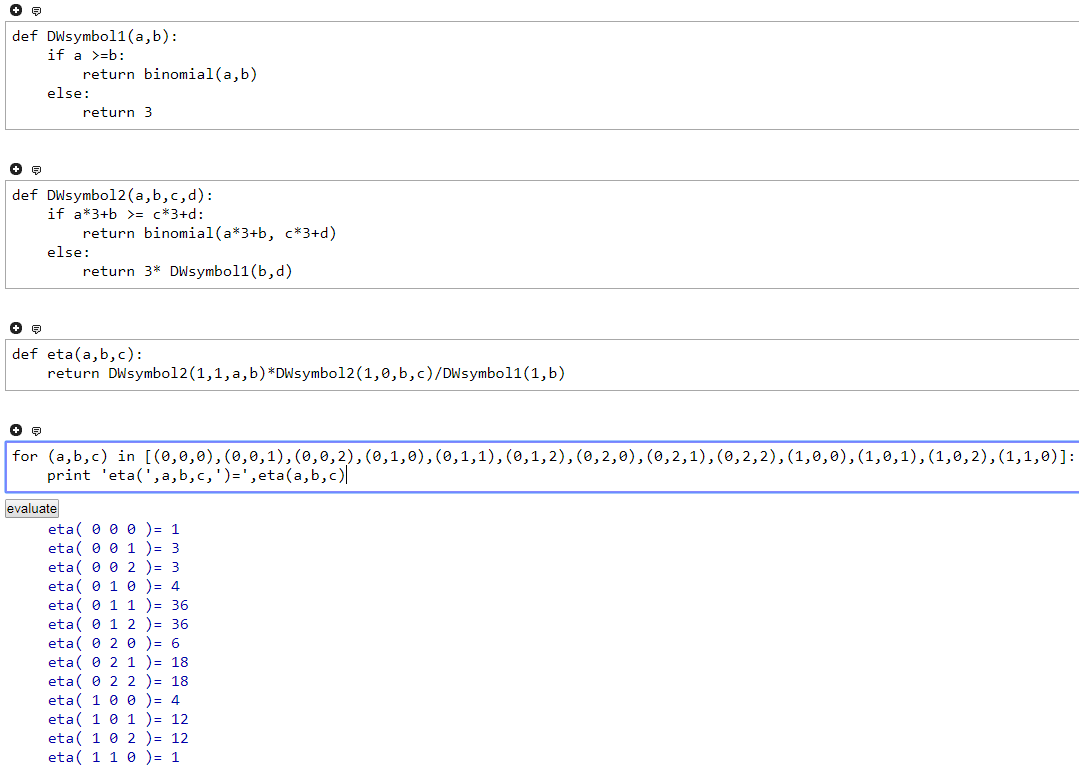}

\paragraph{Computation for Proposition 8}
We have $n=3^{m+1}+3^{m}$, $m\geq 2$ and $e=\sum_{i=0}^{h-1}e_i  3^i$ with  $h\geq m+4$ and $e_i\in \{0,2\}$ for $m+1<i\leq h-1$, $e_{m+1}=1$ and $e_i \in \{0,1,2\}$ for $0\leq i <m+1$. We define $\tilde{e} = e_{m+2}\cdot 3^4 + 3^3 + e_m\cdot 3^2 + e_{m-1}\cdot 3 + e_{m-2}$ and proved that $k(n,e)\equiv F(\tilde{e})\pmod{9}$ where $F(\tilde{e})=1 - 3 \binom{\tilde{e}}{3} + 3 \binom{\tilde{e}}{6} -4 \binom{\tilde{e}}{9} + 6 \binom{\tilde{e}}{18} - 4 \binom{\tilde{e}}{27} + 3 \binom{\tilde{e}}{30} - 3 \binom{\tilde{e}}{33} + \binom{\tilde{e}}{36}$. At the end of the proof of Proposition 8 we check that $F(\tilde{e}) \equiv \left\{  \begin{array}{ll}
3\!\!\! \pmod{9} & \textrm{if $27\leq \tilde{e}\leq 35$ or $207\leq \tilde{e}\leq 215$};\\
6\!\!\! \pmod{9} & \textrm{if $36\leq \tilde{e}\leq 53$ or $189\leq \tilde{e}\leq 206$}. \end{array}   \right.$
For this purpose we use the following SAGE code:

\includegraphics[width=1.0\textwidth]{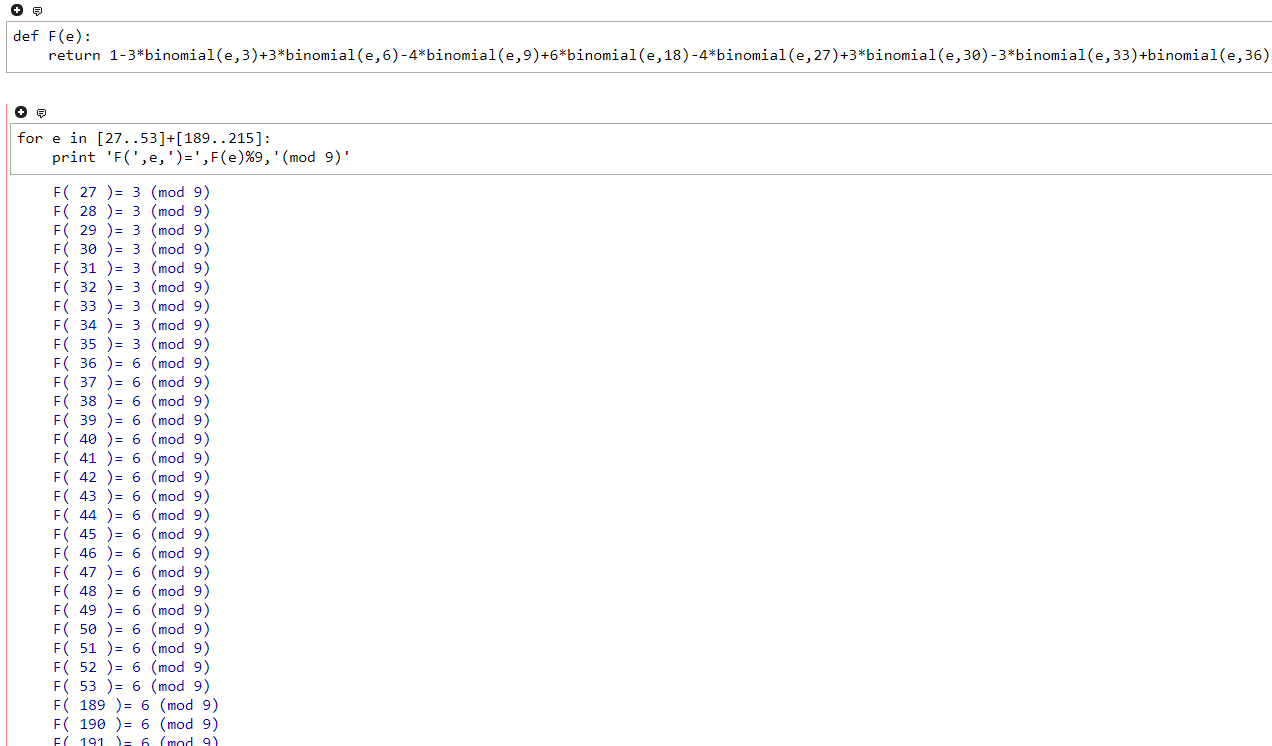}
\vspace{2mm}

\subsection*{Intermediate computation in Section IV}

\paragraph{Computation for Example 5}
We consider the Euclidean ball $B= \{b\in \Z^3: \sqrt{b_1^2+b_2^2+b_3^2}\leq 8 \}$ and calculate its cardinality and $p_1(B)=\sum_{b\in B}b_1^2$ by the following routine:

\includegraphics[width=1.0\textwidth]{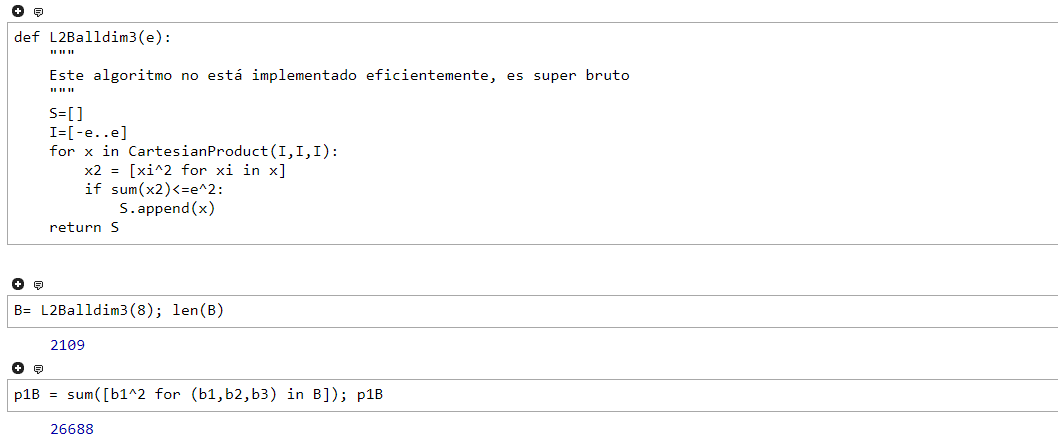}
\vspace{2mm}

\paragraph{Computation for Proposition 16}
First we calculate $k(n,6), p_1(n,6)$ and $p_2(n,6)$ using the following routine:

\includegraphics[width=1.0\textwidth]{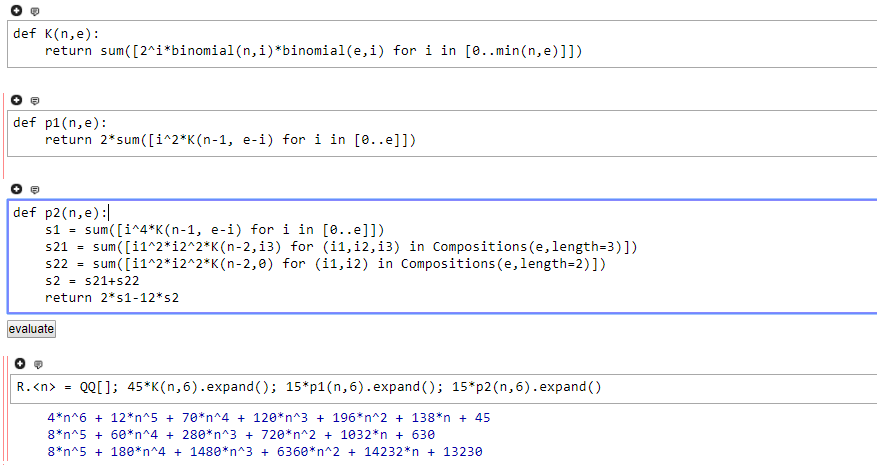}

Then we check the values of $n: 0\leq n < 125$ verifying the system $$\left\{ \begin{array}{l}
45k(n,6) \equiv 25,50,75\textrm{ or }100 \pmod{125};\\
15p_1(n,6) \not\equiv 0 \pmod{25};\\
15p_2(n,6) \equiv 0 \pmod{25}.
\end{array}  \right.  $$

\includegraphics[width=1.0\textwidth]{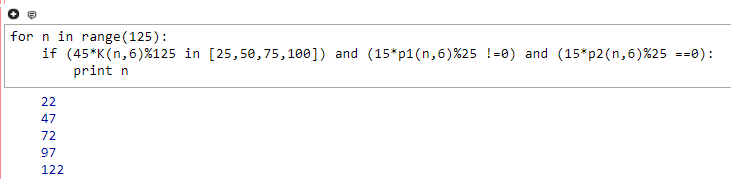}
\vspace{2mm}

\paragraph{Computation for Proposition 17}
First we calculate $k(n,7), p_1(n,7)$ and $p_2(n,7)$.

\includegraphics[width=1.0\textwidth]{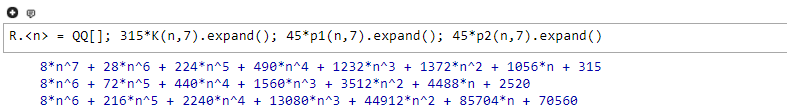}

Then we check the values of $n: 0\leq n < 125$ verifying the system $$\left\{ \begin{array}{l}
315k(n,7) \equiv 25,50,75\textrm{ or }100 \pmod{125};\\
45p_1(n,6) \not\equiv 0 \pmod{25};\\
45p_2(n,6) \equiv 0 \pmod{25}.
\end{array}  \right.  $$

\includegraphics[width=1.0\textwidth]{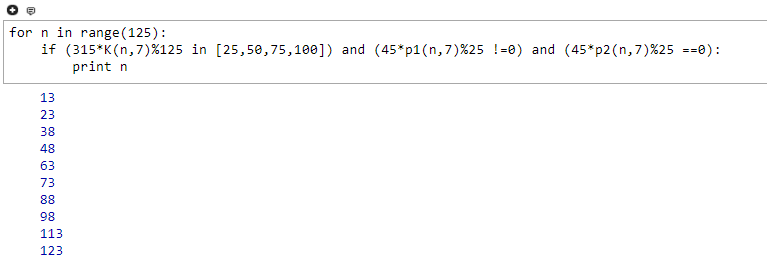}
\vspace{2mm}

\end{document}